\newtheorem{pro}{Proposition}[section]
\newtheorem{dfn}{Definition}[section]
\newtheorem{thm}{Theorem}[section]
\newtheorem{as}{Assumption}
\newtheorem{rem}{Remark}[section]
\newtheorem{pr}{Problem}
\def\t{\theta}
\begin{document}

\title{\LARGE {\bf Duesenberry's Theory of Consumption:  Habit, Learning, and Ratcheting\footnote{We thank participants of the JAFEE-Columbia-NUS Conference, Tokyo, Japan, the Asian Finance Quantitative Finance Conference, Guangzhou, China and the Quantitative Methods in Finance Conference 2018 in Sydney for helpful and encouraging comments. We thank Philip Dybvig for his inspiring work and discussions which provided motivation for this work.}
 } }

\author{
    Kyoung Jin Choi\footnote{E-mail: {\tt kjchoi@ucalgary.ca}, Haskayne School of Business, University of Calgary, Canada. }
    \and
    Junkee Jeon\footnote{E-mail: {\tt junkeejeon@khu.ac.kr}, Department of Applied Mathematics, Kyung Hee University, Korea. }
    \and
    Hyeng Keun Koo\footnote{E-mail: {\tt hkoo@ajou.ac.kr}, Department of Financial Engineering, Ajou University, Korea.    }
}

\date{\today}

\maketitle \pagestyle{plain} \pagenumbering{arabic}

\abstract{This paper investigates the consumption and risk taking decision  of an economic agent with partial irreversibility of consumption decision by formalizing the theory proposed by \cite{Duesenberry}. The optimal policies exhibit a type of the (s, S) policy: there are two wealth thresholds within which consumption stays constant. Consumption increases or decreases at the thresholds and after the adjustment new thresholds are set. The share of  risky investment in the agent's total investment is inversely U-shaped within the (s, S) band, which generates time-varying risk aversion that can fluctuate widely over time. This property can explain  puzzles and questions on asset pricing and  households' portfolio choices, e.g., why aggregate consumption is so smooth whereas the high equity premium is high and the equity return has high volatility,  why the risky share is so low whereas the estimated risk aversion by the micro-level data is small, and whether and when an increase in wealth has an impact on the risky share. Also, the partial irreversibility model can explain both the excess sensitivity and the excess smoothness of consumption.
}

\begin{flushleft}
{\bf JEL Classification Codes}: D11,\;E21,\;G11

\medskip

{\bf Keywords}: Duesenberry, consumption, portfolio choice, adjustment costs, time-varying risk aversion, habit formation, permanent income hypothesis, excess sensitivity, excess smoothness

\end{flushleft}


\newpage
\section{Introduction}

\begin{quote}
This critique is based on a demonstration that two fundamental assumptions of aggregate demand theory are invalid. These assumptions are (1) that every individual's consumption behaviour is independent of that of every other individual, and (2) that consumption relations are reversible in time \citep*{Duesenberry}.
\end{quote}

This paper analyzes a model of consumption and portfolio choice decision. Building on the second part of the quote by \cite{Duesenberry} as a critique of the Keynesian consumption function, we aim to model the \textit{irreversibility} of consumption decision. However, we are also aware that each consumption decision is not fully irreversible, but \textit{partially irreversible}. Thus, we model the partial irreversibility by introducing  proportional costs for each consumption adjustment. The partial irreversibility makes consumption a non-smooth function of wealth and  change infrequently  over time, demonstrating the excess sensitivity and the excess smoothness for moderate shocks. The risky share of the household is U-shaped, which can reconcile several conflicting views in the literature, regarding time-varying risk aversion and the impact of a wealth change on  risky investments.

In our model, conditional on consumption never changing, the household has the von-Neuman Morgenstern preference such as the constant relative risk aversion (CRRA) preference. However, there is a utility cost proportional to the current level of marginal utility whenever the household increases or decreases the consumption level. The former is a learning cost in increasing the consumption level\footnote{For example, consider a household who lives in Chicago and goes for a vacation within the U.S. once per year. Suppose the family has a permanent increase in income and decides to visit Europe for a regular vacation from this year. Our assumption implies that the household incurs search or learning costs when preparing to visit a new place. We do not consider a monetary cost of search that should follow every consumption decision.  Rather, our model assumes that there is one time utility cost of searching or learning the new consumption pattern when they increase the consumption level.} and the latter represents consumption ratcheting. Note that the existence of the utility cost from decreasing consumption is important in our model since the model can generate all the properties derived in our model only with the cost of downward adjustment.\footnote{In addition, the cost of upward adjustment is usually very small in our calibration exercise. We can set it to be zero and generate the same resultf by slightly changing values of other variables.}

The optimal polices exhibit a (s, S) type of policy as follows. Suppose the current consumption level is $c$. Then, there are two wealth thresholds $c\underline{x}$ and $c\bar{x}$, where $\underline{x}$ and $\bar{x}$ are  constants, i.e., the current (s, S) band is interval $(c\underline{x}, c\bar{x})$. In this case, consumption increases and decreases if and only if the wealth level hits $c\underline{x}$ and $c\bar{x}$, respectively. Otherwise, consumption stays constant inside the band. Once a boundary  is reached, the new consumption level, $c^{\mbox{new}}$ is set and the next new (s, S) band is updated as $(c^{\mbox{new}} \underline{x}, c^{\mbox{new}}\bar{x})$.   The adjustment is made whenever there is a change in consumption. We describe how each new consumption level is determined in more detail in the main body of the paper.

The optimal risky share is U-shaped in wealth for each (s, S) band. The risky asset holdings consist of two components: the myopic component and  the hedging component. Noticing the minus sign in this decomposition, the hedging demand takes zero at the boundaries of each (s, S) interval and the maximum value inside the interval. The reason is to avoid  a high utility cost that would be incurred if the household frequently adjusted its consumption. Therefore, the risky share is U-shaped with the maximum at the one of the boundaries, the minimum somewhere inside the interval. On the other hand, naturally there is heterogeneity among households in utility costs as well as risk aversion.  Naturally arises the question of how to infer risk aversion of a certain household. We define RCRRA (revealed coefficient of relative risk aversion) by risk aversion inferred by the outsider observing the risky asset holdings of a household over time. Since the risky share is time-varying,  the RCRRA is time-varying. Moreover, the  RCRRA is inverse U-shaped inside the (s, S) boundary since the risky share is U-shaped.

Having the above properties in mind, our model can explain a number of interesting implications for the household  consumption and risky investment decisions. First, our model can fill a gap between different views on  risk aversion in  the literature on  decision theory, structural estimation, behavioral, and asset pricing. Note that most asset pricing models use the relative risk aversion coefficient of around 10 or higher for  calibration exercises to match asset pricing moments (e.g., \cite{BKY}, \cite{CGM}, and references therein\footnote{There are many other asset pricing models  to suggest even much higher value than 10. For example, \cite{KS1991} argues $\gamma = 29$.}). Along the similar lines, households usually hold $ 6 - 20\%$ in equity (conditional on participation, up to 40 \%), which implies relative  risk aversion is at least 10 or larger following calibration according to the standard models  with widely accepted market parameter values. On the contrary, the estimated individual relative risk aversion takes values between 0.7 and 2 in decision theory, structural estimation, or behavioral literature such as \cite{BT2012}, \cite{Campo}, \cite{C2006}, \cite{GH2013, GH2015}, \cite{HS},  \cite{LMN2008}, and \cite{Szpiro}, among which more recent ones tend to claim that risk aversion is less than 1 or around 1.

We show that time-varying risk aversion, i.e., RCRRA in our model is the same as actual risk aversion only when the wealth process hits either one of (s, S) boundaries. The set of these events, however, has measure zero for any stock price sample path. The RCRRA is greater than actual risk aversion for most of times. We note that the maximum value of RCRRA increases with the consumption adjustment costs (RCRRA is always equal to risk aversion if there is no such cost). We show that the RCRRA takes fairly high values closer to its maximum than actual risk aversion during the times when the wealth process stays in the middle range of the (s, S) band and does not have high fluctuation: These are times when the market is neither bullish nor bearish, rather has small volatility for a while. This feature is in sharp contrast with that from traditional habit models since risk aversion in the habit models becomes higher only in downturns. For example, \cite{ORW} show that the size of equity premium in the traditional habit model is determined by a relatively insignificant amount of high-frequency volatility in US aggregate consumption.  Moderate shocks have occurred in the world financial markets over substantial time periods when the RCRRA takes high values in our model. Therefore, our model can generate the average value of RCRRA consistent with that often used in the asset pricing literature, while the coefficient of relative risk aversion is close to that suggested by the behavioral or experimental literature.

Second, but more important is that the U-shaped risky share resulted from our model has an interesting implication for the effect of the change of wealth on the risky share on which the empirical literature looks inclusive. For example, \cite{CCS2009} and \cite{CS2014} favor habit, commitment, or DARA (decreasing relative risk aversion) models predicting that the impact is positive, i.e., the risky share increases with an increase in financial wealth. \cite{BN2008} and \cite{CP2011}, however, show no relationship (or slightly positive relationship, if any) between the financial (liquid) wealth change and the risky share. We argue that both can happen depending on which time-series is chosen. The U-shape implies that the risky share is decreasing in wealth in the left side of (s, S) band and increasing in wealth in the right side of (s, S) band (see Figure  \ref{RCRRA_region}). The optimal wealth process tends to stay longer in the increasing (decreasing) regions of each (s, S) interval over time if the market is such that good shocks are more frequent than bad shocks (see Figure  \ref{wealth-consumption-Ss} and its description). Thus, if good shocks occurred more frequently in the stock market during the data period, the relationship tends to be positive. By using this intuition, we simulate four types of sample paths: (a) bullish, (b) intermediate, (c) bearish, and (d) highly fluctuating. Then, we regress the change of risky share on the change of financial wealth within each sample. For cases (a) and (b), we find a significant positive impact of the wealth increase on the risky share. Moreover, we find no or slightly negative relationship for cases (c) and (d). Thus, our model can reconcile the discrepancy in the empirical literature.

Third, the optimal consumption process in our model features two well-documented empirical regularities: the excess smoothness \citep*{Deaton} and the excess sensitivity \citep*{Flavin}. First, consumption tends not to respond to a permanent shock as long as the shock does not push up or down the wealth level to one of the  (s, S) boundaries, which implies the excess smoothness. Second, suppose there is a good shock in the permanent income. While this shock may not increase the current consumption level, it increases the probability that consumption will increase in the future, by which the excess sensitivity of consumption appears in our model. By the same reason, however, both the excess smoothness and the excess sensitivity vanish for a large shock that immediately pushes the wealth process up or down to either of the (s, S) boundaries. In other words, the partial irreversibility of consumption in our model implies that consumption responds to a large shock consistent with the permanent income hypothesis (\cite{JP}).

Fourth, we explore asset pricing implications and find that our partial reversibility model has a potential to well the U.S. data well. We follow \cite{Const} and \cite{MP}, simulate optimal consumption of individuals, and obtain a simulated series of monthly aggregate consumption. We compute several moments such as the consumption growth rate, the standard deviation of  marginal rate of intertemporal substitution, and the autocorrelation of consumption growth rate. We find that our model matches better the US data than the traditional habit models.

Finally, our paper also makes a theoretical contribution to the literature of dynamic optimization. We first transform the dynamic consumption and portfolio selection problem into a static one, and then derive the dual Lagrangian problem. The advantage of solving the dual Lagrangian problem is that we do not need to consider the portfolio choice and we are left to analyze two sided singular control problem (of consumption). After each adjustment of consumption, the problem is to decide whether to increase or to decrease consumption and if so, how much to change. Then, the original value function is obtained from the convex duality relation. We characterize the full analytic solution. The optimal portfolio can be derived from the dual value function by the convex duality and It\'o's lemma. The detail for each step is presented in the appendix.

\subsection{Literature Review}

Our paper is motivated from \cite{Duesenberry}. On one hand, the first part of his critique quoted in the beginning of our introduction is closely related to external habit formation and thus has significantly contributed to developing the modern habit models such as  \cite{Abel}, \cite{Const}, and \cite{CC1999}.  Some of our results resemble those from habit model. For example, our model generates time-varying risk aversion. However, ours are fundamentally different from habit models in that the agent in habit models becomes more conservative when the current consumption level gets closer to the habit stock (e.g., in downturns) while RCRRA in our model becomes higher in times when the stock market is neither bullish nor bearis, but rather flat and has low volatility. On the other hand, there is a handful of previous literature that grew out from the second part of the critique such as \cite{Dyb} and \cite{JKS}.\footnote{Note that the second part of critique is well described as follows:
\begin{quote}
At any moment a consumer already has a well-established set of consumption habits... Suppose a man suffers a 50 per cent reduction in his income and expects this reduction to be permanent. Immediately after
the change he will tend to act in the same way as before... In retrospect
he will regret some of his expenditures. In the ensuing periods the same
stimuli as before will arise, but eventually he will learn to reject some
expenditures and respond by buying cheap substitutes for the goods
formerly purchased (Dusenberry 1949, p. 24).
\end{quote}}
However, \cite{Dyb} and \cite{JKS} assume that consumption decision is fully irreversible (other than allowing a predetermined depreciation). Therefore, they are an extreme special case of our model.\footnote{More precisely, \cite{Dyb} and \cite{JKS} assume that consumption is not allowed to decrease. The model converges to those of \cite{Dyb} and \cite{JKS} if the cost of decreasing consumption goes to infinite and the cost of increasing consumption is equal to zero in our model.} We better formulate the ratchet effect with partial irreversibility by allowing costly downward adjustment, which generates the U-shape risky share and its novel implications that \cite{Dyb} and \cite{JKS} do not.

Our model is also related to dynamic consumption and investment models with durable consumption or consumption commitment such as \cite{GL}, \citet{HH92,HH93}, \citet{HHZ97}, \cite{FN}, and \cite{CS2007, CS2016}. We view our model as complement to the consumption commitment literature. For example, \cite{CS2016} show in the commitment model that the excess smoothness and excess sensitivity arise for moderate shocks and they vanish for large shocks. The same result  holds from a different channel in our model, namely through learning or adjustment costs. In our model there exist two different adjustment costs, downward and upward adjustment costs. Thus magnitude of large shocks which make immediate adjustment and individuals aggressive in risk taking is in general different for good shock and for bad shocks. This paves a way for empirical test whether the effects of large shock on consumption and risk taking are different for these two different shocks. Furthermore, our model generalizes the models of 
consumption and portfolio selection with durability and local substitution by \citet{HH92} and \citet{HHZ97} via the isomorphism discovered by \citet{SS}.

The rest of the paper continues as follows. Section  \ref{section:Model} describes the model. \ref{sec:sol-tech} present the analysis for the explicit solution. The implications of risky investment  and consumption are provided in Sections \ref{sec:imp-investment} and \ref{sec:imp-consumption}, respectively. Section \ref{sec:conclusion} concludes.

%

\section{Model \label{section:Model}}
We consider a simple and standard continuous-time financial market.
The financial market consists of two assets: a riskless asset and a
risky asset. We assume that the risk-free rate, the rate of return
on the riskless asset, is constant and equal to $r$. The price $S_t$
of the risky asset or the market Index evolves as follows:
$$
dS_t/S_t=\mu dt+\sigma dB_t,
$$
where $\mu,\sigma$ are constants, $\mu>r$, and $B_t$ is a Brownian
motion on a standard probability space
$(\Omega,\mathcal{F},\mathbb{P})$\footnote{See \citet{KS} for
details of mathematics and the probability theory.} endowed with an
augmented filtration $\{\mathcal{F}_t\}_{t\geq 0}$ generated by the
Brownian motion $B_t$.

The agent's wealth process $(X_t)_{t=0}^{\infty}$ evolves according
to the following dynamics:
\begin{eqnarray}\begin{split}\label{eq:wealth}
dX_t = [rX_t +\pi_t (\mu-r) - c_t] dt +\sigma\pi_t dB_t, \ \ \
X_0=X>0,
\end{split}\end{eqnarray}
where   $c_{t}\geq 0$ and $\pi_t$ are the consumption rate and the
dollar amount invested in the risky asset, respectively, at time
$t$.

For non-negative constants $\alpha$ and $\beta$, the agent's utility function is given by
\begin{eqnarray}\label{eq:utility}
U \equiv \mathbb{E}\left[\int_{0}^{\infty}e^{-\delta t}\left(u(c_t)dt-\alpha d(u(c_t))^{+}  -\beta d(u(c_t))^{-}\right)\right],
\end{eqnarray}
where $\delta>0$ is the subjective discount rate and $u(\cdot)$ is a
twice-continuously differentiable, strictly concave, and strictly
increasing function. We decompose
$$u(c_t) = u(c_0) + u(c_t)^+ -
u(c_t)^-,$$ where $u(c_t)^+$ and $u(c_t)^-$ are non-decreasing
processes. This decomposition is well-defined if the consumption
process $c_t$ is admissible (See Definition \ref{def-admissible} and
equation \eqref{con-decomposition}). \eqref{eq:utility} implies that
the agent instantaneously loses the $\alpha$ or $\beta$ unit of
utility whenever there is one unit increase or decrease in the
marginal utility. The example of the first kind is the utility cost
of learning how to spend  and that of the second is the point made
by models with ``catching up with Joneses" such as \citet{Abel},
\citet{Const}, \citet{Gali} and \citet{CC1999}.

In this paper, we assume the constant relative risk
aversion(CRRA) utility function:
\begin{eqnarray}\label{eq:CRRA}
u(c)=\dfrac{c^{1-\gamma}}{1-\gamma},\qquad \gamma>0,\;\gamma\neq1,
\end{eqnarray}
where $\gamma$ is the agent's risk coefficient of relative risk
aversion.

To define the strategy set, we denote by $\Pi$ the family of all
c\'{a}gl\'{a}d, $\mathcal{F}_t$-adapted, non-decreasing process with starting at $0$ and assume that there exist $c^+,c^- \in
{\Pi}$ such that the agent's consumption $c_t$ can be expressed by
\begin{eqnarray} \label{con-decomposition}
c_t = c+ c_t^{+} - c_t^{-},
\end{eqnarray}
where $c$ is the agent's initial consumption rate. Then, the agent's
objective \eqref{eq:utility} is rewritten as
\begin{eqnarray}\label{eq:utility2}
U \equiv \mathbb{E}\left[\int_{0}^{\infty}e^{-\delta t}\left(u(c_t)dt-\alpha u'(c_t) dc_{t}^{+}  -\beta u'(c_t) dc_t^{-}\right)\right].
\end{eqnarray}
We define the set of consumption and risky investment strategies as
follows.
\begin{dfn} \label{def-admissible}
    We call a consumption-portfolio plans $(c^+,c^-,\pi)$ admissible if
    \begin{itemize}
        \item[(a)] A consumption strategy $(c^+,c^-)$ satisfies
        \begin{equation}\label{eq:well-posed}
        \mathbb{E}\left[\int_{0}^{\infty}e^{-\delta t}\left(|u(c_t)|dt +\alpha u'(c_t) dc_{t}^{+}  + \beta u'(c_t) dc_t^{-}\right) \right] <\infty.
        \end{equation}
        We denote by $\Pi(c)$ the class of all consumption strategies $(c^+,c^-)$ satisfying the condition \eqref{eq:well-posed}.
        \item[(b)] For all $t\ge 0$, $\pi_t$ is measurable process with repsect to $\mathcal{F}_t$ satisfying
        \begin{eqnarray}
        \int_{0}^{t}\pi_s^2 ds <+\infty, a.e.
        \end{eqnarray}
        \item[(c)] For all $t\ge 0$, the wealth process is non-negative, i.e.,
        \begin{eqnarray}
        X_t \ge 0.
        \end{eqnarray}
    \end{itemize}
\end{dfn}
The following assumption should be satisfied in order to guarantee
the existence of the special case of $\alpha = \beta = 0$, i.e., the
classical Merton problem.
\begin{as}
   $$
   K \equiv r + \dfrac{\delta -r}{\gamma} + \dfrac{\gamma-1}{2\gamma^2}\t^2 > 0.
   $$
\end{as}
If the agent instantaneously increases her consumption by a small
amount, from $c$ to $ c + \Delta c$ between $t$ and $t + \Delta t$,
the intertemporal utility gain from the additional consumption is $
\dfrac{1}{\delta}\left(u(c+\Delta c)-u(c)\right) \approx
\dfrac{1}{\delta}u'(c)\Delta c.$ On the other hand, the utility loss
from the consumption increase is $\alpha u'(c)\Delta c$. The gain
should be greater than the loss. Otherwise, the agent will never
increase consumption in our model. This observation leads to the
following assumption.
\begin{as} \label{assumption-alpha}
$$ 0\le \delta \alpha < 1.$$
\end{as}
There is no parameter restriction for $\beta \in [0,\infty)$. The
problem studied by \citet{Dyb} is an extreme case of ours.
\citet{Dyb} consider the case when  $\alpha = 0$ and $\beta
\rightarrow \infty$, which means there is no utility cost of
increasing consumption and infinity utility loss from decreasing
consumption (i.e., ratcheting of consumption).

Now we state the problem as follows.
\begin{pr}[Primal Problem (Dynamic Version)]~\label{pr:dynamic_primal_problem}\\
    Given $c_0 = c > 0$ and $X_0 = X>0$, we consider the following utility maximization problem:
    \begin{eqnarray}
    V(X,c)=\sup \;\mathbb{E}\left[\int_{0}^{\infty}e^{-\delta t}\left(u(c_t)dt-\alpha u'(c_t)dc_t^{+}-\beta u'(c_t)dc_t^-\right)\right]
    \end{eqnarray}
    where the supremum is taken over all admissible consumption/portfolio plans $(c^+,c^-,\pi)$ subject to the wealth process \eqref{eq:wealth}.
\end{pr}
Note that Problem  \ref{pr:dynamic_primal_problem} is subject to the
dynamics budget constraint \eqref{eq:wealth}. In Section
\ref{sec:problem-reformulation}, we first transform Problem
\ref{pr:dynamic_primal_problem} into a static problem (Problem
\ref{pr:primal_problem}) by the well-known method of linearizing the
budget constraint suggested by \citet{KLS} and \citet{CoxH}. Finally
we transform that static problem to a singular control problem
(Problem \ref{pr:dual_problem}). Theorem \ref{thm:duality} shows
that the solution to Problem \ref{pr:dynamic_primal_problem} is
recovered from the solution to Problem \ref{pr:dual_problem} by the
duality relationship. We will obtain the solution to Problem
\ref{pr:dual_problem} and characterize optimal policies by using it
in later sections. Note that the advantages of dealing with the
singular control problem in Problem \ref{pr:dual_problem} are
described right below the problem statement.

\section{Solution Analysis \label{sec:sol-tech}}

\subsection{The Problem Reformulation \label{sec:problem-reformulation}}
In order to reformulate Problem \ref{pr:dynamic_primal_problem},
first we transform the wealth process satisfying~\eqref{eq:wealth}
into a static budget constraint. For this purpose we define, for
$t\ge 0$,
\begin{eqnarray*}\begin{split}
        \theta\equiv \dfrac{\mu-r}{\sigma},\;\xi_t\equiv e^{-r t} Z_t,\;\mbox{and}\;\; Z_t\equiv e^{-\frac{1}{2}\theta^2 t-\theta B_t}.
\end{split}\end{eqnarray*}
Let us define an equivalent measure $\mathbb{Q}$ by setting
\begin{eqnarray}
\dfrac{d\mathbb{Q}}{d\mathbb{P}}=Z_T,
\end{eqnarray}
so that the process $B^{\mathbb{Q}}_t =B_t +\t t $ is a standard
Brownian motion under the measure $\mathbb{Q}$. Then, the wealth
process \eqref{eq:wealth} is changed by
\begin{eqnarray}\begin{split}\label{eq:wealth2}
dX_t = [rX_t - c_t] dt +\sigma\pi_t dB_t^{\mathbb{Q}}.
\end{split}\end{eqnarray}
Applying Fatou's lemma and Bayes' rule to $e^{-rt}X_t$, we get the
following static budget constraint:
\begin{eqnarray}\begin{split}\label{eq:static_budget}
\mathbb{E}\left[\int_{0}^{\infty}\xi_t c_{t} dt \right] \le X.
\end{split}\end{eqnarray}
where $X$ is the initial wealth level, i.e., $X_0 = X$. Then, we
restate Problem \ref{pr:dynamic_primal_problem} as the following problem.

\begin{pr}[Primal Problem (Static Version)]~\label{pr:primal_problem}\\
    Given $c_0 = c > 0$ and $X_0 = X >0$, we consider the following utility maximization problem:
    \begin{eqnarray}
    V(X,c)=\sup \;\mathbb{E}\left[\int_{0}^{\infty}e^{-\delta t}\left(u(c_t)dt-\alpha u'(c_t)dc_t^{+}-\beta u'(c_t)dc_t^-\right)\right]
    \end{eqnarray}
    where the supremum is taken over all admissible consumption/portfolio plans $(c^+,c^-,\pi)$ subject to the static budget constraint \eqref{eq:static_budget}.
\end{pr}
By following \citet{CoxH} and \citet{KLS}, it is easy to see that
the solution to Problem \ref{pr:dynamic_primal_problem} is the same
as that to Problem \ref{pr:primal_problem}. Thus, henceforth the
both problems are called the primal problem in this paper.

Using the static budget constraint \eqref{eq:static_budget}, we
consider the following Lagrangian:
\begin{eqnarray}\begin{split}\label{eq:Lagrangian}
{\bf L} =& \mathbb{E}\left[\int_{0}^{\infty}e^{-\delta t}\left(u(c_t)dt-\alpha u'(c_t)dc_t^{+}-\beta u'(c_t)dc_t^-\right)\right]+y\left(X-\mathbb{E}\left[\int_{0}^{\infty}\xi_t c_t dt \right]\right)\\
=& \mathbb{E}\left[\int_{0}^{\infty}e^{-\delta t}\left((u(c_t)-y e^{\delta t}\xi_t c_t)dt-\alpha u'(c_t)dc_t^{+}-\beta u'(c_t)dc_t^-\right)\right]+y X,
\end{split}\end{eqnarray}
where $y>0$ is the Lagrange multiplier for the budget constraint. We define the process
$$
y_t = y e^{\delta t}\xi_t,\;\;\; t \ge 0
$$
which plays the role of the Lagrange multiplier for the budget constraint at time $t$, and thus $(y_t)_{t=0}^\infty$ represents the marginal utility(shadow price) of wealth process. We will describe how the marginal utility of wealth process is related to the agent's optimal consumption policy.

Now we introduce the dual problem of Problem
\ref{pr:dynamic_primal_problem} or Problem \ref{pr:primal_problem}:
\begin{pr}[Dual problem]~\label{pr:dual_problem}
   \begin{eqnarray}\begin{split}
    J(y,c)=&\sup_{(c^+,c^-)\in \Pi(c)}\mathbb{E}\left[\int_{0}^{\infty}e^{-\delta t}\left((u(c_t)-y_t c_t)dt-\alpha u'(c_t)dc_t^{+}-\beta u'(c_t)dc_t^-\right)\right]\\
    =&\sup_{(c^+,c^-)\in \Pi(c)}\mathbb{E}\left[\int_{0}^{\infty}e^{-\delta t}\left(h(y_t,c_t)dt-\alpha u'(c_t)dc_t^{+}-\beta u'(c_t)dc_t^-\right)\right],
   \end{split}\end{eqnarray}
where
$$
h(y,c)=u(c)-yc.
$$
and $\Pi(c)$ is the class of all consumption strategies $(c^+,c^-)$
satisfying the condition \eqref{eq:well-posed}.
\end{pr}
Problem \ref{pr:dual_problem} is the optimization problem with
singular controls over $\Pi(c)$. There are two advantages in dealing
with the dual problem. The first advantage is that we do not need to
consider the portfolio choice. This property is, in fact, inherited
from the formulation of Problem \ref{pr:primal_problem}. The second
advantage is that now the agent's problem becomes a singular control
problem of deciding either to increase or decrease the level of
consumption given the current consumption $c$. Therefore, we can
apply a  standard method of singular control problem developed by
\citet{DN} or \citet{FS}.

\subsection{Solution: Dual Value Function}

The dual value function $J(y,c)$ satisfies the following
Hamilton-Jacobi-Bellman(HJB) equation:
\begin{eqnarray}\label{eq:HJB_dual_value}
\max\{\mathcal{L}J(y,c)+u(c)-yc,\;J_c(y,c)-\alpha u'(c),-J_c-\beta u'(c)\}=0,\;\;\;(y,c)\in \mathcal{R}
\end{eqnarray}
where $\mathcal{R}\equiv\mathbb{R}_{+}\times \mathbb{R}_{+}$ and the differential operator $\mathcal{L}$ is given by
$$
\mathcal{L}=\dfrac{\theta^2}{2}y^2\dfrac{\partial^2}{\partial y^2}+(\delta-r)y\dfrac{\partial}{\partial y} - \delta.
$$
To solve the HJB equation \eqref{eq:HJB_dual_value}, we define the
\textit{increasing region} {\bf IR}, the \textit{decreasing region}
{\bf DR} and the \textit{non-adjustment region} {\bf NR} as follows:
\begin{eqnarray}
\begin{split}
{\bf IR}&=\{(y,c)\in\mathcal{R} \mid J_c(y,c)=\alpha u'(c)\},\\
{\bf NR}&=\{(y,c)\in\mathcal{R} \mid -\beta u'(c)<J_c(y,c)<\alpha
u'(c)\}, \\
{\bf DR}&=\{(y,c)\in\mathcal{R} \mid J_c(y,c)=-\beta u'(c)\}.
\end{split}
\end{eqnarray}
In what follows in this subsection we describe the explicit form of
the dual value function in each region. First, as shown in Appendix
\ref{sec:Appendix:A}, the regions {\bf IR}, {\bf NR} and {\bf DR}
are rewritten by
\begin{eqnarray*}
    \begin{split}
        {\bf IR}&=\{(y,c)\in\mathcal{R} \mid y \le u'(c)b_{\alpha}\},\\
        {\bf NR}&=\{(y,c)\in\mathcal{R}\mid u'(c)b_{\alpha} < y < u'(c)b_{\beta} \},\\
        {\bf DR}&=\{(y,c)\in\mathcal{R} \mid  u'(c)b_{\beta}\le y\},
    \end{split}
\end{eqnarray*}
respectively. See Figure \ref{figure-NR-DR-IR} for the graphical
representation of each region. It is important to characterize {\bf
IR}-, {\bf NR}- and {\bf DR}-regions in order to understand the
optimal strategies, which will be investigated in great detail in
Section \ref{sec:optimal-strategies1}. Here, $b_{\alpha}$ and
$b_{\beta}$ are given by
$$
b_{\alpha}=(1-\delta
\alpha)\dfrac{m_1-1}{m_1}\dfrac{\frac{1}{\kappa}w^{m_1}-1}{w^{m_1-1}-1}>0 \qquad \mbox{and} \qquad b_{\beta}=(1+\delta
\beta)\dfrac{m_1-1}{m_1}\dfrac{w^{m_1}-\kappa}{w^{m_1}-w} >0
$$
with $\kappa=\dfrac{1-\delta\alpha}{1+\delta\beta}$. $m_1$ and $m_2$
are positive and negative roots of following quadratic equation:
$$
\dfrac{\theta^2}{2}m^2 + (\delta-r-\dfrac{\theta^2}{2})m- \delta =0.
$$
Moreover, $w$ is a unique solution to the equation $f(w) =0$ in $(0,1)$, where
\begin{equation}\label{eq:f}
f(w)=(m_1-1)m_2(1-w^{1-m_2})(w^{m_1}-\kappa) - m_1 (m_2-1)(w^{m_1} -w)(1-\kappa w^{-m_2}).
\end{equation}
In the following proposition, we provide a solution to Problem
\ref{pr:dual_problem}.
\begin{pro}\label{pro:solution_dual} The dual value function $J(y,c)$ of Problem \ref{pr:dual_problem} is given by
        \begin{footnotesize}
            \begin{eqnarray}
            \begin{split}
            J(y,c)=
            \begin{cases}
            &\dfrac{D_1 yc}{(1-\gamma+\gamma m_1)b_{\alpha}}\left(\dfrac{y}{c^{-\gamma}b_{\alpha}}\right)^{m_1-1} + \dfrac{D_2 yc}{(1-\gamma+\gamma m_2)b_{\alpha}}\left(\dfrac{y}{c^{-\gamma}b_{\alpha}}\right)^{m_2-1}\\ \\ &+\dfrac{1}{\delta}\dfrac{c^{1-\gamma}}{1-\gamma}-\dfrac{yc}{r},\qquad\qquad\qquad\qquad\qquad\qquad\mbox{for}\;\;(y,c)\in {\bf NR},\\ \\
            \vspace{2mm}
            &J\left(y,I(\dfrac{y}{b_{\alpha}})\right)+\alpha\left(u(c)-u(I(\dfrac{y}{b_{\alpha}}))\right),\qquad\;\mbox{for}\;\;(y,c)\in{\bf IR},\\ \\
            \vspace{2mm}
            &J\left(y,I(\dfrac{y}{b_{\beta}})\right)-\beta\left(u(c)-u(I(\dfrac{y}{b_{\beta}}))\right),\qquad\;\mbox{for}\;\;(y,c)\in{\bf DR},
            \end{cases}
            \end{split}
            \end{eqnarray}
        \end{footnotesize}
    where
    $$
    D_1 = \dfrac{(\alpha-\frac{1}{\delta})m_2 + (m_2-1)\frac{b_{\alpha}}{r}}{m_2 - m_1},\;\;D_2 = \dfrac{(\alpha-\frac{1}{\delta})m_1 + (m_1-1)\frac{b_{\alpha}}{r}}{m_1 - m_2}.
    $$
\end{pro}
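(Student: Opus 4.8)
The plan is to construct a candidate for $J$ by solving the variational inequality \eqref{eq:HJB_dual_value} separately on the three regions and then to pin down the unknown constants and free boundaries by smooth fit; afterwards one verifies that the candidate satisfies the full HJB with the correct term attaining the maximum on each region. On \textbf{NR} the diffusion term binds, so $J$ solves $\mathcal{L}J+u(c)-yc=0$, while on \textbf{IR} and \textbf{DR} the gradient constraints bind, giving $J_{c}=\alpha u'(c)$ and $J_{c}=-\beta u'(c)$ respectively. The device that makes everything explicit is the homogeneity of the CRRA problem: setting $z=y/u'(c)=yc^{\gamma}$ and writing $J(y,c)=c^{1-\gamma}G(z)$, the operator $\mathcal{L}$ (which differentiates only in $y$) collapses the \textbf{NR} equation to the Euler-type ODE
\begin{equation}
\frac{\theta^{2}}{2}z^{2}G''(z)+(\delta-r)zG'(z)-\delta G(z)+\frac{1}{1-\gamma}-z=0 .
\end{equation}
Its indicial polynomial is exactly the quadratic defining $m_{1}>0>m_{2}$, so the general solution is $G(z)=A_{1}z^{m_{1}}+A_{2}z^{m_{2}}+\frac{1}{\delta(1-\gamma)}-\frac{z}{r}$, the last two terms being the particular solution from the constant and linear inhomogeneities. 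Reinstating $c$ reproduces the \textbf{NR} line of the proposition once $A_{i}$ is matched to $D_{i}$ through $A_{i}=D_{i}/\big((1-\gamma+\gamma m_{i})b_{\alpha}^{m_{i}}\big)$.

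Next I would obtain the \textbf{IR} and \textbf{DR} formulas by integrating the binding gradient constraints in $c$. The boundary of \textbf{IR} is the ray $y=u'(c)b_{\alpha}$, i.e.\ $c=I(y/b_{\alpha})$ with $I=(u')^{-1}$; integrating $J_{c}=\alpha u'(c)$ outward from this boundary and imposing value matching there yields $J(y,c)=J\big(y,I(y/b_{\alpha})\big)+\alpha\big(u(c)-u(I(y/b_{\alpha}))\big)$, and symmetrically on \textbf{DR} with $b_{\beta}$ and $-\beta$. Thus the \textbf{IR}/\textbf{DR} expressions are automatic once the \textbf{NR} solution and the two boundary levels $b_{\alpha}<b_{\beta}$ are known, and the only remaining task is to determine $A_{1},A_{2},b_{\alpha},b_{\beta}$.

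These four unknowns are fixed by requiring $J\in C^{2}$ across both free boundaries. Writing $\Phi(z):=(1-\gamma)G(z)+\gamma zG'(z)$ so that $J_{c}=u'(c)\Phi(z)$, the smooth-pasting ($C^{1}$) conditions read $\Phi(b_{\alpha})=\alpha$ and $\Phi(b_{\beta})=-\beta$, while the super-contact ($C^{2}$) conditions, obtained by matching $J_{cc}$, reduce cleanly to $\Phi'(b_{\alpha})=0$ and $\Phi'(b_{\beta})=0$. Since $\Phi(z)=(1-\gamma+\gamma m_{1})A_{1}z^{m_{1}}+(1-\gamma+\gamma m_{2})A_{2}z^{m_{2}}+\frac{1}{\delta}-\frac{z}{r}$, the two conditions at $b_{\alpha}$ are linear in $A_{1},A_{2}$ and solve to give precisely the stated $D_{1},D_{2}$ as functions of $b_{\alpha}$. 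Substituting these into the two conditions at $b_{\beta}$ and eliminating the common scale $b_{\alpha}$ leaves a single equation in the ratio $w:=b_{\alpha}/b_{\beta}\in(0,1)$; a short computation using $\kappa=(1-\delta\alpha)/(1+\delta\beta)$ shows this equation is $f(w)=0$ with $f$ as in \eqref{eq:f}, and back-substitution then delivers the closed forms for $b_{\alpha}$ and $b_{\beta}$.

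The main obstacle I expect is twofold and lives entirely in this last algebraic reduction. First, carrying the four smooth-fit equations down to the scalar equation $f(w)=0$—and checking that the resulting $b_{\alpha},b_{\beta}$ are positive with $b_{\alpha}<b_{\beta}$—requires careful bookkeeping of the factors $1-\gamma+\gamma m_{i}$ and of the signs of $m_{1},m_{2}$; one must also show that $f$ has a unique root in $(0,1)$, which I would establish by a sign and monotonicity analysis of $f$ near $w\to0^{+}$ and $w\to1^{-}$, where the standing assumptions $K>0$ and $0\le\delta\alpha<1$ enter. Second, one must verify that the stitched function is a genuine solution of \eqref{eq:HJB_dual_value}, i.e.\ that the non-binding terms have the right sign: $\mathcal{L}J+u(c)-yc\le 0$ on \textbf{IR}$\cup$\textbf{DR}, and $-\beta u'(c)<J_{c}<\alpha u'(c)$ on \textbf{NR}. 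Both follow from the strict monotonicity of $\Phi$ on $(b_{\alpha},b_{\beta})$ together with $\Phi(b_{\alpha})=\alpha$ and $\Phi(b_{\beta})=-\beta$, but writing this out carefully is where the analytic effort concentrates. The final identification of this candidate with the value function of Problem \ref{pr:dual_problem} is then a standard verification argument.
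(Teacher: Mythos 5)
Your proposal is correct and takes essentially the same route as the paper's Appendix A: your $\Phi(z)=(1-\gamma)G(z)+\gamma zG'(z)$ is exactly the paper's $H(z)=J_c(y,c)/u'(c)$, so your smooth-fit system at $b_\alpha,b_\beta$, the scalar equation $f(w)=0$ for $w=b_\alpha/b_\beta$, and the final verification coincide with the paper's argument (which merely differentiates the HJB in $c$ first, following Dai and Yi, solves the resulting double obstacle problem for $H$, and integrates back, rather than solving for $J$ directly via the homogeneity ansatz). One small caveat: the inequality $\mathcal{L}J+u(c)-yc\le 0$ on ${\bf IR}\cup{\bf DR}$ does not follow from the monotonicity of $\Phi$ alone, but from the additional bounds $b_\alpha<1-\delta\alpha$ and $b_\beta>1+\delta\beta$, which the paper proves in a separate step.
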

\begin{proof}
    The proof is given in Appendix \ref{sec:Appendix:A}.
\end{proof}
Finally we summarize the duality relationship between the value
function of the primal problem and the dual value function of
Problem \ref{pr:dual_problem} in the following theorem.
\begin{thm}\label{thm:duality}
    For the value function $V(X,c)$ of Problem \ref{pr:dynamic_primal_problem} and dual value function $J(y,c)$ of Problem \ref{pr:dual_problem}, the following duality relationship is established:
    \begin{eqnarray}\label{eq:dualityrelationship}
    V(X,c)=\min_{y>0}\left(J(y,c)+yX\right).
    \end{eqnarray}
In addition, there exists a unique solution $y^*$ for the minimization problem \eqref{eq:dualityrelationship}.
\end{thm}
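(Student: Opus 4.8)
The plan is to prove the two inequalities underlying \eqref{eq:dualityrelationship} separately: weak duality $V(X,c)\le \inf_{y>0}(J(y,c)+yX)$ comes directly from the Lagrangian, while strong duality is obtained by locating a multiplier $y^*$ at which the static budget constraint binds. First I would establish weak duality. Fix any $y>0$ and any admissible plan $(c^+,c^-,\pi)$. Since such a plan satisfies \eqref{eq:static_budget}, we have $X-\mathbb{E}[\int_0^\infty \xi_t c_t\,dt]\ge 0$, hence $y(X-\mathbb{E}[\int_0^\infty \xi_t c_t\,dt])\ge 0$. Adding this nonnegative term to the primal objective and using $y_t=ye^{\delta t}\xi_t$ gives
$$\mathbb{E}\left[\int_0^\infty e^{-\delta t}\left(u(c_t)\,dt-\alpha u'(c_t)\,dc_t^+-\beta u'(c_t)\,dc_t^-\right)\right]\le J(y,c)+yX,$$
because the right-hand side is, by the definition of Problem \ref{pr:dual_problem}, the supremum of the bracketed Lagrangian over $(c^+,c^-)\in\Pi(c)$. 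Taking the supremum over admissible plans on the left and then the infimum over $y>0$ yields $V(X,c)\le\inf_{y>0}(J(y,c)+yX)$.

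Next I would produce the minimizing multiplier. Note that $J(\cdot,c)$ is convex, being a pointwise supremum of functions affine in $y$; hence $y\mapsto J(y,c)+yX$ is convex, with first-order condition $J_y(y,c)+X=0$. By an envelope argument applied to Problem \ref{pr:dual_problem}, $J_y(y,c)=-\mathbb{E}[\int_0^\infty \xi_t c_t^*(y)\,dt]$, where $c^*(y)$ is the optimal dual consumption for multiplier $y$; this quantity is strictly negative and, using the explicit form of $J$ from Proposition \ref{pro:solution_dual} together with the monotone dependence of $c^*(y)$ on $y$, it is continuous and strictly increasing in $y$ with $J_y\to-\infty$ as $y\to 0^+$ and $J_y\to 0^-$ as $y\to\infty$. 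Consequently, for every $X>0$ there is a unique $y^*>0$ solving $J_y(y^*,c)=-X$, equivalently $\mathbb{E}[\int_0^\infty \xi_t c_t^*(y^*)\,dt]=X$; convexity guarantees this stationary point is the unique minimizer.

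Finally I would close the duality gap at $y^*$. Because the budget constraint binds, $y^*(X-\mathbb{E}[\int_0^\infty \xi_t c_t^*\,dt])=0$, so the Lagrangian evaluated at $c^*(y^*)$ equals the primal objective, giving $J(y^*,c)+y^*X=\mathbb{E}[\int_0^\infty e^{-\delta t}(u(c_t^*)\,dt-\alpha u'(c_t^*)\,dc_t^{*+}-\beta u'(c_t^*)\,dc_t^{*-})]$. It then remains to check that $c^*(y^*)$, paired with the portfolio $\pi^*$ recovered from the dual solution via the martingale representation theorem and It\^o's lemma applied to $e^{-rt}X_t$, constitutes an admissible plan for Problem \ref{pr:dynamic_primal_problem} in the sense of Definition \ref{def-admissible}. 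Granting admissibility, $V(X,c)\ge J(y^*,c)+y^*X\ge\inf_{y>0}(J(y,c)+yX)$, which together with weak duality forces equality throughout and shows the infimum is attained at the unique $y^*$.

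The hard part will be the admissibility verification in the last step: one must show the wealth process generated by $(c^*,\pi^*)$ stays nonnegative, that $\pi^*$ is square-integrable, and that the integrability condition \eqref{eq:well-posed} holds for the singular controls $c^{*+},c^{*-}$. Equally delicate is making the envelope identity $J_y=-\mathbb{E}[\int_0^\infty \xi_t c_t^*\,dt]$ rigorous, since $J$ is only piecewise-defined across the regions {\bf IR}, {\bf NR}, {\bf DR} and the controls are singular; this requires invoking the smooth-fit conditions at the free boundaries $y=u'(c)b_\alpha$ and $y=u'(c)b_\beta$ to ensure $J_y$ is continuous there, so that differentiation under the expectation is justified uniformly in $y$.
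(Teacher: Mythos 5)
Your proposal is correct and follows essentially the same route as the paper's Appendix B proof: weak duality via the Lagrangian, then locating the unique minimizer $y^*$ through strict convexity of $J(\cdot,c)$ and the boundary limits of $J_y$ (which the paper obtains by direct differentiation of the closed form in Proposition \ref{pro:solution_dual} rather than an envelope argument), then closing the gap by showing the budget constraint binds at $y^*$ (the paper does this via one-sided perturbations of the multiplier, which is just the rigorous version of your envelope identity). The delicate points you flag --- admissibility of $(c^*,\pi^*)$ and smooth fit of $J_y$ across the free boundaries --- are exactly where the paper offloads work to Propositions \ref{pro:J_HJB} and \ref{pro:consumption} and the Cox--Huang/Karatzas--Lehoczky--Shreve equivalence of the dynamic and static primal problems.
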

\begin{proof}
    The proof is given in Appendix \ref{sec:Append:B}.
\end{proof}

%

\subsection{Optimal strategies \label{sec:optimal-strategies1}}
  \begin{figure}[h]
  	\begin{center}
  		\setlength{\unitlength}{1cm}
  		\begin{picture}(5,8)
  		\put(-0.4,-0.4){$0$}
  		\put(0,-0.5){\vector(0,1){7}}
  		\put(-1,6.8){\footnotesize dual variable $y$}
  		\put(-0.5,0){\vector(1,0){7}}
  		\put(5,-0.5){\footnotesize marginal utility $u'(c)$}
  		\put(0, 0){\line(2, 3){4}}
        \put(4,6){\footnotesize \;\;$y=u'(c)b_\beta$}    	
  		\put(0, 0){\line(2, 1){6.5}}
  		\put(6.4,3.2){\footnotesize \;\;$y=u'(c)b_\alpha$}
  		\put(0.3,4){{\footnotesize (${\bf DR}$-region)}}
  		\put(3,0.5){{\footnotesize (${\bf IR}$-region)}}
  		\put(4,4){{\footnotesize (${\bf NR}$-region)}}
  		\put(2.3,2){\vector(0,1){1.4}}
  		\put(2.3,2){\vector(0,-1){0.8}}
  		\put(2.5,2.3){\footnotesize $(u'(c_0),y_0)$}
  		\put(2.3,3.45){\line(1,0){0.2}}
  		\put(2.5,3.45){\line(0,1){0.3}}
  		\put(2.5,3.75){\line(1,0){0.2}}
  		\put(2.7,3.75){\line(0,1){0.3}}
  		\put(2.7,4.05){\line(1,0){0.2}}
  		\put(2.9,4.05){\line(0,1){0.3}}
  		\put(2.9,4.35){\line(1,0){0.2}}
  		\put(3.1,4.35){\vector(0,-1){0.4}}
  		\put(2.3,1.17){\line(-1,0){0.3}}
  		\put(2.0,1.17){\line(0,-1){0.15}}
  		\put(2.0,1.02){\line(-1,0){0.3}}
  		\put(1.7,1.02){\line(0,-1){0.15}}
  		\put(1.7,0.87){\line(-1,0){0.3}}
  		\put(1.4,0.87){\line(0,-1){0.15}}
  		\put(1.4,0.72){\line(-1,0){0.3}}
  		\put(1.1,0.72){\vector(0,1){0.3}}
  		\end{picture}
  	\end{center}
  	    \caption{{\bf DR}-region, {\bf NR}-region, and {\bf IR}-region \label{figure-NR-DR-IR}}
  \end{figure}
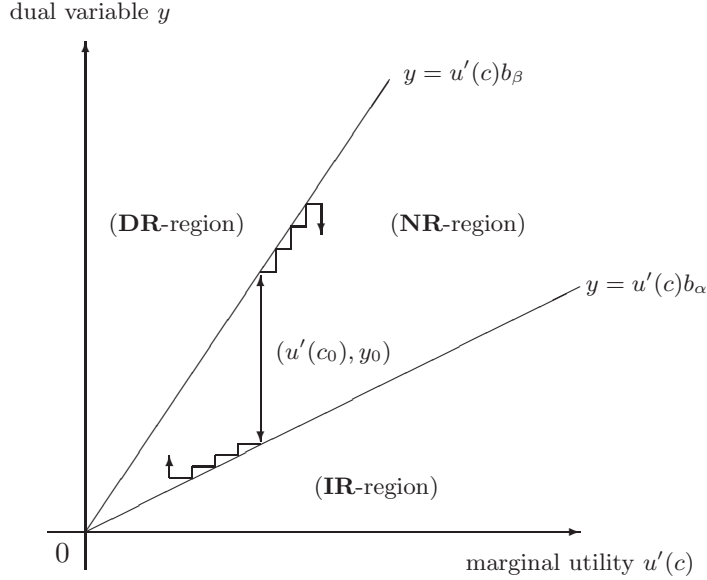

Refer Figure \ref{figure-NR-DR-IR} for the graphical representation
of the optimal consumption behavior. If the initial consumption
level $c_0$ is such that $y_0/u'(c_0)$ lies in the increasing region
{\bf IR} or the decreasing region {\bf DR}, it jumps immediately to
the non-adjustment region {\bf NR}. Suppose the level of consumption
is such that $y_0/u'(c_0)$ lies inside the {\bf NR}-region. The
level of consumption stays constant during the time $y_t$-process
lies inside the {\bf NR}-region. Consumption jumps down if and only
if $y_t$ process hits $u'(c_0) b_{\beta}$ and it jumps up if and
only if $y_t$ hits $u'(c_0) b_{\alpha}$. In this case, the question
is how much the consumption level jumps up or down. Proposition
\ref{pro:consumption} explicitly characterizes the amount of the
jump at each time when there is a consumption adjustment.

\begin{pro}\label{pro:consumption}
The optimal consumption $c_t^*$ for $t\ge 0$ is given by
$$
c_t^* = c_0 + c_t^{*,+}-c_t^{*,-},
$$
where $y_t^*=y^* e^{\delta t} \xi_t$ and $y^*$ is the unique solution to the minimization problem \eqref{eq:dualityrelationship} and
\begin{footnotesize}
    \begin{eqnarray}
        \begin{split}\label{eq:optimal_consumption}
            c_t^{*,+}&=\max\left\{0, -c_0 + \sup_{s\in[0,t)}\left(c_s^{*,-}+I(\frac{y_s^*}{b_{\alpha}})\right) \right\},\\
            c_t^{*,-}&=\max\left\{0, \;\;\;c_0 +\sup_{s\in[0,t)}\left(c_s^{*,+}-I(\frac{y_s^*}{b_{\beta}})\right) \right\}.
        \end{split}
    \end{eqnarray}
\end{footnotesize}
\end{pro}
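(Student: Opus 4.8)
The plan is to recognize Proposition \ref{pro:consumption} as the explicit solution of a two-sided Skorokhod reflection problem for the consumption process, read off from the region structure of the dual value function. From Proposition \ref{pro:solution_dual} and the characterization of the regions, the optimal policy in the dual problem leaves $c$ unchanged while $(y_t^*,c_t^*)$ stays in {\bf NR} and adjusts $c$ only on the boundaries $\{y=u'(c)b_\alpha\}$ (where $c$ increases) and $\{y=u'(c)b_\beta\}$ (where $c$ decreases). Since $u'$ is strictly decreasing with inverse $I$, these boundaries read, in consumption coordinates, as $c=I(y_t^*/b_\alpha)$ and $c=I(y_t^*/b_\beta)$. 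Writing $L_t := I(y_t^*/b_\alpha)$ and $U_t := I(y_t^*/b_\beta)$, the optimal consumption is therefore the process reflected so as to remain inside the moving band $[L_t,U_t]$, with $c^{*,+}$ pushing at the lower edge and $c^{*,-}$ at the upper edge, and $y^*$ the minimizing multiplier supplied by Theorem \ref{thm:duality}.

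First I would make the Skorokhod conditions precise: find non-decreasing c\'agl\'ad processes $c^{*,+},c^{*,-}$ with $c_0^{*,\pm}=0$ such that, with $c_t^*=c_0+c_t^{*,+}-c_t^{*,-}$, we have (i) $L_t\le c_t^*\le U_t$ for all $t$, (ii) $c^{*,+}$ increases only on $\{c_t^*=L_t\}$, and (iii) $c^{*,-}$ increases only on $\{c_t^*=U_t\}$. A key structural fact, inherited directly from the non-emptiness of {\bf NR}, is $b_\alpha<b_\beta$, hence $L_t<U_t$ for every $t$: the band has strictly positive width, so $c_t^*$ can never touch both edges at once and the two regulators are never simultaneously active. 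Rewriting (i) as the pair $c_t^{*,+}\ge L_t-c_0+c_t^{*,-}$ and $c_t^{*,-}\ge c_0-U_t+c_t^{*,+}$, the minimal non-decreasing solution of each inequality, given the other regulator, is the running supremum appearing in \eqref{eq:optimal_consumption}; this is the standard explicit form of the one-sided Skorokhod map, and it immediately yields the support conditions (ii)--(iii), since a running supremum over $[0,t)$ increases only where its argument attains the current maximum, i.e.\ exactly where the corresponding constraint binds.

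The remaining, and main, difficulty is that the two formulas in \eqref{eq:optimal_consumption} are coupled, each regulator being defined through the other. I would resolve this by a monotone fixed-point (Picard) argument. Starting from $c^{*,+}\equiv c^{*,-}\equiv 0$ and iterating the two maps, each map is monotone increasing in the opposite regulator, so the iterates increase monotonically; on any finite horizon $[0,T]$ they are bounded above because $L$ and $U$ are bounded there and the positive band width $U_t-L_t$ prevents the two running suprema from reinforcing one another without cancellation. Monotone convergence then produces a fixed point, and a Lipschitz (contraction) estimate for the Skorokhod map on $[0,T]$ gives uniqueness; the limit inherits the c\'agl\'ad, non-decreasing property and satisfies (i)--(iii) by passing to the limit in the supremum representations.

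Finally I would close the loop with optimality. Having constructed the candidate $(c^{*,+},c^{*,-})$ reflecting $c_t^*$ in $[L_t,U_t]$, I would invoke the verification argument for the singular-control dual problem: the function $J$ of Proposition \ref{pro:solution_dual} solves the HJB equation \eqref{eq:HJB_dual_value} with smooth fit across the edges of {\bf NR}, so applying It\^o's formula to $e^{-\delta t}J(y_t^*,c_t^*)$, with the bounded-variation part of $c_t^*$ handled by Stieltjes calculus and the region identities used ($J_c=\alpha u'$ on the lower edge, $J_c=-\beta u'$ on the upper edge, $\mathcal{L}J+u(c)-yc=0$ in {\bf NR}), shows that the reflected policy attains the dual value while any other admissible policy yields no more. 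The optimal consumption is then recovered through the duality of Theorem \ref{thm:duality} evaluated at $y^*$, which is precisely the assertion of the proposition. I expect the coupling of the two regulators, together with verifying that the running-supremum representation genuinely realizes the minimal, support-respecting pushing, to be the crux; the strict positivity $L_t<U_t$ is the single ingredient that makes both the fixed-point construction and the minimality argument go through.
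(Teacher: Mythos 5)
Your proposal is correct in substance and shares the paper's overall architecture: build the candidate $(c^{*,+},c^{*,-})$ as the two regulators of a Skorokhod reflection of $c^*$ in the moving band between the $\bf{IR}$ and $\bf{DR}$ boundaries, then establish optimality through the verification theorem (Theorem \ref{thm:verification}) and the duality of Theorem \ref{thm:duality}. Where you differ is in the allocation of effort. The paper's Appendix \ref{sec:Append:C} simply asserts that the support conditions follow ``from the construction'' and spends essentially all of its length verifying the two hypotheses of the verification theorem that you pass over in silence: admissibility of the candidate in the sense of \eqref{eq:well-posed} and the transversality condition $\lim_{t\to\infty}\mathbb{E}[e^{-\delta t}J(y_t,c_t^*)]=0$. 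Both are needed before the verification theorem applies, and both are proved in the paper from exactly the band confinement $b_\alpha\le y_t(c_t^*)^{\gamma}\le b_\beta$ that your condition (i) delivers, so your argument closes, but you should state these checks explicitly rather than fold them into ``invoke the verification argument.'' Conversely, you supply the existence and uniqueness of the coupled pair of regulators, which the paper does not address at all; this is a genuine addition, though one technical point in your sketch deserves care: the one-sided regulator map is Lipschitz with constant $1$ in the supremum norm, not a strict contraction, so uniqueness of the fixed point does not follow from a bare Picard estimate. The standard repair is to localize in time so that on each short interval the oscillation of the data is smaller than the band width $U_t-L_t\ge\epsilon>0$, whence only one boundary can be active and the composition contracts there; your observation that $b_\alpha<b_\beta$ is the key ingredient is exactly right, it just needs to be deployed in this quantitative form.
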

\begin{proof}
    The proof is given in Appendix \ref{sec:Append:C}.
\end{proof}

\begin{figure}[h]
\centering
\subfigure[$\dfrac{y_t}{u'(c_t^*)}$]{\label{fig00b}\includegraphics[scale=0.4]{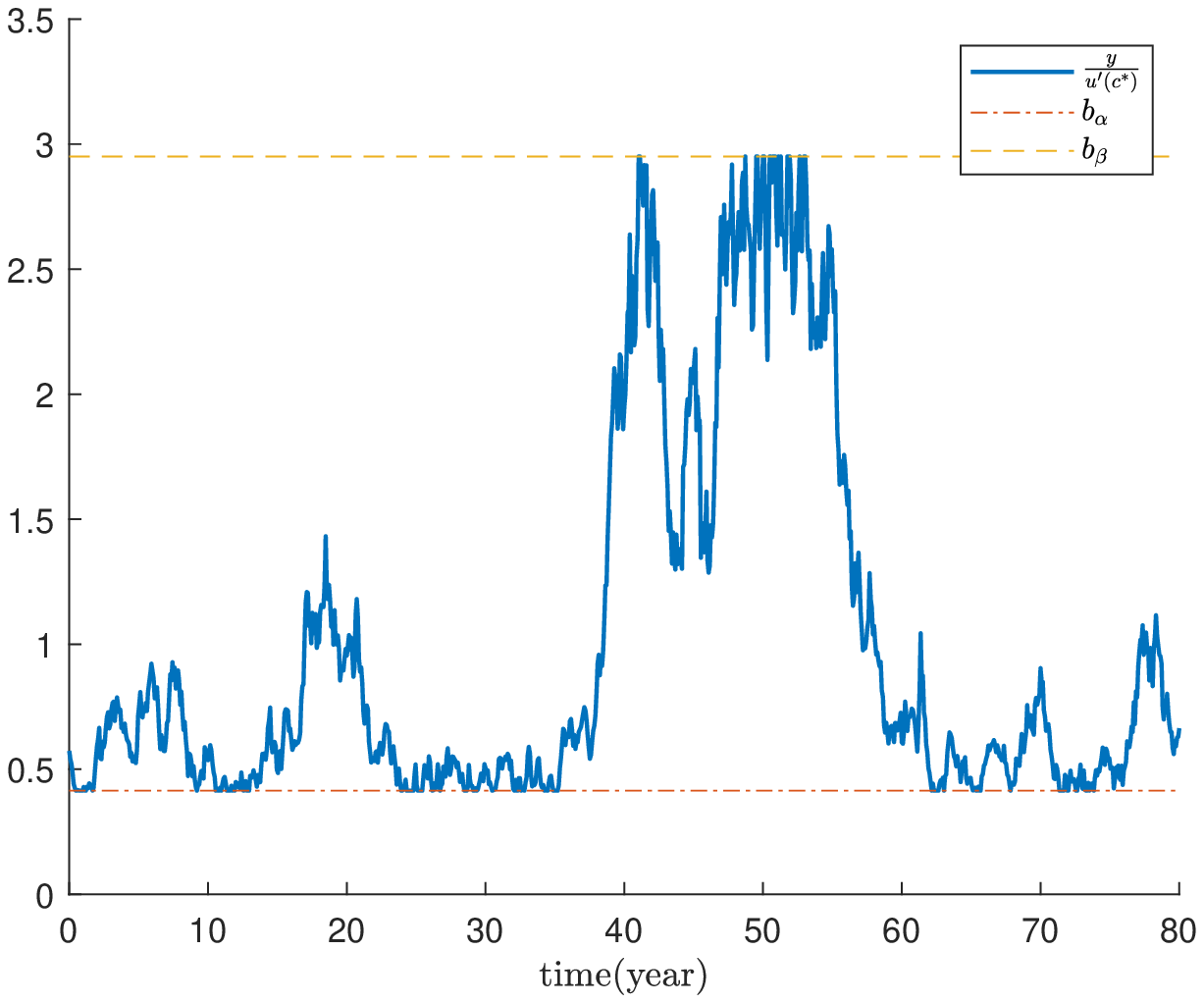}}
\subfigure[$c_t^*$]{\label{fig00a}\includegraphics[scale=0.4]{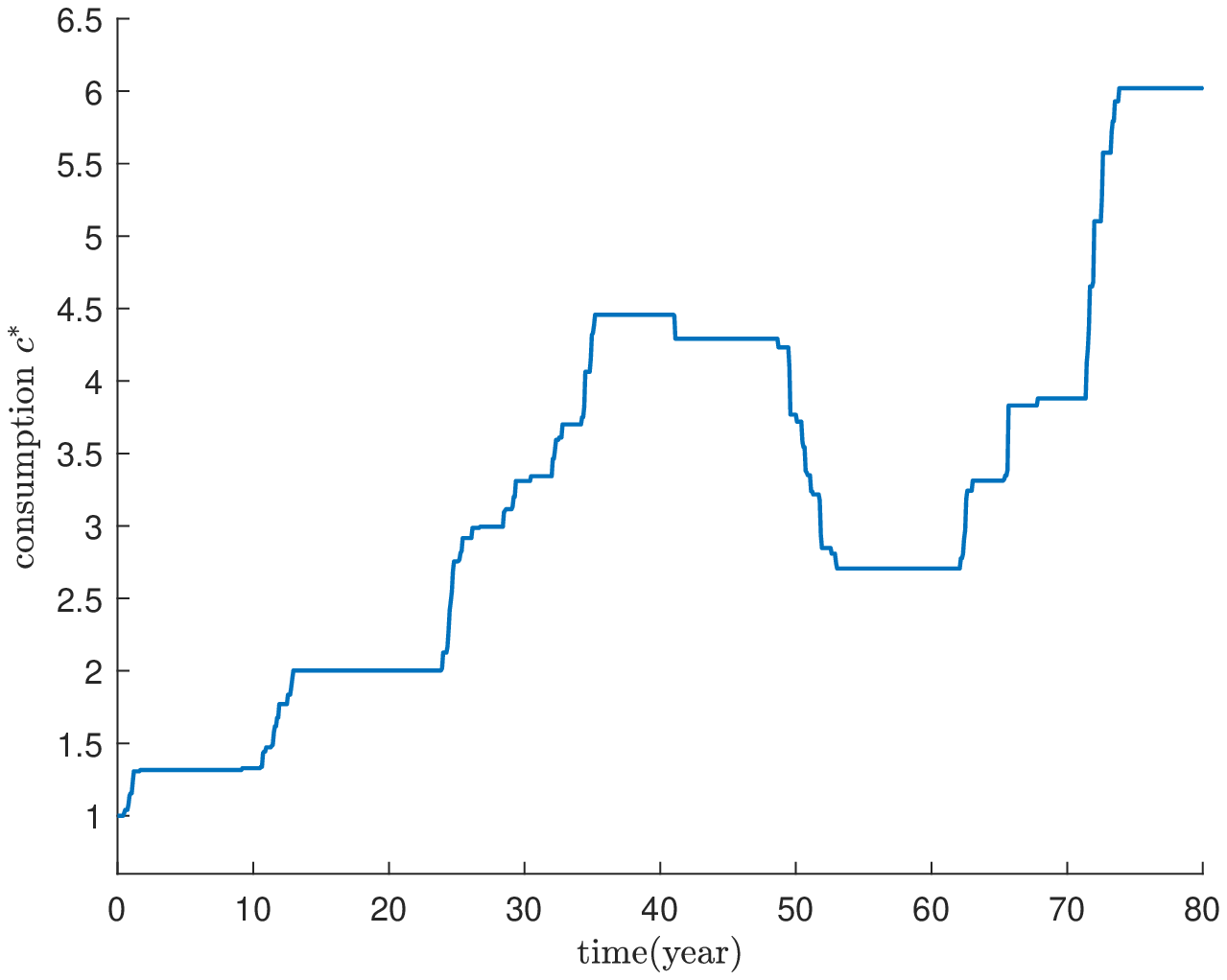}}
\subfigure[$c^{*,+}$]{\label{fig00c}\includegraphics[scale=0.4]{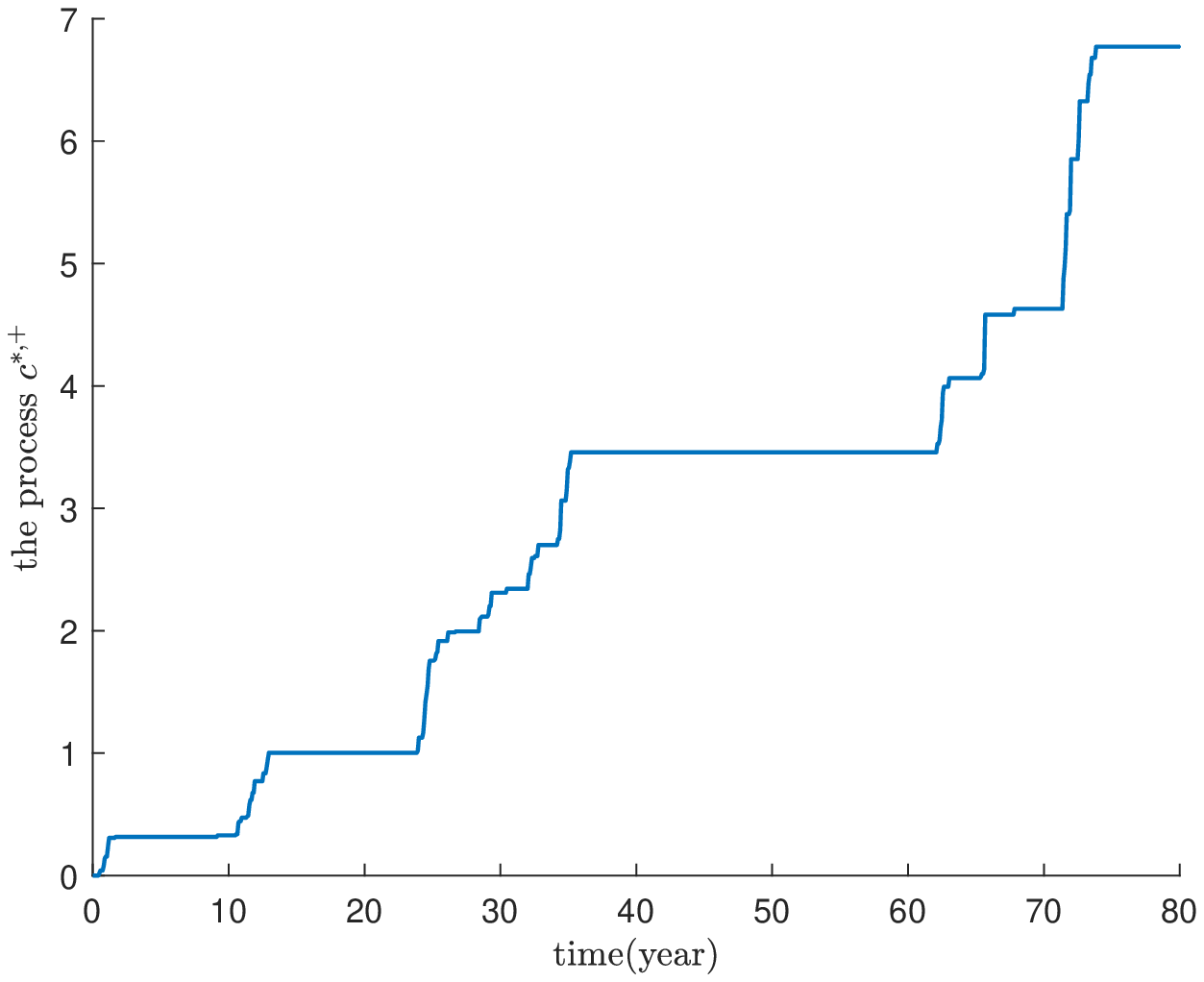}}
\subfigure[$c^{*,-}$]{\label{fig00d}\includegraphics[scale=0.4]{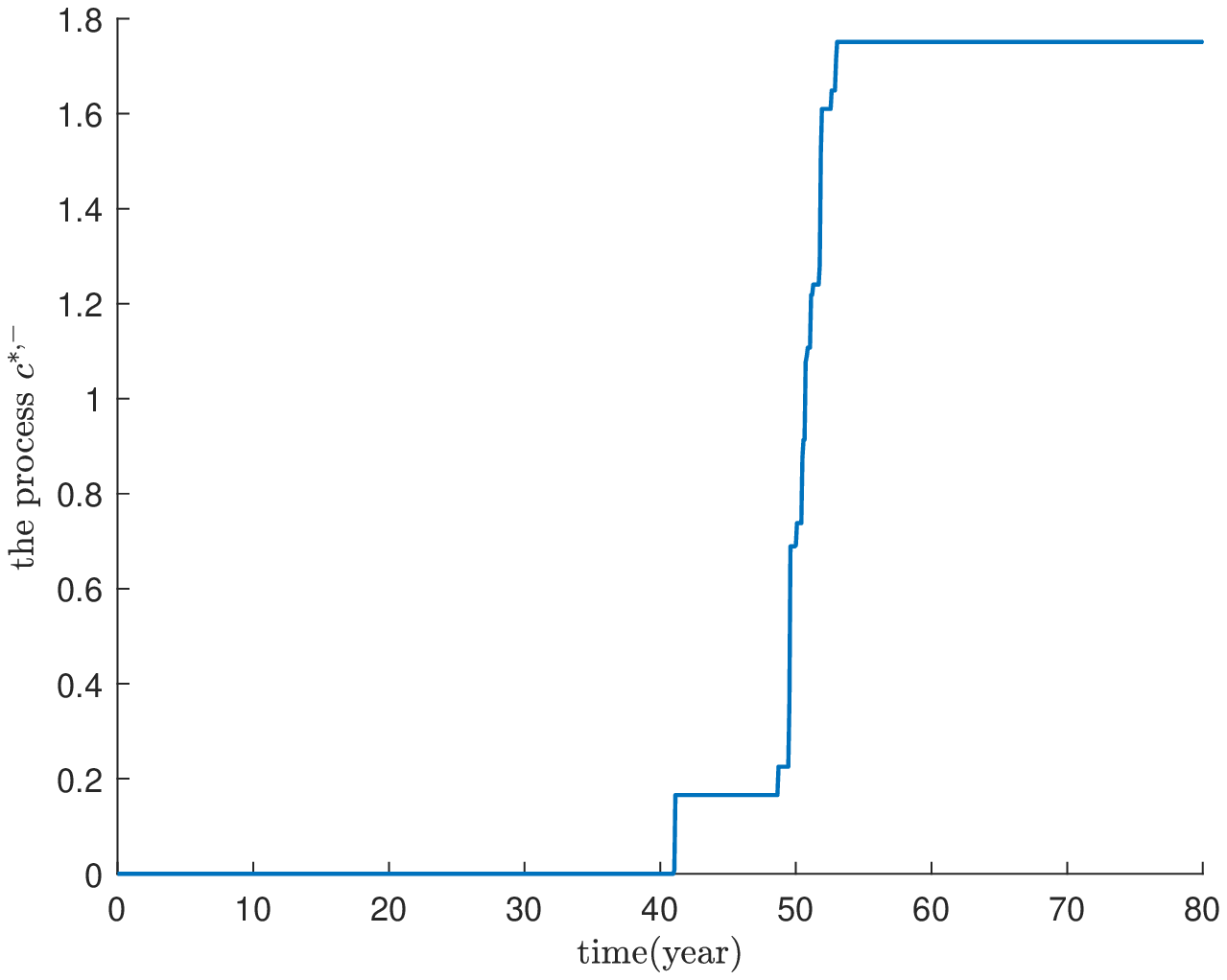}}
\caption{Simulation of optimal consumption, ${y}/{u'(c^*)}$,
$c^{*,+}$ and $c^{*,-}$ with $\alpha=5$, $\beta=10$. The other
parameter values are as follows:
$\rho=0.02,\;\mu=0.085,\;r=0.015,\;\sigma=0.25,\;\gamma=2,\;X=50$,
$c=1$ and $T=80$. \label{figure-consumption-path1}}
\end{figure}

Processes $c_t^{*,+}$ and $c_t^{*,-}$ are non-decreasing regulators such that
$y_t^*/u'(c_t^*)$ lies inside the ${\bf NR}$-region. $c_t^{*,+}$ and $c_t^{*,-}$
stay mostly constant.  Consumption increases whenever $c_t^{*,+}$ increases, but
consumption decreases whenever $c_t^{*,-}$ increases. More precisely, as seen in
Figure \ref{figure-NR-DR-IR}, the process $c_t^{*,+}$ stays constant within the
non-adjustment region {\bf NR} and increases if and only if
${y_t^*}/u'(c_t)$ hits the free boundary $b_{\alpha}$. Similarly,
the process $c_t^{*,-}$ also stays constant within the non-adjustment
region {\bf NR}.  $c_t^{*,-}$ increases if and only if ${y_t^*}$ hits the free
boundary $b_{\beta}$. Figure \ref{figure-consumption-path1} plots
sample paths of $c_t^{*,+}$ and $c_t^{*,-}$ together with those of
$\dfrac{y_t}{u'(c_t^*)}$ and $c_t^*$. See Figure  \ref{figure-consumption-path1}
for simulated paths of  $\frac{y_t}{u'(c_t^*)}$, $c_t^{*,+}$, $c_t^{*,-}$, and $c_t^*$.

Proposition \ref{pro:consumption} describes the consumption path
according to the ratio of the shadow price of wealth to the marginal
utility process. Then, how is the consumption process
related to the agent's wealth process? The following theorem provides
the answer to this question.

\begin{thm}~\label{thm:wealth}
Pick an arbitrary time $s \geq 0$ and let $c_s^*=c$ be the optimal
consumption at $s$ for some constant $c>0$.  For $t \geq s$, the optimal consumption is fixed as $c_t^* = c$
during the time in which $y_t$ is inside the $\bf{NR}$ region (by Proposition \ref{pro:consumption}).
In this case, the optimal wealth process $X_t^*$ follows
\begin{equation} \label{optimal-wealth-NR}
X_{t}^{*}= \dfrac{c}{r}-c\left(\dfrac{D_1
m_1}{(1-\gamma+\gamma m_1)b_{\alpha}}\left(\dfrac{y_t^*}{
{c}^{-\gamma} b_{\alpha}}\right)^{m_1-1}+\dfrac{D_2
m_2}{(1-\gamma+\gamma m_2)b_{\alpha}}\left(\dfrac{y_t^*}{
{c}^{-\gamma} b_{\alpha}}\right)^{m_2-1}\right).
\end{equation}
In addition, there exist two positive numbers
$\underline{x}$ and $\bar{x}$
such that $c_t^* = c$ for $t \geq s$  if and only if
\begin{equation} \label{S-s-threshold}
\underline{x} < \frac{X_t^*}{c} < \bar{x} \qquad \mbox{or} \qquad c \underline{x} < X_t^* < c\bar{x} .
\end{equation}
The explicit forms of $\underline{x}$ and
$\bar{x}$ are given in the proof.
\end{thm}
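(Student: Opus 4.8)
The plan is to extract both assertions — the wealth representation \eqref{optimal-wealth-NR} and the band \eqref{S-s-threshold} — directly from the convex-duality relation of Theorem \ref{thm:duality}, using the explicit {\bf NR}-branch of $J$ from Proposition \ref{pro:solution_dual}. First I would establish \eqref{optimal-wealth-NR}. The minimizer $y^*$ of $y\mapsto J(y,c)+yX$ in \eqref{eq:dualityrelationship} is characterized by the first-order condition $J_y(y^*,c)+X=0$, i.e. $X=-J_y(y^*,c)$. By the Markovian (time-homogeneous) structure of the infinite-horizon problem the same envelope identity propagates along the optimal path — this is exactly the relation underlying the duality in the proof of Theorem \ref{thm:duality} — so that $X_t^*=-J_y(y_t^*,c)$ for every $t\ge s$ at which $y_t^*$ lies in {\bf NR} and consumption is frozen at $c$. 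It then remains to differentiate the {\bf NR}-branch of $J$ in $y$. Writing $z=y/(c^{-\gamma}b_\alpha)$, each of the two power terms becomes proportional to $z^{m_i}$ after substitution and differentiates in $y$ to a term proportional to $z^{m_i-1}$ carrying an extra factor $m_i$; the static term $\frac1\delta c^{1-\gamma}/(1-\gamma)$ drops out and $-yc/r$ contributes $-c/r$. Collecting and negating reproduces \eqref{optimal-wealth-NR} verbatim. This step is routine bookkeeping.

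Next I would build the band. Since $J(y,c)=\sup_{(c^+,c^-)}\mathbb E[\cdots]$ with the integrand affine in $y$ for each fixed control, $J$ is a supremum of affine functions of $y$ and hence convex in $y$; consequently $X_t^*=-J_y(y_t^*,c)$ is non-increasing in $y_t^*$. Restricting to the {\bf NR}-interval $c^{-\gamma}b_\alpha<y<c^{-\gamma}b_\beta$, equivalently $1<z<b_\beta/b_\alpha$, the map $z\mapsto X_t^*/c=:g(z)$ is monotone, and I would define $\bar x:=g(1)$ and $\underline x:=g(b_\beta/b_\alpha)$ — the explicit numbers obtained by setting $z=1$ and $z=b_\beta/b_\alpha$ in \eqref{optimal-wealth-NR} and dividing by $c$. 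Monotonicity then gives $g\big((1,b_\beta/b_\alpha)\big)=(\underline x,\bar x)$ with $\underline x<\bar x$, and their positivity follows from the signs of $m_1,m_2,D_1,D_2$ recorded in Appendix \ref{sec:Appendix:A} together with Assumption \ref{assumption-alpha}.

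Finally, the ``if and only if'' combines this bijection with Proposition \ref{pro:consumption}: the regulators $c_t^{*,+},c_t^{*,-}$ stay constant — hence $c_t^*=c$ — exactly while $y_t^*$ remains in the open interval $(c^{-\gamma}b_\alpha,c^{-\gamma}b_\beta)$, and they first increase when $y_t^*$ reaches $c^{-\gamma}b_\alpha$ or $c^{-\gamma}b_\beta$. Transporting these barrier-hitting events through the monotone map $g$ converts the $y$-barriers into the wealth barriers $c\bar x$ and $c\underline x$, yielding $c_t^*=c\iff \underline x<X_t^*/c<\bar x$. The main obstacle is to make this equivalence sharp, i.e. to upgrade the abstract convexity $J_{yy}\ge0$ to strict monotonicity of $g$ on the whole {\bf NR}-interval. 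I would secure this either from the single-valued, strictly decreasing wealth-to-shadow-price correspondence built in the martingale-duality argument, or by a direct sign check that $g'(z)=-\big(D_1 m_1(m_1-1)(1-\gamma+\gamma m_1)^{-1}z^{m_1-2}+D_2 m_2(m_2-1)(1-\gamma+\gamma m_2)^{-1}z^{m_2-2}\big)/b_\alpha$ keeps a fixed sign on $[1,b_\beta/b_\alpha]$, using $m_1>1>0>m_2$ and the signs of $D_1,D_2$.
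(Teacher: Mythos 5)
Your proposal is correct and follows essentially the same route as the paper's own proof in Appendix \ref{sec:Append:D}: the first-order condition $X=-J_y(y^*,c)$ from the duality relation propagated by time-consistency, explicit differentiation of the {\bf NR}-branch of $J$, definition of $\underline{x},\bar{x}$ as the values of the wealth-to-shadow-price map at $b_\beta$ and $b_\alpha$, and strict monotonicity of that map verified from the signs of $D_1,D_2,m_1,m_2$ and $1-\gamma+\gamma m_i$. (Incidentally, your stated direction — the map is decreasing in $y$ — is the correct one; the paper's text says ``increasing,'' which is a typo given its own assignment $\underline{x}=\mathcal{X}(b_\beta)<\bar{x}=\mathcal{X}(b_\alpha)$.)
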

\begin{proof}
The proof is given in Appendix \ref{sec:Append:D}.
\end{proof}

Suppose $c_s^*=c$ is the optimal consumption level at a certain
time $s$ as in Theorem \ref{thm:wealth}. Equation
\eqref{optimal-wealth-NR} explicitly describes the optimal wealth
process during the times when the shadow value of wealth,
$y_t$-process stays inside the $\bf{NR}$-region for $t \geq s$.

The consumption level increases according to the first equation in
\eqref{eq:optimal_consumption} if and only if $y_t$ hits the lower
threshold $u'(c_s) b_{\alpha}$ or touches the $\bf{IR}$-region. The
latter part of Theorem \ref{thm:wealth} tells that this condition is
equivalent to the case when $X_t^*$ hits the upper threshold
$c \bar{X}$. Conversely, the consumption level decreases according to the second equation in
\eqref{eq:optimal_consumption} if and only if $y_t$ hits the upper
threshold $u'(c_s) b_{\beta}$ or touches the $\bf{DR}$-region. This
condition is equivalent to the case when $X_t^*$ hits the lower
threshold $c \underline{x}$.

	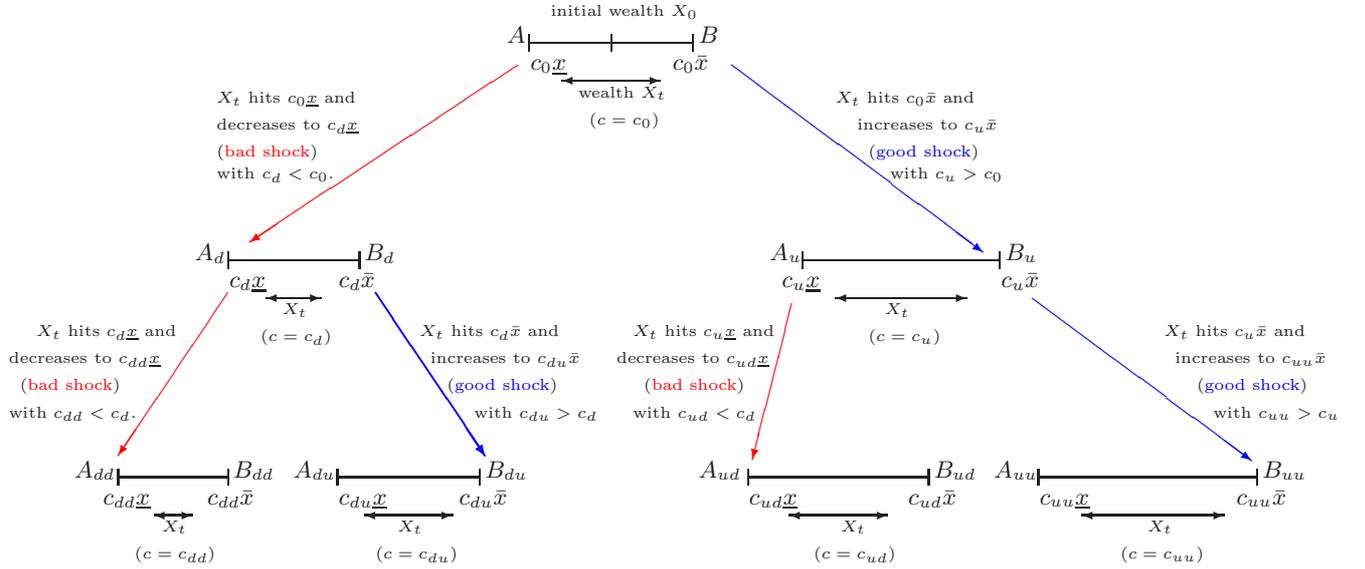
\begin{figure}[ht]
	\setlength{\unitlength}{0.72cm}
	\begin{picture}(5,11)
	\put(9, 10){\line(1,0){3}}
	\put(8.6, 10){\footnotesize $A$}
	\put(12.1, 10){\footnotesize $B$}	
	\put(9, 9.85){\line(0,1){0.3}}
	\put(10.5, 9.85){\line(0,1){0.3}}
	\put(12, 9.85){\line(0,1){0.3}}
	\put(9.4,10.5){\tiny initial wealth $X_0$}
    \put(10.1,8.5){\tiny $(c=c_0)$}
	\put(9.9,9){\tiny wealth $X_t$}    	
	\put(10.5,9.3){\vector(-1,0){0.9}}
	\put(10.5,9.3){\vector(1,0){0.9}}
	\put(8.9,9.5){\footnotesize \;$c_0\underline{x}$}    	
	\put(11.5,9.5){\footnotesize \;$c_0\bar{x}$}
	
	\put(3.5, 6){\line(1,0){2.4}}
	\put(3.5, 5.85){\line(0,1){0.3}}
	\put(5.9,5.85){\line(0,1){0.3}}
	\put(4.5,5){\tiny $X_t$}
	\put(4.1,4.5){\tiny $(c=c_d)$}  	
	\put(4.7,5.3){\vector(-1,0){0.5}}
	\put(4.7,5.3){\vector(1,0){0.5}}
	\put(3.4,5.5){\footnotesize \;$c_d\underline{x}$}
	\put(2.9,6){\footnotesize $A_d$}
	\put(6,6){\footnotesize $B_d$}	   			   	
	\put(5.4,5.5){\footnotesize \;$c_d\bar{x}$} 	
	
	\put(14., 6){\line(1,0){3.6}}
	\put(14., 5.85){\line(0,1){0.3}}
	\put(17.6, 5.85){\line(0,1){0.3}}
	\put(15.5,5){\tiny $X_t$}    	
    \put(15.2,4.5){\tiny $(c=c_u)$}    	
	\put(15.8,5.3){\vector(-1,0){1.2}}
	\put(15.8,5.3){\vector(1,0){1.2}}
	\put(13.5,5.5){\footnotesize \;$c_u\underline{x}$}    	
	\put(17.5,5.5){\footnotesize \;$c_u\bar{x}$}
	\put(13.4, 6){\footnotesize $A_{u}$}
	\put(17.7, 6){\footnotesize $B_{u}$}
	
	\put(1.5, 2){\line(1,0){2}}
	\put(1.5, 1.85){\line(0,1){0.3}}
	\put(3.5,1.85){\line(0,1){0.3}}
	\put(2.3,1){\tiny $X_t$}
	\put(1.8,0.5){\tiny $(c=c_{dd})$}    		
	\put(2.5,1.3){\vector(-1,0){0.35}}
	\put(2.5,1.3){\vector(1,0){0.35}}
	\put(1.1,1.5){\footnotesize \;$c_{dd}\underline{x}$}
	\put(0.7,2){\footnotesize $A_{dd}$}	
	\put(3.6,2){\footnotesize $B_{dd}$}	
	\put(3,1.5){\footnotesize \;$c_{dd}\bar{x}$}

	\put(5.5, 2){\line(1,0){2.6}}
	\put(5.5, 1.85){\line(0,1){0.3}}
	\put(8.1,1.85){\line(0,1){0.3}}
	\put(6.6,1){\tiny $X_t$}
	\put(6.2,0.5){\tiny $(c=c_{du})$} 	
	\put(6.8,1.3){\vector(-1,0){0.8}}
	\put(6.8,1.3){\vector(1,0){0.8}}
	\put(5.4,1.5){\footnotesize \;$c_{du}\underline{x}$}
    \put(4.7,2){\footnotesize $A_{du}$}	
    \put(8.2,2){\footnotesize $B_{du}$}			
	\put(7.6,1.5){\footnotesize \;$c_{du}\bar{x}$}
	
	\put(13., 2){\line(1,0){3.3}}
	\put(13., 1.85){\line(0,1){0.3}}
	\put(16.3,1.85){\line(0,1){0.3}}
	\put(14.7,1){\tiny $X_t$}
	\put(14.2,0.5){\tiny $(c=c_{ud})$} 		
	\put(14.65,1.3){\vector(-1,0){0.9}}
	\put(14.65,1.3){\vector(1,0){0.9}}
	\put(12.9,1.5){\footnotesize \;$c_{ud}\underline{x}$}    	
	\put(15.8,1.5){\footnotesize \;$c_{ud}\bar{x}$}
	\put(12.1,2){\footnotesize $A_{ud}$}	
	\put(16.4,2){\footnotesize $B_{ud}$}	
	
	\put(18.3, 2){\line(1,0){4}}
	\put(18.3, 1.85){\line(0,1){0.3}}
	\put(22.3,1.85){\line(0,1){0.3}}
	\put(20.3,1){\tiny $X_t$}    	
	\put(19.8,0.5){\tiny $(c=c_{uu})$} 	
	\put(20.4,1.3){\vector(-1,0){1.3}}
	\put(20.4,1.3){\vector(1,0){1.3}}
	\put(18.2,1.5){\footnotesize \;$c_{uu}\underline{x}$}    	
	\put(21.8,1.5){\footnotesize \;$c_{uu}\bar{x}$}
	
	\put(17.5,2){\footnotesize $A_{uu}$}	
	\put(22.4,2){\footnotesize $B_{uu}$}
	
	\textcolor{red}{\put(8.8, 9.6){\vector(-3,-2){4.9}}}
	\put(3.3,8.9){\tiny $X_t$ hits $c_0\underline{x} $ and}
	\put(3.3,8.4){\tiny decreases to $c_d \underline{x}$ }
	\put(3.3,7.9){\tiny (\textcolor{red}{bad shock})}
	\put(3.3,7.5){\tiny with $ c_d < c_0$.}    	
	
	\textcolor{red}{\put(3.5,5.4){\vector(-2,-3){2}}}
	\put(-0.0, 4.6){\tiny $X_t$ hits $c_d\underline{x}$ and}
	\put(-0.5, 4.1){\tiny decreases to $c_{dd}\underline{x}$ }
	\put(-0.3, 3.6){\tiny  (\textcolor{red}{bad shock})}
	\textcolor{blue}{\put(6.2,5.4){\vector(2, -3){2}}}
	\put(-0.5, 3.1){\tiny with $c_{dd} < c_d$.}
	
	\put(7.0, 4.6){\tiny $X_t$ hits $c_d\bar{x}$ and}
	\put(7.2, 4.1){\tiny increases to $c_{du}\bar{x}$ }
	\put(7.5, 3.6){\tiny (\textcolor{blue}{good shock})}
	\textcolor{blue}{\put(6.2,5.4){\vector(2, -3){2}}}
	\put(8.0, 3.1){\tiny with $c_{du} > c_d$}
	
	\textcolor{red}{\put(13.8,5.2){\vector(-1,-4){0.72}}}
	\put(10.9, 4.6){\tiny $X_t$ hits $c_u\underline{x}$ and}
	\put(10.6, 4.1){\tiny decreases to $c_{ud}\underline{x}$}
	\put(11.1, 3.6){\tiny (\textcolor{red}{bad shock})}
	\textcolor{blue}{\put(6.2,5.4){\vector(2, -3){2}}}
	\put(10.9, 3.1){\tiny with $c_{ud} < c_d$}
	
	\textcolor{blue}{\put(18.2,5.3){\vector(4,-3){4.}}}
	\put(20.6, 4.6){\tiny $X_t$ hits $c_u\bar{x}$ and}
	\put(20.8, 4.1){\tiny increases to $c_{uu}\bar{x}$}
	\put(21.2, 3.6){\tiny (\textcolor{blue}{good shock})}
	\textcolor{blue}{\put(6.2,5.4){\vector(2, -3){2}}}
	\put(21.5, 3.1){\tiny with $c_{uu} > c_{u}$}
	
	\textcolor{blue}{\put(12.7,9.6){\vector(4,-3){4.6}}}
	\put(14.6,8.9){\tiny $X_t$ hits $c_0\bar{x}$ and}
	\put(15,8.4){\tiny increases to $c_u\bar{x}$}
	\put(15.2,7.9){\tiny (\textcolor{blue}{good shock})}
	\put(15.6,7.5){\tiny with $c_u> c_0$}    	
	
	\end{picture}
	\caption{A discrete example of wealth and consumption. \label{wealth-consumption-Ss}}
\end{figure}

The optimal consumption policy looks like a (s, S) policy over
wealth. Note that the two threshold levels $\underline{x}$ and $\bar{x}$ depend on
the market parameter values and $(\alpha, \beta$).  Equation \eqref{S-s-threshold} implies that
for a given consumption $c$ at a certain point of time, the thresholds are $c \underline{x}$ and $c \bar{x}$.
Once the agent's wealth reaches either one of two thresholds of wealth, the new consumption level $c'$ is determined by \eqref{eq:optimal_consumption}. Then, the two new thresholds
levels are set as $c' \underline{x}$ and $c' \bar{x}$.

Figure \ref{wealth-consumption-Ss} describes a discrete-time  version of the wealth and consumption movement. Assume that current wealth is $X_0$ and consumption is $c_0$. Initially the wealth process lies in interval $(A, B)$. Suppose $X_t$ hits $c_0 \bar{x}$ (point $B$) after several good shocks. Then, a new consumption level is set as $c_u$ and the new (s, S) band is interval $(A_u, B_u)$. At this instant, the wealth level is at point $B_u$. If a positive shock arrives again at point $B_u$, a new consumption level is immediately set as $c_{uu}$ and the new (s, S) band is interval  $(A_{uu}, B_{uu})$. If a negative shock arrives at point $B_u$, however, $X_t$ moves inside  $(A_u, B_u)$ and the consumption level stays at $c_u$ until $X_t$ hits either $c_u \underline{x}$ (point $A_u$) or $c_u \bar{x}$ (point $B_u$).

On the contrary, let us go back to the time when the wealth process lies in interval $(A, B)$. Suppose $X_t$ hits $c_0 \underline{x}$ (point $A$) after several bad shocks.  Then, a new consumption level is set as $c_d$ and the new (s, S) band is interval $(A_d, B_d)$. At this instant, the wealth level is at point $A_d$. If a negative shock arrives again at point $A_d$, a new consumption level is immediately set as $c_{dd}$ and the new (s, S) band is interval  $(A_{dd}, B_{dd})$.  If a positive shock arrives at at point $A_d$, however, $X_t$ moves inside  $(A_d, B_d)$ and the consumption level stays at $c_u$ until $X_t$ hits either $c_d \underline{x}$ (point $A_d$) or $c_d \bar{x}$ (point $B_d$).

\begin{figure}[h]
	\centering
	\subfigure[$X_t^*$]{\label{fig00b}\includegraphics[scale=0.33]{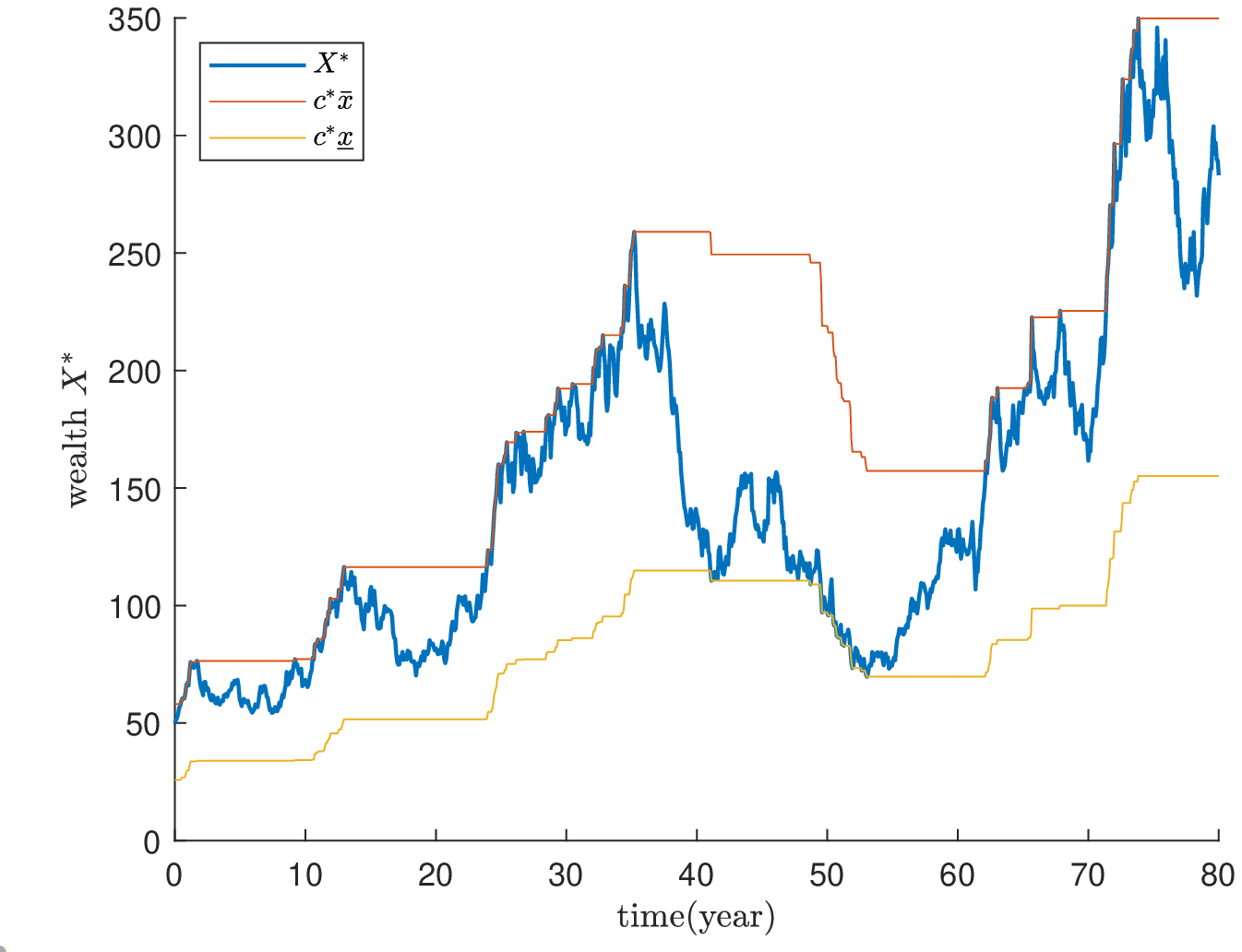}}
	\subfigure[$c_t^*$]{\label{fig00a}\includegraphics[scale=0.33]{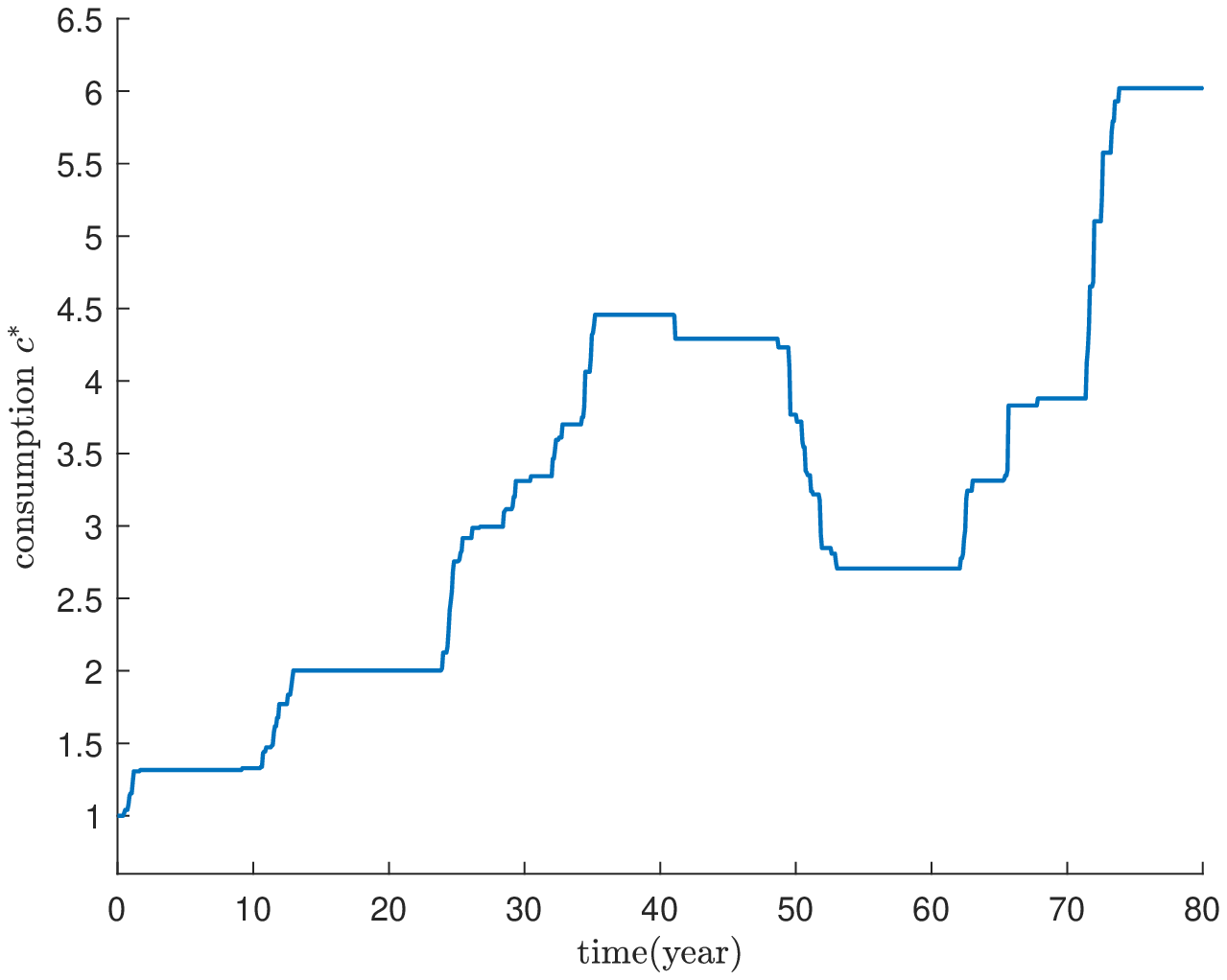}}
	\subfigure[$\pi_t^*$]{\label{fig00c}\includegraphics[scale=0.33]{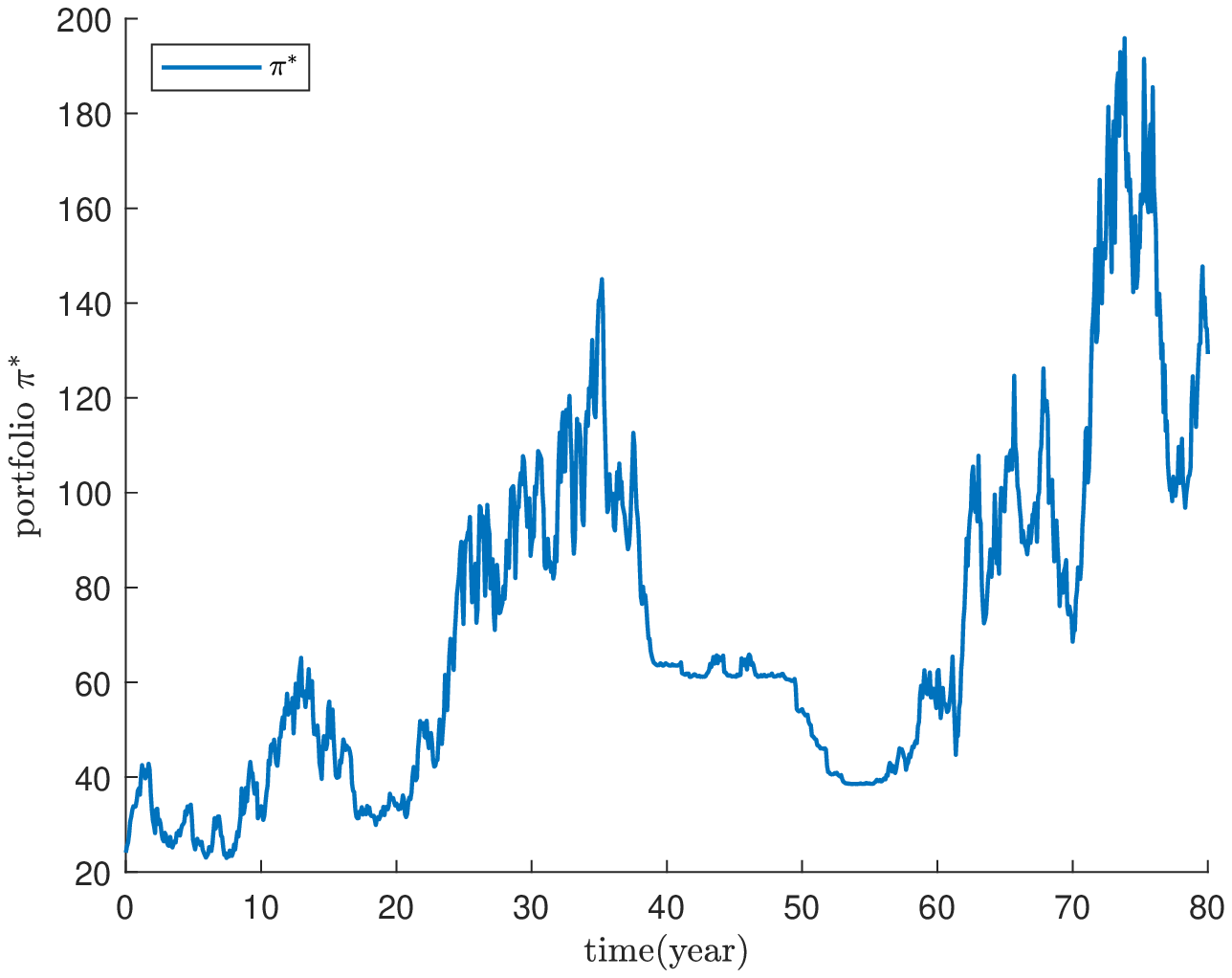}}
	\caption{\label{cp11} Simulation of optimal consumption, wealth and portfolio with $\alpha=5$, $\beta=10$. The other parameter values are as follows: $\rho=0.02$, $\mu=0.085$, $r=0.015$, $\sigma=0.25$,\;$\gamma=2$, $X_0=50$, $C_0=1$ and $T=80$. \label{fig-sim-path1}}
\end{figure}

Now let us turn to the optimal risky investment, which is given in Proposition \ref{pro:portfolio} below.
The explicit form of the optimal portfolio in is easily obtained by applying Ito's lemma to the optimal wealth process in \eqref{optimal-wealth-NR} and drawing out the corresponding term. Figure \ref{cp11}
shows a simulated path of $X_t^*$ and plots its corresponding consumption and risky portfolio.
Notice from the figure that the number of portfolio rebalances is far more than the number of consumption adjustment over time. In other words, the household keeps rebalancing risky stock holdings within the (s, S) band while consumption is adjusted only at the boundary.

\begin{pro}\label{pro:portfolio}
Let $c_s^* = c$ be the optimal consumption level at time $s$ for some constant $c$. The optimal portfolio $\pi_{t}^{*}$ is given by
\begin{equation} \label{opt-portfolio}
\pi_t^*=\dfrac{\t }{\sigma}c\left(\dfrac{D_1
m_1(m_1-1)}{(1-\gamma+\gamma m_1)b_{\alpha}}\left(\dfrac{y_t^*}{
{c}^{-\gamma} b_{\alpha}}\right)^{m_1-1}+\dfrac{D_2
m_2(m_2-1)}{(1-\gamma+\gamma m_2)b_{\alpha}}\left(\dfrac{y_t^*}{
{c}^{-\gamma} b_{\alpha}}\right)^{m_2-1}\right)
\end{equation}
for $t \geq s$ before the next consumption adjustment happens.

\end{pro}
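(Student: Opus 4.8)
The plan is to apply It\^o's lemma directly to the closed-form wealth expression \eqref{optimal-wealth-NR} and read off the risky investment from the diffusion coefficient, using the fact that the controlled wealth equation \eqref{eq:wealth} pins down $\pi_t^*$ uniquely through its $dB_t$ term. First I would record the $\mathbb{P}$-dynamics of the dual state variable. Since $y_t^* = y^* e^{\delta t}\xi_t = y^* e^{(\delta - r - \frac{1}{2}\theta^2)t - \theta B_t}$, a one-line application of It\^o to $\log y_t^*$ gives
$$
dy_t^* = (\delta - r)\,y_t^*\,dt - \theta\, y_t^*\,dB_t,
$$
which is of course consistent with the generator $\mathcal{L}$ used in the HJB equation \eqref{eq:HJB_dual_value}. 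Fixing $s$ and setting $c_s^*=c$, Theorem \ref{thm:wealth} tells us that as long as $y_t^*$ remains in the {\bf NR}-region the consumption level is frozen at $c$; hence $X_t^* = G(y_t^*)$, where $G$ denotes the right-hand side of \eqref{optimal-wealth-NR} viewed as a function of the single state $y$ with $c$ held constant.

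Next I would differentiate. The function $G$ is a constant plus a linear combination of the two power functions $y^{m_1-1}$ and $y^{m_2-1}$, so $G'(y)$ is immediate, and the product $y\,G'(y)$ reproduces exactly the same two power terms, now carrying the extra multipliers $(m_1-1)$ and $(m_2-1)$. Applying It\^o's lemma,
$$
dX_t^* = G'(y_t^*)\,dy_t^* + \tfrac{1}{2} G''(y_t^*)\,(dy_t^*)^2,
$$
and substituting the SDE for $y_t^*$, the martingale part of $dX_t^*$ equals $-\theta\, y_t^*\, G'(y_t^*)\,dB_t$. Comparing this with the diffusion coefficient $\sigma \pi_t^*$ forced by \eqref{eq:wealth}, and invoking the uniqueness of the semimartingale (Doob--Meyer) decomposition of a continuous process, I obtain $\sigma\pi_t^* = -\theta\, y_t^* G'(y_t^*)$, i.e. $\pi_t^* = -(\theta/\sigma)\,y_t^* G'(y_t^*)$. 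Inserting the computed expression for $y\,G'(y)$ then yields precisely \eqref{opt-portfolio}; the overall plus sign appears because each term of $G$ enters with a minus sign, which cancels the minus from the coefficient matching.

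The only genuinely delicate point is justifying that no singular (local-time) contribution enters the diffusion part of $dX_t^*$. This is guaranteed precisely because we operate strictly inside the open {\bf NR}-region, where Proposition \ref{pro:consumption} shows that both regulators $c_t^{*,+}$ and $c_t^{*,-}$ stay constant, so $dc_t^{*,+}=dc_t^{*,-}=0$ and the wealth moves only through the smooth diffusion of $y_t^*$; in particular $G$ is $C^2$ on the relevant domain, so the It\^o expansion above is valid with no extra terms. The claim is stated ``for $t \geq s$ before the next consumption adjustment happens,'' which is exactly the regime in which this holds. Everything else in the argument is the routine differentiation of two power functions, so I expect this localization issue to be the main (and essentially the only) obstacle, and it is resolved by the characterization of the {\bf NR}-region already established.
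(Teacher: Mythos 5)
Your proposal is correct and follows essentially the same route as the paper's Appendix E: both apply It\^o's lemma to the optimal wealth written as a function of the dual state $y_t^*$ (your $G$ is exactly $-J_y(\cdot,c)$ restricted to the \textbf{NR}-region), match the $dB_t$ coefficient against $\sigma\pi_t^*$ in the wealth dynamics \eqref{eq:wealth}, and use $dy_t^*=(\delta-r)y_t^*\,dt-\theta y_t^*\,dB_t$ to read off $\pi_t^*=\frac{\theta}{\sigma}y_t^*J_{yy}(y_t^*,c_t^*)$. Your localization argument that the regulator terms vanish inside \textbf{NR} is the same observation the paper makes ($dc_t^{*,+}=dc_t^{*,-}=0$ there), and it suffices for the statement as given.
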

\begin{proof}
The proof is given in Appendix \ref{sec:Append:E}.
\end{proof}

Based on the classical portfolio selection results, we can easily see that the optimal risky portfolio in Equation \eqref{opt-portfolio} in Proposition \ref{pro:portfolio} (a) will be decomposed as follows:
\begin{equation} \label{portfolio-decompose}
\pi_t^* = \frac{\mu -r}{\gamma \sigma^2}X_t^* - \left( \frac{\mu -r}{\gamma \sigma^2}X_t^* - \pi_t^*\right)
\qquad \mbox{or} \qquad \frac{\pi_t^*}{X_t^*} = \frac{\mu -r}{\gamma \sigma^2} - \left( \frac{\mu -r}{\gamma \sigma^2} - \frac{\pi_t^*}{X_t^*}\right) .
\end{equation}
The first term is a myopic term and the remaining term is a hedging term. Note the minus sign in front of the second term. Intuitively, the second term in  \eqref{portfolio-decompose} is the demand that crows out the myopic demand, in order to maintain the current consumption level since frequent changes of the consumption level incur high utility cost. In this sense, the second term itself is positive. As seen in Panel (b) of Figure \ref{fig-saving-portfolio-RCRRA}, the risky share is U-shaped and the ratio of the hedging demand takes the largest value at the certain point inside the (s, S) band at which the total risky share takes the minimum value, which generates interesting implications. We will discuss the implications of the U-shaped risky share over time in Section  \ref{sec:imp-investment} (e).

\begin{figure}[h]
	\centering
	\subfigure[saving \& portfolio ]{\includegraphics[scale=0.3]{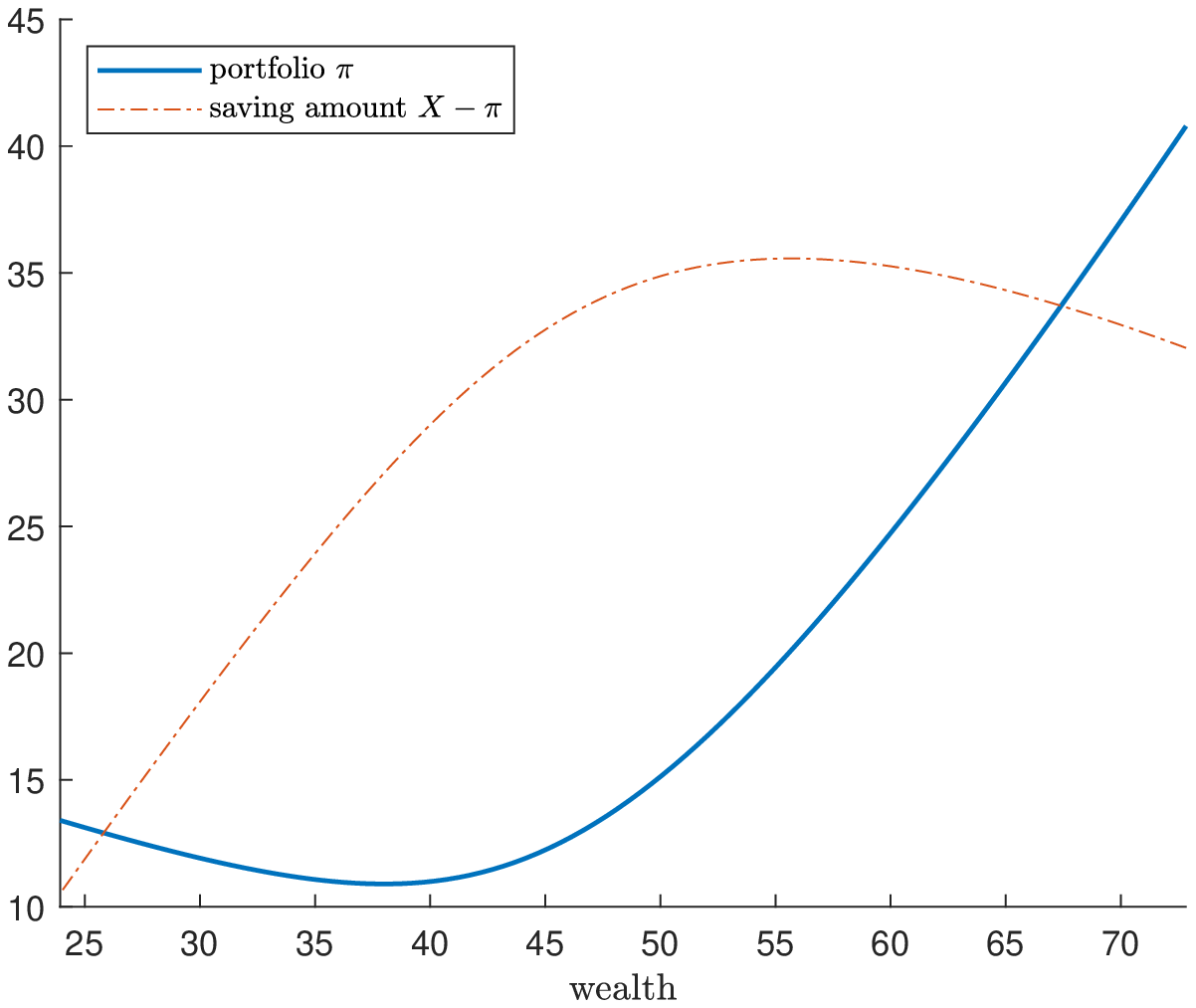}}
	\subfigure[$\frac{\mbox{saving}}{\mbox{wealth}}$ \& $\frac{\mbox{portolio}}{\mbox{wealth}}$]{\includegraphics[scale=0.3]{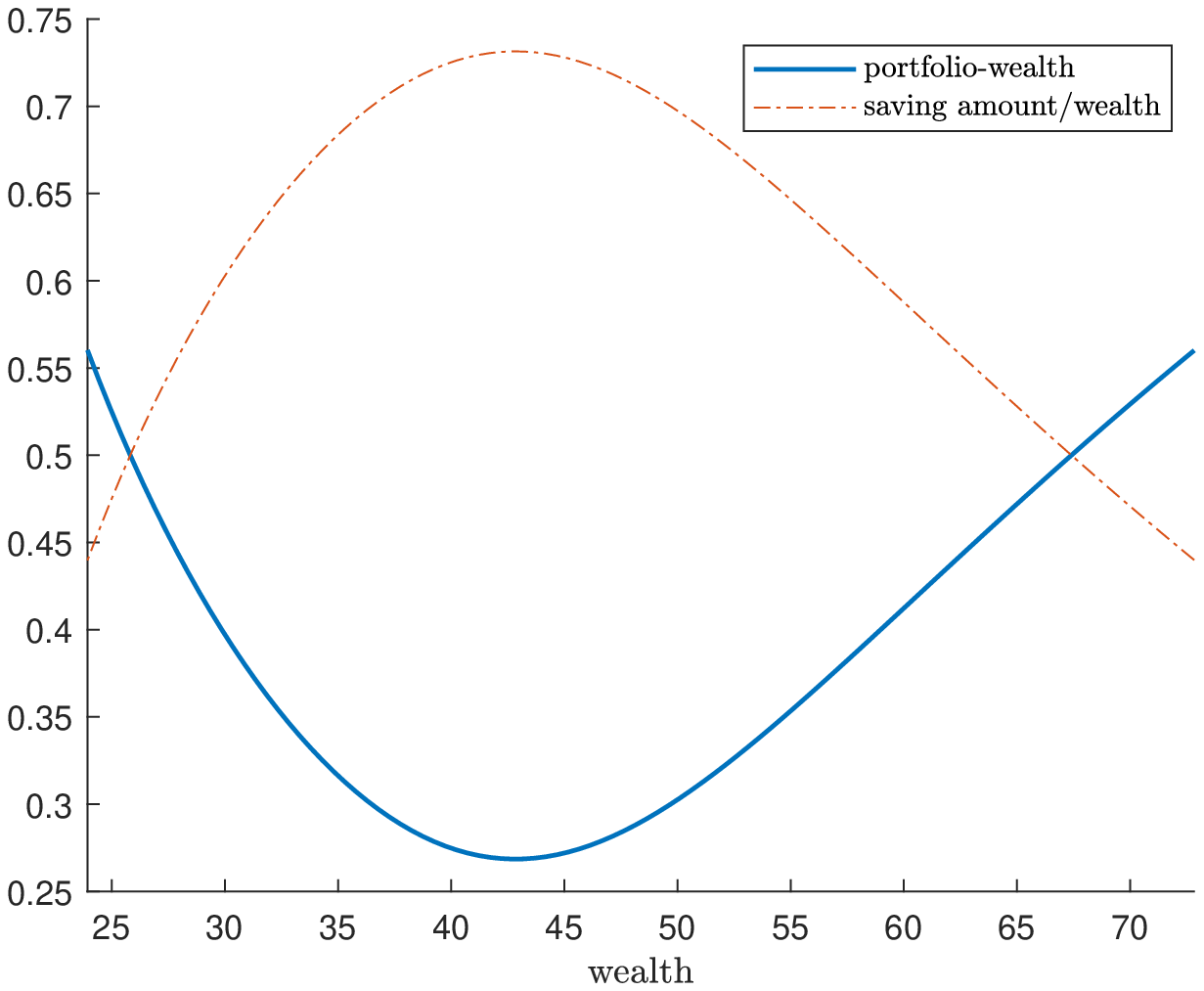}}
	\subfigure[RCRRA]{\includegraphics[scale=0.3]{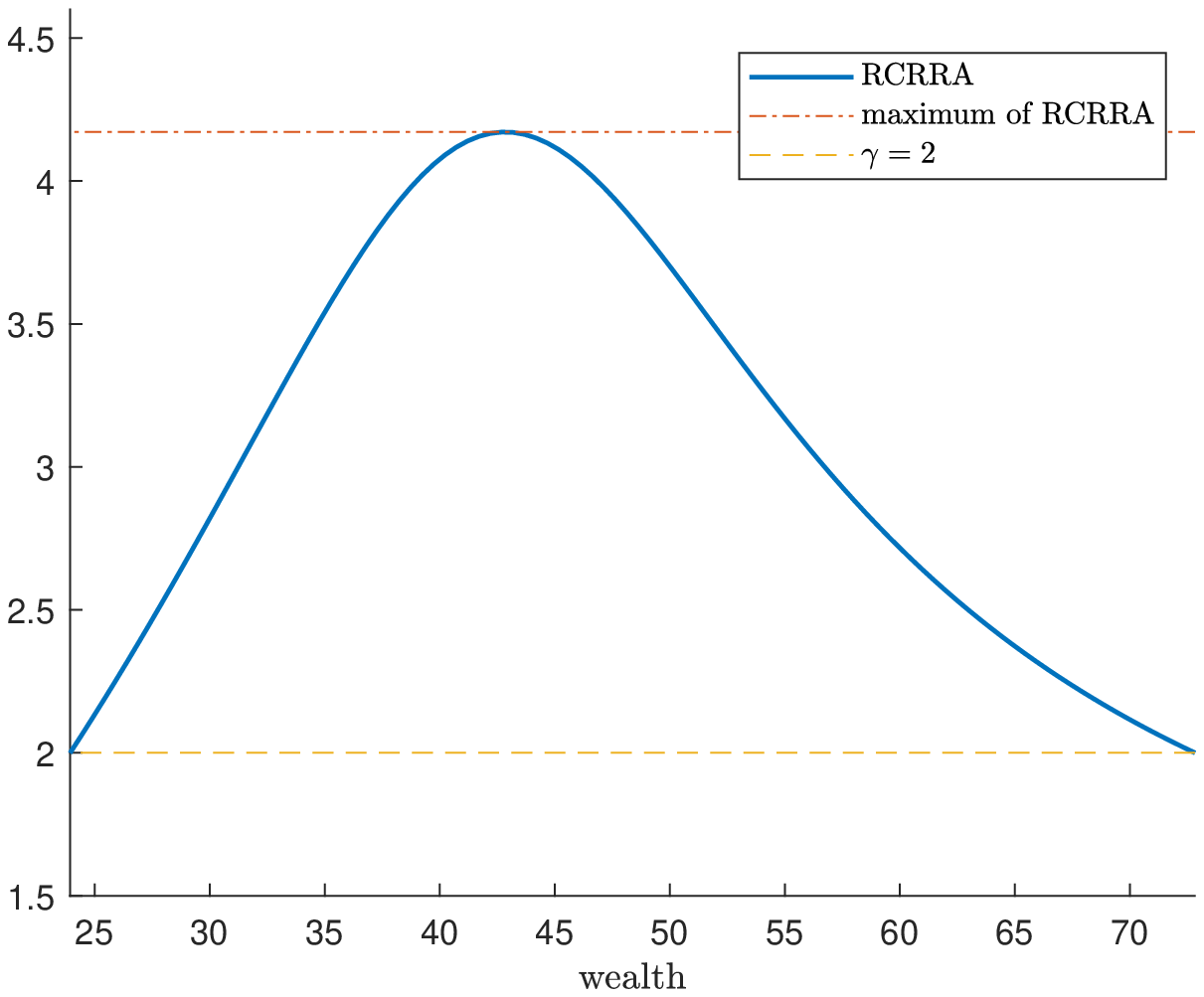}}
	\caption{Parameter value are as follows : $\delta=0.02,r=0.015,\mu=0.085,\sigma=0.25,\gamma=2,  X=50, c=1,\alpha=5, \beta=100$. In this case, $\underline{x}=23.93$ and $\bar{x}=72.82$. In particular, $\hat{X}=42.70$ in panel (c). The maximum RCRRA value is 4.1712.
	\label{fig-saving-portfolio-RCRRA}
	}
\end{figure}

While decomposition \eqref{portfolio-decompose} is intuitive, putting  \eqref{opt-portfolio} into  \eqref{portfolio-decompose} is complicated and thus it is not much informative. In order to better understand risky investment pattern, we define the {\it revealed coefficient of relative risk aversion (RCRRA)} as follows:
\begin{equation}\label{RCRRA00}
    \mbox{RCRRA}(t) \equiv \dfrac{\mu -r}{\sigma^2} \dfrac{X_t^*}{\pi^{*}_{t}}.
\end{equation}
The RCRRA is the level of relative risk aversion inferred from the
agent's portfolio allocation at time $t$ by outsiders who does not
know the agent's actual risk aversion. If the agent is unconstrained
($\alpha =\beta = 0$, i.e.,  the Merton case), the RCRRA is always the same as the
agent's true relative risk aversion $\gamma$. However, in general
$\mbox{RCRRA}$ is time-varying. In fact, it has maximum and
minimum values as shown in Theorem \ref{thm:RCRRA}.
The next theorem provides the properties of RCRRA and confirms the above intuition regarding the hedging demand.

\begin{thm}\label{thm:RCRRA}\begin{itemize}  \item[(a)] $  \mbox{RCRRA}(t)  \geq \gamma$ for all $t \geq 0$.
\item[(b)] Pick any $s \geq 0$. Consider the wealth process  $X_t^* \in (c\underline{x}, c \bar{x})$ for $t \geq s$, where $c_s^* = c$ is the optimal consumption level at time $s$ for some constant $c$.  Then, RCRRA$(t)$ approaches $\gamma$ as $X_t^*$ approaches either $c\underline{x}$ or $c\bar{x}$. Moreover, there exist $\hat{X} \in  (c\underline{x},  c\bar{x})$ such that RCRRA attains the maximum at $X_t = \hat{X}$. RCRRA decreases in wealth for $X_t^* \in (c\underline{x}, \hat{X})$ and increases in wealth for $X_t^* \in (\hat{X}, c\bar{x})$.
\end{itemize}

\end{thm}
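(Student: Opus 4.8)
The plan is to collapse RCRRA to a function of the single scaled variable
$z:=y_t^*/(c^{-\gamma}b_\alpha)$, which ranges over $(1,\bar z)$ with $\bar z:=b_\beta/b_\alpha>1$ as the state moves through the no‑adjustment band (recall ${\bf NR}=\{c^{-\gamma}b_\alpha<y<c^{-\gamma}b_\beta\}$ and $u'(c)=c^{-\gamma}$). Substituting the explicit forms of $X_t^*$ (Theorem \ref{thm:wealth}) and $\pi_t^*$ (Proposition \ref{pro:portfolio}), the factor $\theta/\sigma$ cancels and I get
\[
\mathrm{RCRRA}(t)=-\frac{J_y}{yJ_{yy}}=\frac{N(z)}{M(z)},\qquad \mathrm{RCRRA}(t)-\gamma=\frac{H(z)}{M(z)},
\]
where $N(z)=\tfrac1r-A_1z^{m_1-1}-A_2z^{m_2-1}=X_t^*/c$, $\;M(z)=A_1(m_1-1)z^{m_1-1}+A_2(m_2-1)z^{m_2-1}=\sigma\pi_t^*/(\theta c)>0$, $\;A_i=\tfrac{D_im_i}{(1-\gamma+\gamma m_i)b_\alpha}$, and the excess function is $H:=N-\gamma M=\tfrac1r-\tilde a z^{m_1-1}-\tilde b z^{m_2-1}$ with $\tilde a=D_1m_1/b_\alpha$, $\tilde b=D_2m_2/b_\alpha$ (the identity $A_i(1-\gamma+\gamma m_i)=D_im_i/b_\alpha$ is used). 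Since $m_2<0<1<m_1$ the two exponents satisfy $m_1-1>0>m_2-1$, and $H$ is a three‑term generalized polynomial. Everything reduces to understanding $H$ on $[1,\bar z]$.

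First I would pin down the two boundary zeros of $H$. Plugging the formulas for $D_1,D_2$ into $H(1)$ shows the $(\alpha-\tfrac1\delta)$ terms cancel and $D_1m_1+D_2m_2=b_\alpha/r$, so $H(1)=0$, i.e. $\tilde a+\tilde b=1/r$. The lower‑boundary identity $H(\bar z)=0$ I would obtain either by direct substitution using the defining equation $f(w)=0$, or—conceptually—by the same argument that on ${\bf IR}$ the condition $J_c=\alpha u'$ forces $J_y(y,c)=J_y(y,I(y/b_\alpha))$, whence $X=-J_y\propto y^{-1/\gamma}$ and $\mathrm{RCRRA}\equiv\gamma$ there; the identical computation on ${\bf DR}$ (using $J_c=-\beta u'$) gives $\mathrm{RCRRA}\equiv\gamma$, which is inherited at $z=\bar z$. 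Subtracting the two boundary relations gives $\tilde a(\bar z^{m_1-1}-1)+\tilde b(\bar z^{m_2-1}-1)=0$ with $\bar z^{m_1-1}-1>0$ and $\bar z^{m_2-1}-1<0$, forcing $\tilde a,\tilde b$ to share a sign; together with $\tilde a+\tilde b=1/r>0$ this yields $\tilde a,\tilde b>0$. Then $H'(z)=0$ has the unique positive root $z_\ast$ with $z_\ast^{m_1-m_2}=-\tilde b(m_2-1)/(\tilde a(m_1-1))$, and since $H\to-\infty$ as $z\to0^+$ and as $z\to\infty$, $H$ is unimodal; with $H(1)=H(\bar z)=0$ this gives $H>0$ strictly on $(1,\bar z)$ and $H'(1)>0>H'(\bar z)$. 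Hence $\mathrm{RCRRA}=\gamma+H/M>\gamma$ on ${\bf NR}$ and $=\gamma$ on ${\bf IR}\cup{\bf DR}$; as the state lies in ${\bf NR}$ for a.e.\ $t$, this proves (a), and the limits $H(1)=H(\bar z)=0$ give $\mathrm{RCRRA}\to\gamma$ as $X_t^*\to c\bar x$ ($z\to1$) or $X_t^*\to c\underline{x}$ ($z\to\bar z$), the boundary claim in (b).

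For the interior extremum I would differentiate $\mathrm{RCRRA}=N/M$. Using the identity $N'=-M/z$, a direct expansion collapses the numerator of $\mathrm{RCRRA}'$ so that
\[
\mathrm{RCRRA}'(z)=\frac{\Psi(z)}{z\,M(z)^2},\qquad \Psi(z)=(m_1-m_2)^2A_1A_2\,z^{m_1+m_2-2}-\tfrac{(m_1-1)^2A_1}{r}\,z^{m_1-1}-\tfrac{(m_2-1)^2A_2}{r}\,z^{m_2-1},
\]
so $\operatorname{sign}\mathrm{RCRRA}'=\operatorname{sign}\Psi$. Now $\Psi$ is a nontrivial combination of the three distinct powers $z^{m_1+m_2-2},z^{m_1-1},z^{m_2-1}$; the substitution $z=e^{t}$ turns it into an exponential polynomial with three distinct frequencies, which has at most two real zeros, so $\Psi$ has at most two zeros on $(0,\infty)$. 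Since $\mathrm{RCRRA}'(1)=H'(1)/M(1)>0$ and $\mathrm{RCRRA}'(\bar z)=H'(\bar z)/M(\bar z)<0$, $\Psi$ changes sign on $(1,\bar z)$ and therefore has an odd number of zeros there; being at most two in all, it has exactly one, $\hat z\in(1,\bar z)$, with $\mathrm{RCRRA}'>0$ on $(1,\hat z)$ and $<0$ on $(\hat z,\bar z)$. Thus $\mathrm{RCRRA}$ has a unique interior maximum at $\hat z$. Finally, because $N'=-M/z<0$, the map $z\mapsto X=cN(z)$ is a strictly decreasing bijection of $[1,\bar z]$ onto $[c\underline{x},c\bar x]$; setting $\hat X:=cN(\hat z)$ transports the single sign change of $\mathrm{RCRRA}'$ to the wealth variable and yields the hump‑shaped profile with unique interior maximum at $\hat X\in(c\underline{x},c\bar x)$, as claimed.

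The main obstacle I anticipate is twofold. The first is the lower‑boundary identity $H(\bar z)=0$: this is the single place where the specific value $\bar z=b_\beta/b_\alpha$ and the defining equation $f(w)=0$ (equivalently the smooth fit of Proposition \ref{pro:solution_dual}) must enter, and it is the step most likely to involve heavy algebra. The second is the uniqueness of the interior critical point, which is exactly where a naive derivative computation bogs down in five power terms; the clean resolution is to reduce $\mathrm{RCRRA}'$ to the three‑term generalized polynomial $\Psi$ and invoke the Chebyshev/exponential‑polynomial bound of at most two zeros, after which the endpoint signs $\mathrm{RCRRA}'(1)>0>\mathrm{RCRRA}'(\bar z)$ force exactly one.
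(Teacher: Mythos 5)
Your proposal is correct and follows essentially the same route as the paper's proof in Appendix \ref{sec:Appen_F}: both pass to the scaled dual variable, obtain the boundary values $\mathrm{RCRRA}=\gamma$ from the smooth-pasting conditions $H'(b_\alpha)=H'(b_\beta)=0$, and collapse the sign of the relevant derivative to a three-term generalized power function (your $\Psi$ is, up to a positive factor and working with $\mathrm{RCRRA}=N/M$ rather than its reciprocal $G=\gamma\sigma\pi^*/(\theta X^*)$, the paper's $\mathcal{G}$). The only substantive difference is the final uniqueness step --- the paper checks that $\mathcal{G}$ is strictly increasing term by term, while you invoke the Descartes-type bound of at most two positive zeros for a three-exponent polynomial together with the endpoint signs --- and your anticipated difficulty with the identity at $\bar z=b_\beta/b_\alpha$ is unnecessary, since it is exactly the smooth-pasting condition at $b_\beta$ already built into Proposition \ref{pro:solution_VI_H}.
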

\begin{proof}
The proof is given in Appendix \ref{sec:Appen_F}.
\end{proof}

Panle (c) of Figure \ref{fig-saving-portfolio-RCRRA} plots a typical RCRRA as a function of wealth. RCRRA is hump-shaped, which is opposite to the U-shaped risky share in Panel (b). It attains the minimum value $\gamma$ at the two ends of interval  $[c\underline{x}, c \bar{x}]$ and the maximum value at $\hat{X} \in (c\underline{x}, c \bar{x})$. As described in Theorem  \ref{thm:RCRRA}(b), RCRRA increases in wealth when the wealth level is smaller than $\hat{X}$ and it decreases in wealth  when the wealth level is greater than $\hat{X}$.

\section{Risk Attitude and Risky Share \label{sec:imp-investment}}

This section investigate the properties of the optimal risky investment policy obtained in Section  \ref{sec:optimal-strategies1} in detail.

\subsection{The consumption partial irreversibility model can explain the puzzle why the risky share of households is low \label{sec:why-risky-share-low}}

A puzzle in the classical dynamic portfolio selection literature is that while households usually hold $ 6 - 20\%$ in equity (conditional on participation, up to 40\%), standard models predicts much higher values. For example,  see the first row of Table \ref{ratio-comparison} that is the Merton case with CRRA preference with about $7\%$ risk premium and $25\%$ volatility. The risky share is  fairly high such as 124\% and 75\% when $\gamma = 0.9$ and $1.5$, respectively, while these values fall into the reasonable range of risk aversion estimated by the recent literature (see the first paragraph of Section  \ref{sec:reconcile-gap}).
   \begin{table}[h]
    \centering
    \label{tab:15}
    \begin{footnotesize}
    \begin{tabular}{c|ccccc}
        \hline\hline
        & $\gamma = 0.9$ &$\gamma = 1.5$ & $\gamma =3.5$    & $\gamma = 6$ &$\gamma = 10$ \\
        \hline \hline
      $(\alpha, \beta) =(0,0)$     & 124\%  & 75\%  & 32\%   &  19\%   & 11\%    \\
        \hline
      $(\alpha, \beta) =(5,10)$    & 100\% -- 124\%  & 55\% -- 75\%  & 24\% -- 32\% & 11\% -- 19\% & 9\% -- 11\%          \\
         \hline\hline
      $(\alpha, \beta) =(25,100)$  & 48\% -- 124\%   & 27\% -- 75\% & 14\% -- 32\% & 9\% -- 19\% & 6\% -- 11\%          \\
         \hline\hline
      $(\alpha, \beta) =(49,100)$  & 15\% -- 124\%   & 9\% -- 75\%  & 6\% -- 32\%  &  4\% -- 19\% & 2\% -- 11\%          \\
         \hline\hline
    \end{tabular}
\end{footnotesize}
    \caption{The risky share for different values of $(\alpha, \beta)'s$: the first row corresponds to the standard Merton case.  The other parameters values are $\delta=0.02,r=0.015,\mu=0.085$, and $\sigma=0.25$. \label{ratio-comparison}}
\end{table}

The optimal risky share in our case is not constant, but is U-shaped (or V-shaped) (see Panel (b) in Figure \ref{fig-saving-portfolio-RCRRA} or Panels (b) and (d) in Figure \ref{RCCRA_riskyshare}). The maximum value of the ratio of risky asset holdings is the same as that of the Merton case only when the wealth process hits the $(s, S)$  boundary. But, it is a measure zero event in a sample path. In most of times, the risky share is strictly smaller as seen in Figure \ref{fig-saving-portfolio-RCRRA}(b). As we will discuss in Section \ref{sec:RCRRA-large-when}, the ratio of risky asset holdings tends to be close to the minimum levels in times when there are moderate shocks and the wealth fluctuation is also moderate (equivalently, the RCRRA tends to be high for those times). For example, a household with $\gamma = 1.5$ and $(\alpha, \beta ) = (49,100)$ may keep the risky share close to 9\% for a long time if only moderate good and bad shocks alternates in the market for the same period time. The ratio gradually increases as the wealth process gradually increases since the equity premium is positive. However, as mentioned before, 75\% ratio happens fairly infrequently and the ratio should be much lower than 75\% for most of times. This means that the utility cost model can generate the household portfolio holdings consistent with the data from a plausible level of risk aversion (on which will be discussed in Section  \ref{sec:reconcile-gap} below).

\subsection{Time-varying risk attitude in our model can reconcile the gap among the risk aversion parameter values widely used in different literature \label{sec:reconcile-gap}}

There is a vast literature on measuring risk aversion. The most widely accepted values are between 0.7 and 2. More recent literature tends to argue that risk aversion is close to 1 or even less than 1 (For example, see \cite{C2006}, \cite{LMN2008}, \cite{BT2012}, and \cite{GH2015}).

\begin{table}[h]
    \centering
    \begin{tabular}{c|ccccc}
        \hline\hline
       Actual Risk aversion & $\gamma = 0.9$ &$\gamma = 1.05$ & $\gamma = 1.15$    & $\gamma = 1.3$ &$\gamma = 1.5$ \\
        \hline \hline
        ($\alpha$, $\beta$)     &  (40,10000) & (40,6000) & (40,4500) & (40,3200) & (40, 2200) \\
        \hline
        ($\alpha$, $\beta$)     &  (45,5000) & (45,3000) & (45,2200) & (45,1500) & (45, 1100) \\
         \hline
        ($\alpha$, $\beta$)     &  (49,1000) & (49,500) & (49,400) & (49,260) & (49, 180) \\
         \hline\hline
    \end{tabular}
    \caption{Actual risk aversion and the calibrated $(\alpha, \beta)'s$ to generate RCRRA with maximum 13: The other parameters values are $\delta=0.02,r=0.015,\mu=0.085,\sigma=0.25,  X=50$, and $c=1$. \label{calibration-RCRRA}}
\end{table}

However, most of asset pricing literature takes the level of risk aversion as about 10 or higher for the calibration analysis (e.g., \citet{BKY}, and \cite{CGM}). If we consider the investment aspect of these asset pricing models, it is related to the point made in Section \ref{sec:why-risky-share-low} in the sense that $\gamma =10$ results in the risky share consistent with the empirical observation for households' stock holdings (see the column for $\gamma =10$ when $(\alpha, \beta) = (0, 0)$ in Table \ref{ratio-comparison}).

In summary, the estimated results directly tell that risk aversion is small around unity while the calibration exercise in asset pricing indicates that the plausible level of risk aversion should be much higher. Our model can reconcile the gap in these two lines of literature. Table \ref{calibration-RCRRA} shows our numerical exercises. It calibrates $(\alpha, \beta)$ values that generate 13 as the maximum RCRRA for each risk aversion around unity with other widely accepted market parameter values given in the table.

Note that the conventional habit model can also generate time-varying risk aversion. However, it requires to define a internal or external habit stock process, which are usually ad hoc. Our model can perform a precise calibration analysis that can fit the maximum and minimum values of risk aversion.

\begin{figure}[h]
	\centering
    \subfigure[$\alpha=5$]{\includegraphics[scale=0.4]{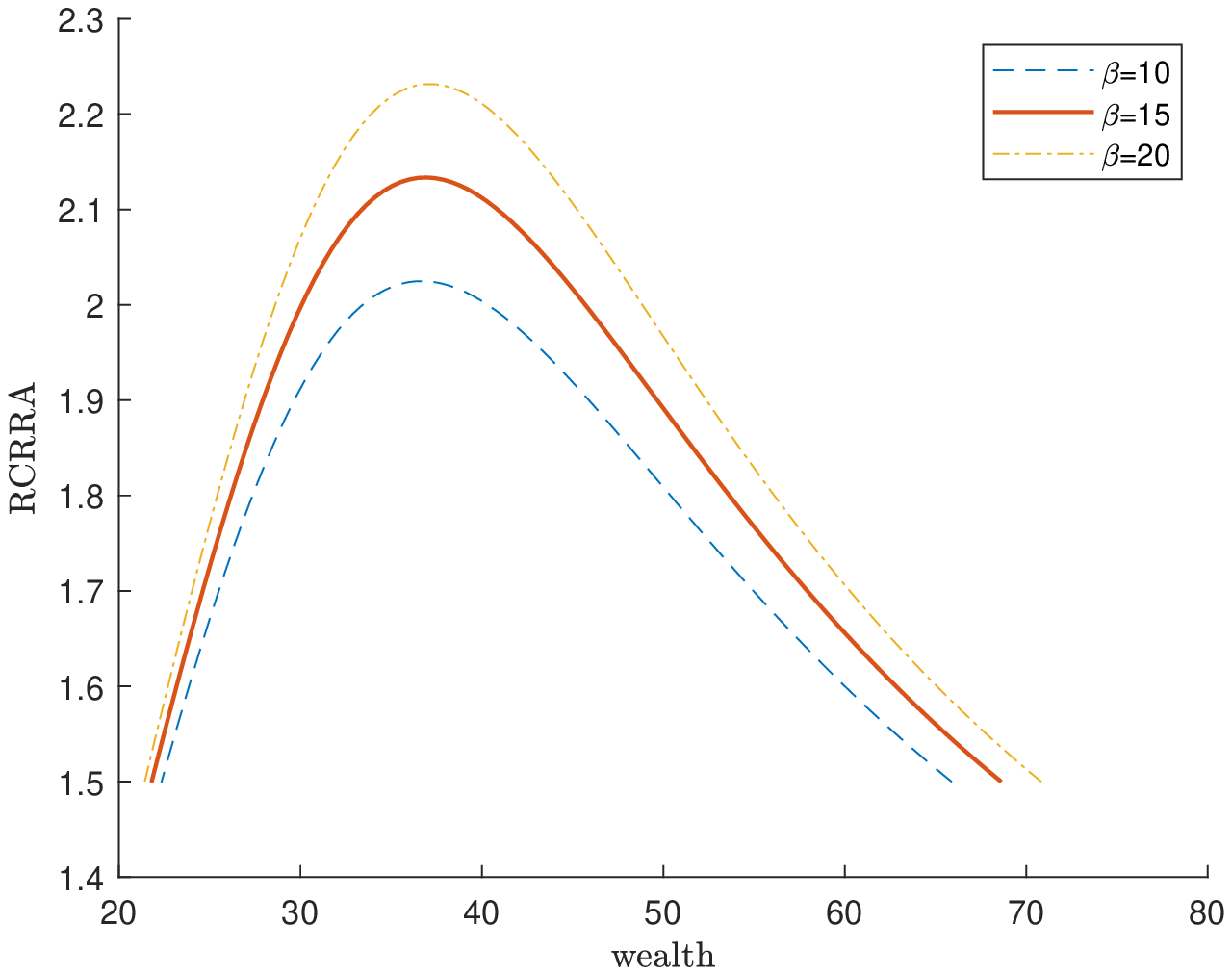}}
	\subfigure[$\alpha=5$]{\includegraphics[scale=0.4]{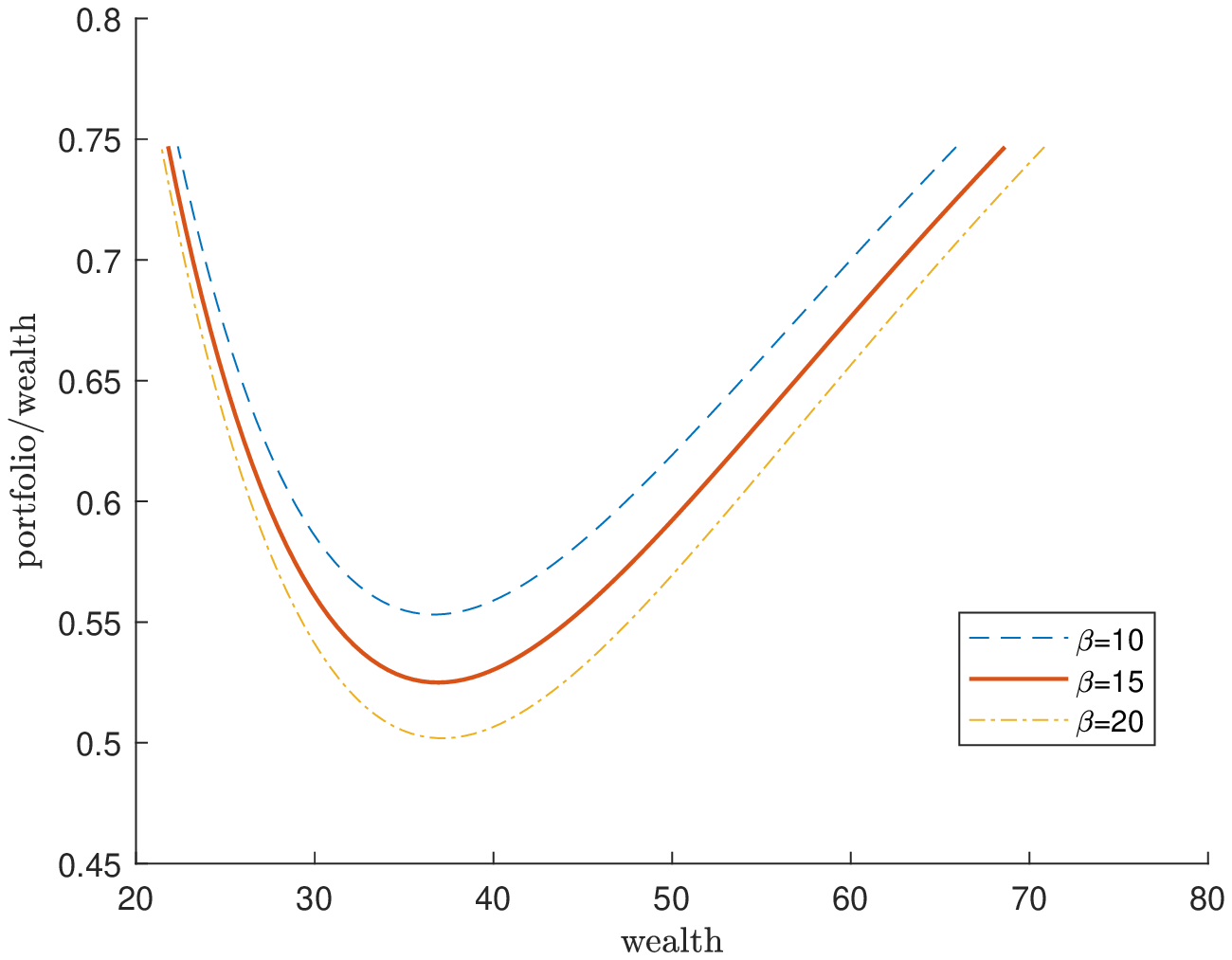}}
	\subfigure[$\alpha=49$]{\includegraphics[scale=0.4]{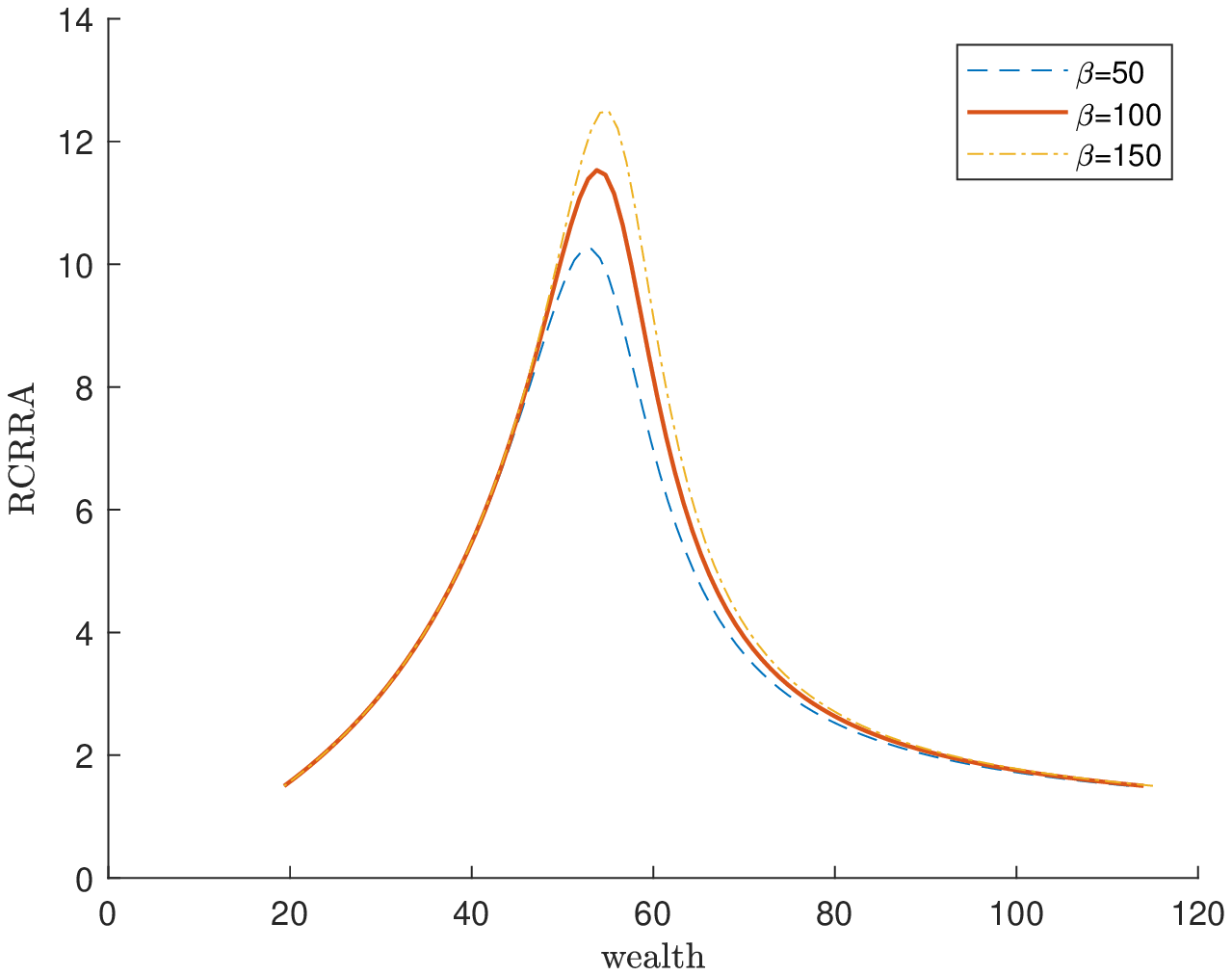}}
	\subfigure[$\alpha=49$]{\includegraphics[scale=0.4]{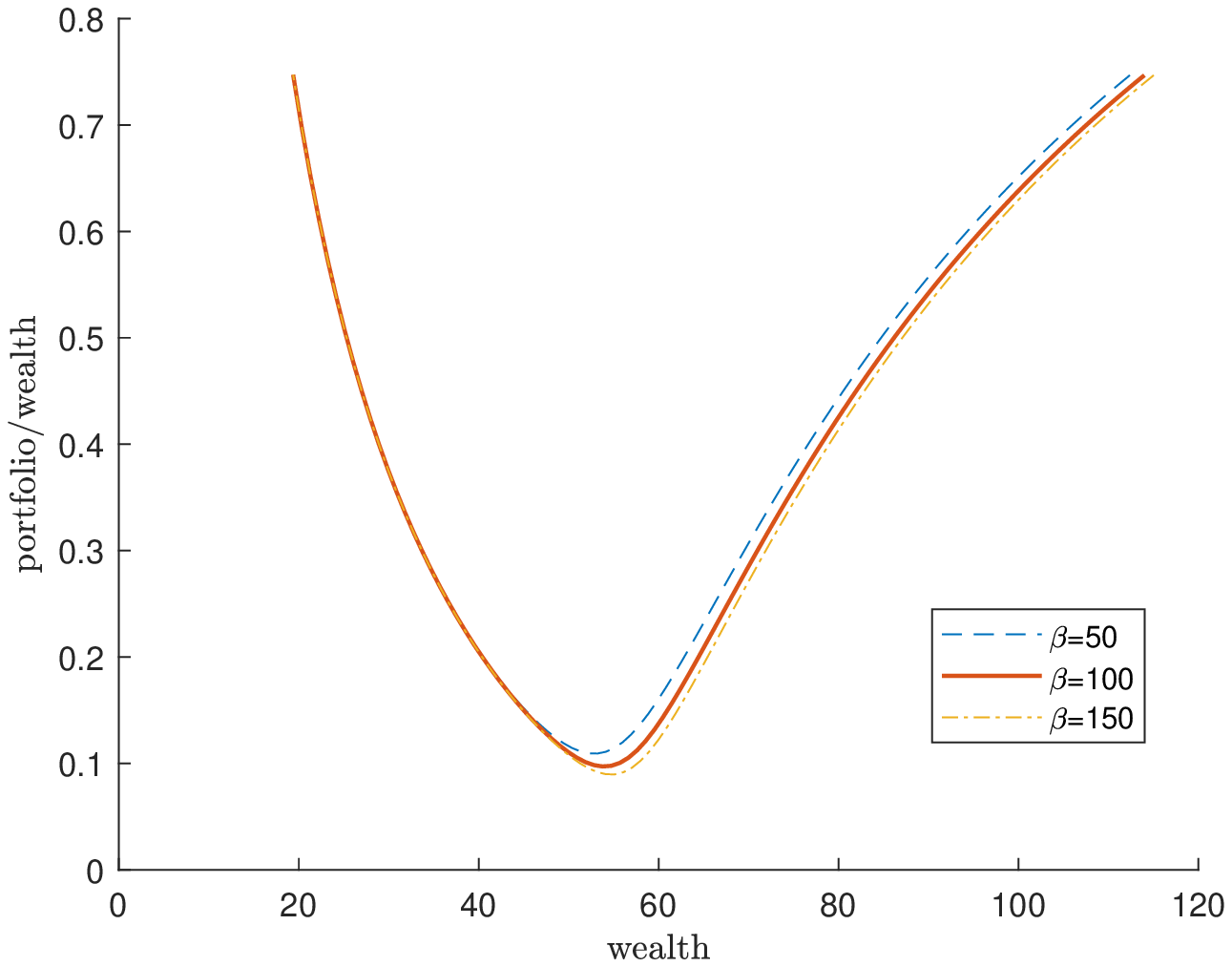}}
	\caption{Panels (a) and (c) plot RCRRA curves for shares for different $\alpha$ and $\beta$ values. Panels (b) and (d) plot the corresponding risky share, respectively.  Other parameter values are as follows: $\delta=0.02,r=0.015,\mu=0.085,\sigma=0.25,\; c=1$ and $\gamma=1.5$.	\label{RCCRA_riskyshare}}
\end{figure}

\subsection{RCRRA can be very large while actual risk aversion is small}

\begin{table}[h]
	\centering
	\begin{tabular}{c|cccc}
		\hline\hline
		($\alpha$, $\beta$)     & $\sigma = 0.25$ &$\sigma = 0.3$ & $\sigma=0.35$    & $\sigma = 0.4$\\
		\hline \hline
		(5, 10)     &  1.116 & 1.164  & 1.208 & 1.249\\
		\hline
		(5, 100)    &  1.782 & 1.974  & 2.164 & 2.358\\
		\hline
		(29, 100)   &  2.502 & 2.866  & 3.250 & 3.668\\
		\hline
		(49, 100)   &  7.459 & 9.607  & 12.40 & 16.07 \\
		\hline
	    (49, 1000)  &  13.18 & 18.48  & 26.20 & 37.46\\
		\hline\hline
	\end{tabular}
	\caption{The maximum values of RCRRA. The other parameters values are $\delta=0.02,r=0.015,\mu=0.085,\;\gamma=0.9,\;  X=50$, and $c=1$. \label{max-RCRRA-alpha-beta-sigma}}
\end{table}

In Figure \ref{fig-saving-portfolio-RCRRA}(c), the actual risk aversion is $\gamma = 3$ and the maximum value of RCRRA is about 5.32. This difference does not seem large. However, the difference between the actual level of risk aversion and the RCRRA value can be dramatically large. Figures \ref{RCCRA-simulation1} and \ref{RCCRA-simulation2} plot a sample path of RCRRA$(t)$. While the actual risk aversion level is $\gamma =0.9$ in these figures, RCRRA can take more than 38.  The optimal wealth path touches neither $c\underline{x}$ nor $c\bar{x}$ (See the left panel in each figure), which means that the consumption path stays constant in Figures \ref{RCCRA-simulation1} and \ref{RCCRA-simulation2}.

The maximum value of RCRRA increases in $\alpha$ and $\beta$, respectively\footnote{Note that $\beta$ is more important since $\alpha$ cannot exceed  $\frac{1}{\delta}$ because of Assumption \ref{assumption-alpha}.} (see Table \ref{max-RCRRA-alpha-beta-sigma}). A high $\alpha$ or $\beta$ implies a high cost when the agent changes the consumption level. Therefore, the intertemporal hedging deman become large (with negative sign) to crow out the myopic demand (see Eq. \eqref{portfolio-decompose} and explanation below the equation), which makes the RCRRA level higher. In addition, a high risk, i.e., a high volatility $\sigma$ increase the maximum level of RCRRA by the same token.

\begin{figure}[h]
	\centering
	\subfigure{\includegraphics[width=13cm, height=4cm]{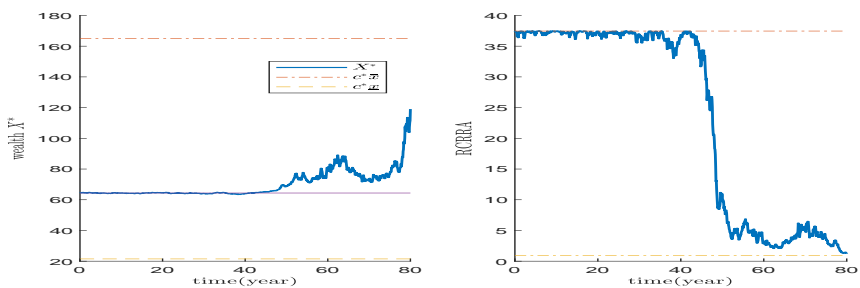}}
	\caption{RCRRA can be very large while the actual risk aversion is very low ($\gamma = 0.9$).
		Consumption stays constant in this sample path. Parameter values are as follows: $\delta=0.02,r=0.015,\mu=0.085,\sigma=0.4, T=80,  X=64.38,\; c=1,\alpha=49, \beta=1000.$ The maximum RCRRA value is 37.46.
		\label{RCCRA-simulation1}}
\end{figure}

\begin{figure}[h]
	\centering
	\subfigure{\includegraphics[width=13cm, height=4cm]{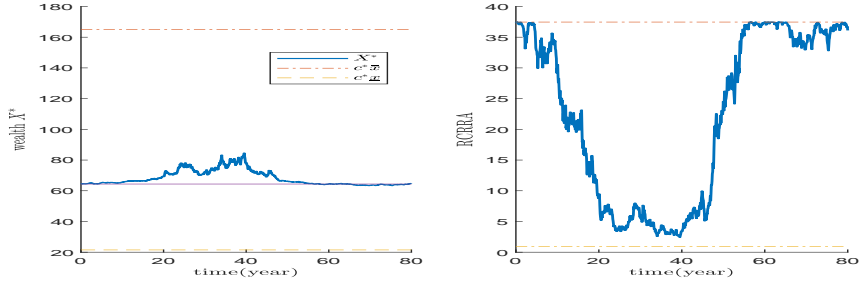}}
	\caption{RCRRA can be very large while the actual risk aversion is very low ($\gamma = 0.9$).
		Consumption stays constant in this sample path. Parameter values are as follows: $\delta=0.02,r=0.015,\mu=0.085,\sigma=0.4, T=80,  X=64.38,\; c=1,\alpha=49, \beta=1000.$ The maximum RCRRA value is 37.46.
		\label{RCCRA-simulation2}}
\end{figure}


\subsection{RCRRA tends to be large during the times when moderate shocks continuously arrive. RCRRA becomes small for the times with large shocks. \label{sec:RCRRA-large-when}}

We know that RCRRA can be very large while the given risk aversion is very small. Then, when does it happen? In habit models, it happens in the downturns, in particular, when the current consumption level is very close to the habit stock. The mechanism to induce the difference between the actual risk aversion and the RCRRA in our model is very different from that of habit models.

Notice in our model that RCRRA tends to be small during the times when there are consecutive large shocks and thus in these times there are high fluctuations in wealth. For example, those times are between $t=45$ and $t=60$ in Figure \ref{RCCRA-simulation1} and between $t=20$ and $t=50$ in Figure \ref{RCCRA-simulation2}. On the other hand, it tends to be high during the time when there are little or modest fluctuations in wealth due to moderately alternating good and bad shocks. For example, those times are before $t=45$ in Figure \ref{RCCRA-simulation1} and after $t=50$ in Figure \ref{RCCRA-simulation2}. There are other examples in the Appendix.

In other words, high cost for consumption adjustment amplifies risk aversion over small or moderate shocks (particularly during the times when the wealth level stays in the mid range of (s, S) band), making the household look very conservative. However, the household may look very aggressive (if her actual risk aversion is very low), showing RCRRA close to her actual risk aversion, during the time when there are large shocks.
Notably, if the household has a long sequence of moderate good shocks, then the risk aversion gradually decreases (equivalently the household  gradually increases risky asset holdings). This pattern looks like that the household gains more confidence due to the success in the stock market.

In this sense, our utility cost model can generate substantial risk aversion in times of moderate risk events and great risk-taking in times of large risk events. This result is quite consistent with the puzzle reported in the behavioral literature.

\subsection{Change in risky share relative to change in wealth}

\begin{figure}[h]
	\centering
	\subfigure[bull market]{\includegraphics[scale=0.25]{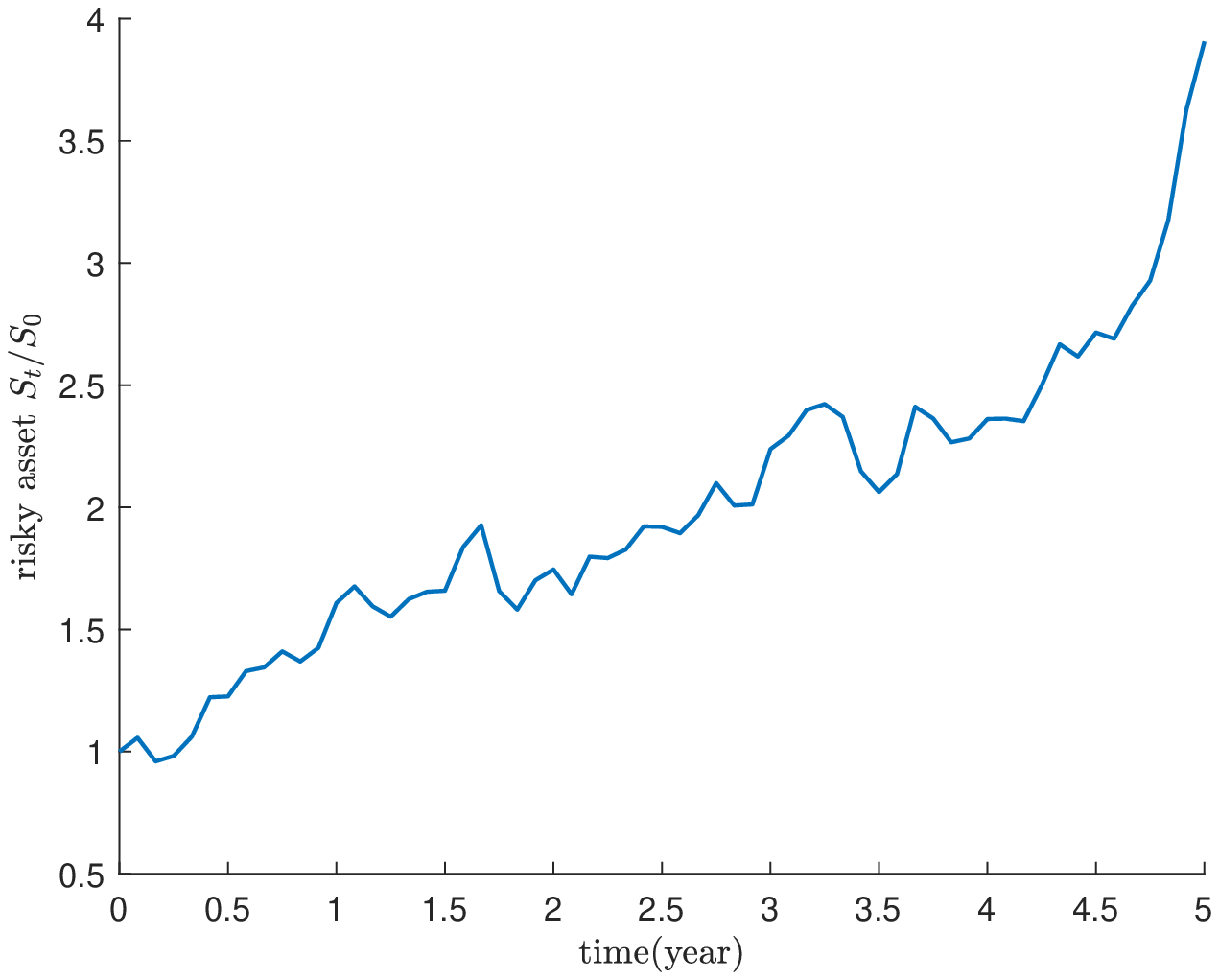}}
	\subfigure[intermediate ]{\includegraphics[scale=0.25]{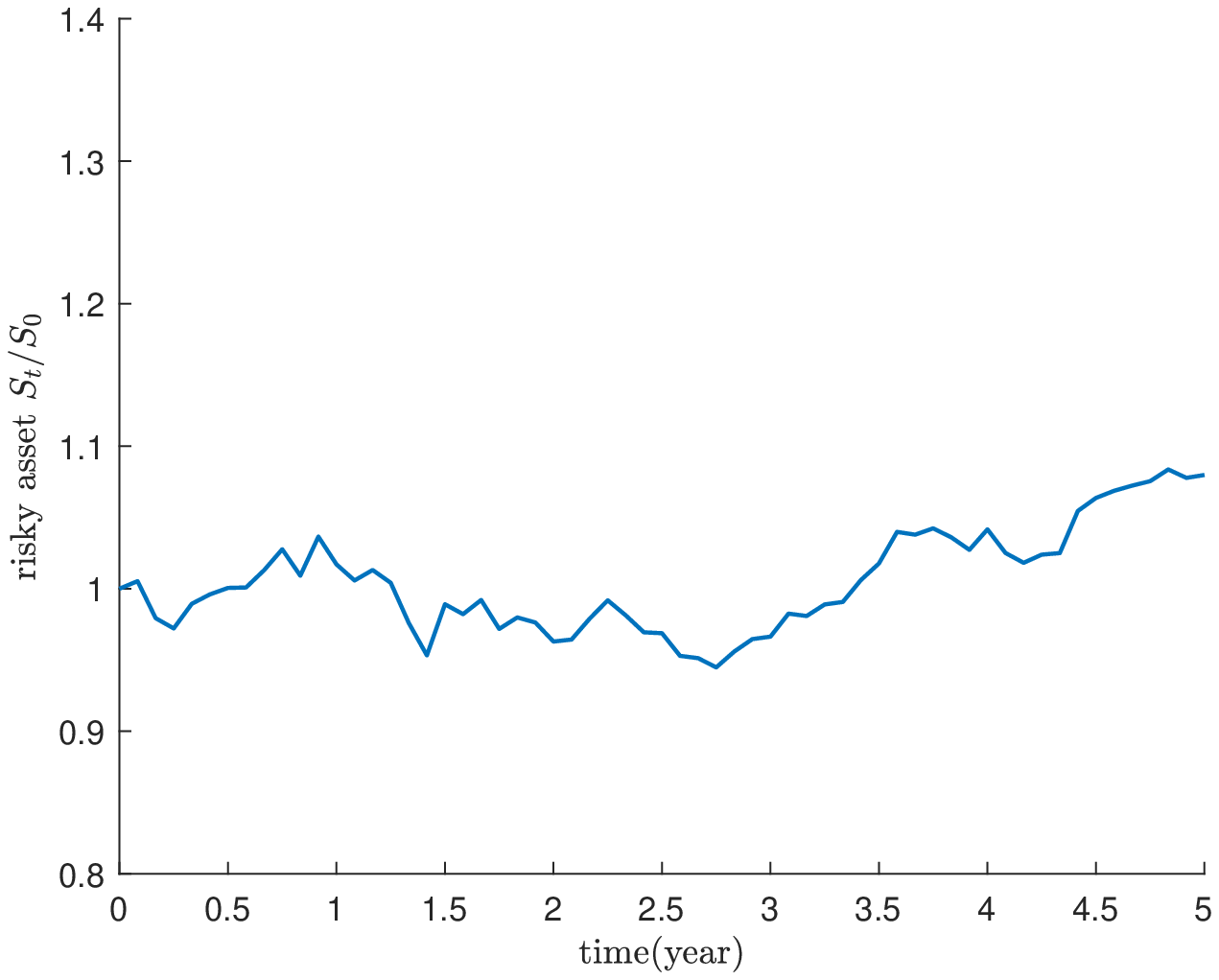}}
	\subfigure[bear market]{\includegraphics[scale=0.25]{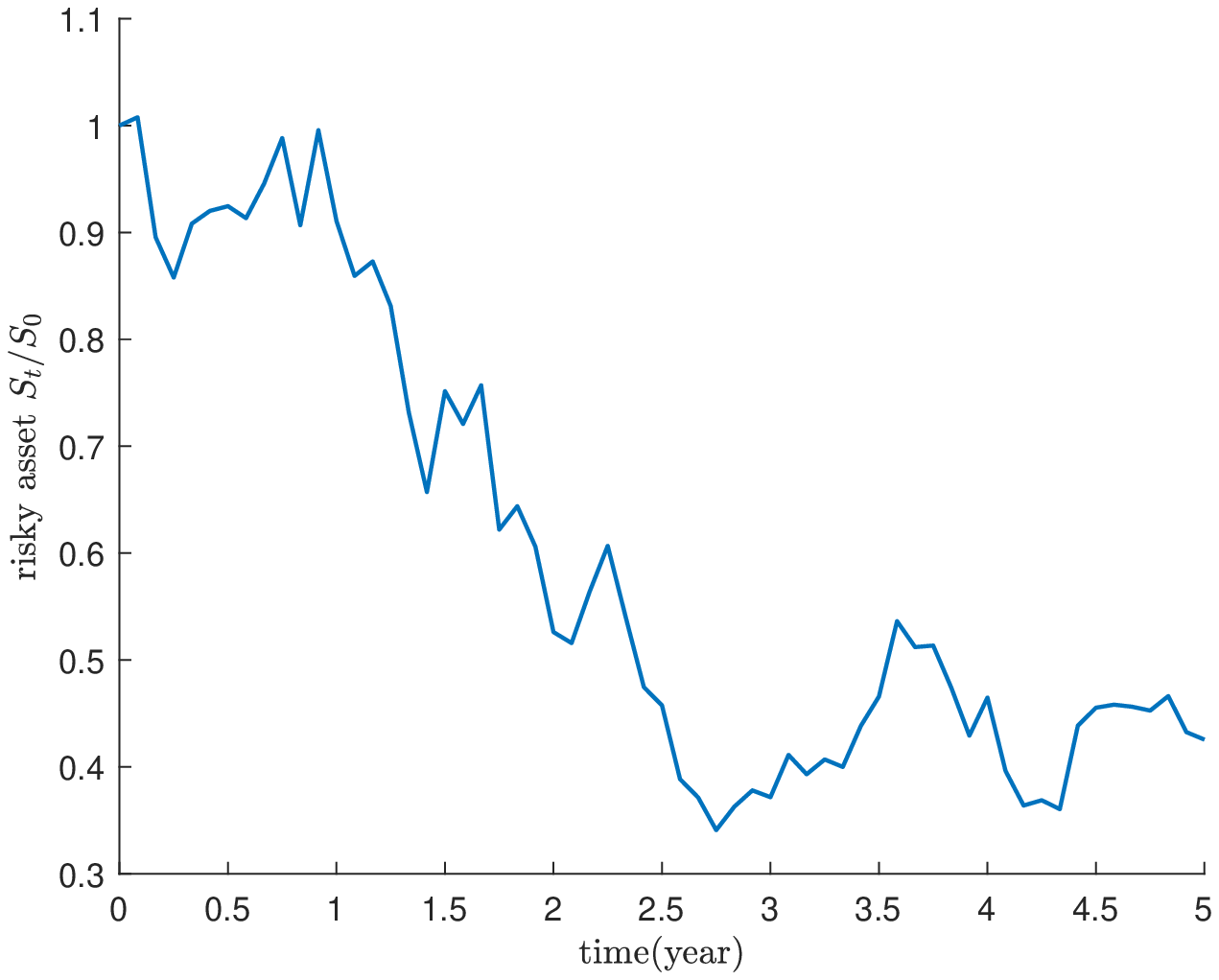}}
	\subfigure[highly fluctuating]{\includegraphics[scale=0.25]{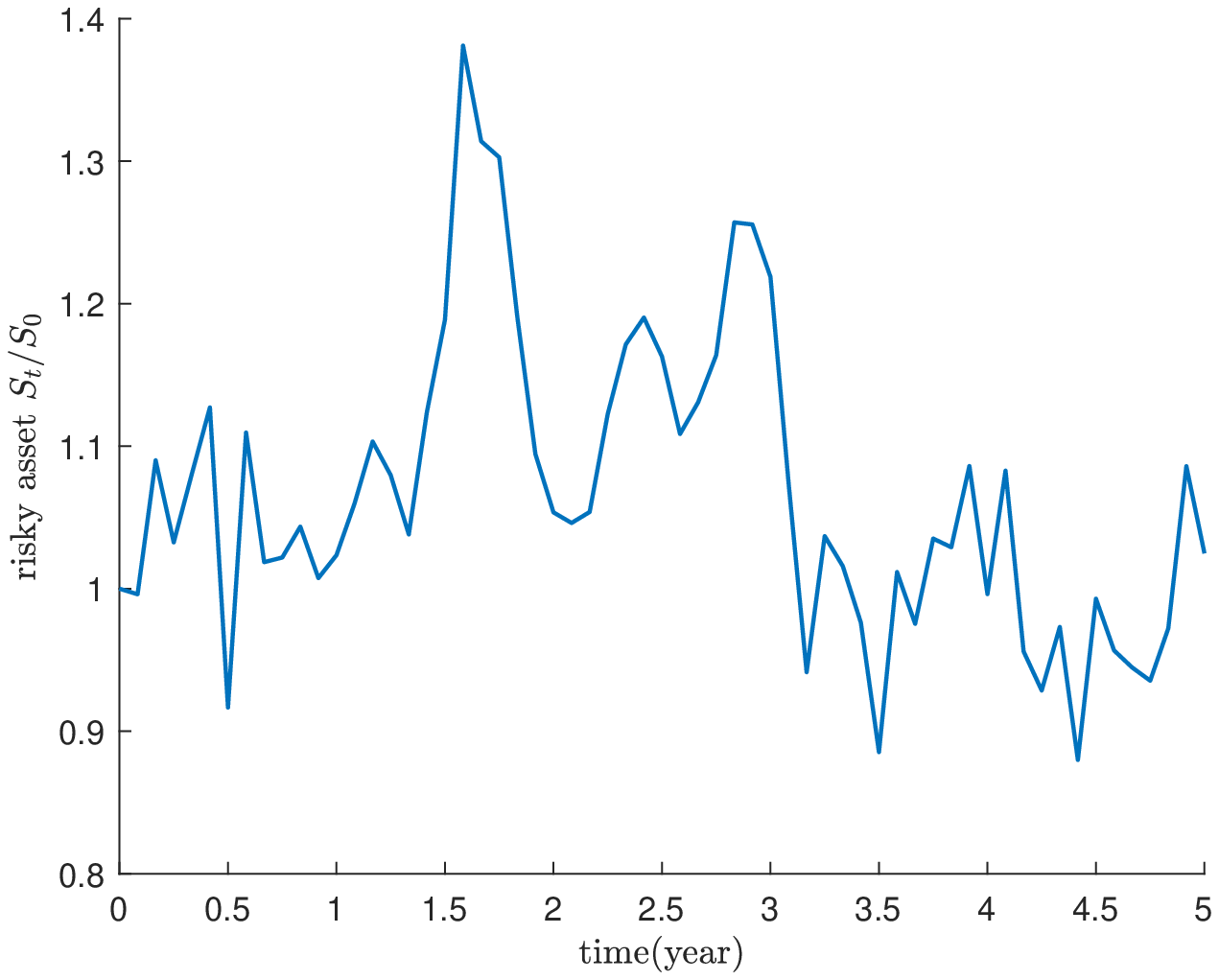}}
	\caption{Four different cases of stock price evolution: Parameter value are as follows : $\delta=0.02,r=0.015,\mu=0.085,\sigma=0.25,\gamma=1.5$ and $c=1$.
		\label{four_scenario}
	}
\end{figure}

Here we investigate how the risky share changes relative to an increase in wealth. Note that models with habit, commitment, or DRRA (decreasing relative risk aversion) often predict that the relationship is positive while  standard models with CRRA preference predicts no relationship. The empirical literature, however, seems rather inconclusive. For example, \cite{CCS2009} and \cite{CS2014} favor the habit or commitment or DRRA models, and \cite{CP2011} and \cite{BN2008} favor the classical CRRA model and show even slightly negative relationship. We shed on light on the debate in the literature. We argue that whether the change in wealth has positive, negative, or no impact on the risky share depends on what kind of a sample path (time-series of the stock market data) researchers use although the data generating process is fixed.

Figure \ref{four_scenario} shows the four types of sample paths we consider. Panel (a) is a typical long-term bull market, , Panel (b) is intermediate, Panel (c) is a typical bear market, and Panel (d) is the highly-fluctuating market over time. We will show that the impact of an wealth change on the risky share is different for each sample path. To do so, we generate the population with random $(\alpha, \beta)'s$ and simulate their wealth and risky share over time.

Basically our estimation analysis follows that of \cite{BN2008}. Consider the following equation
\begin{equation}\label{eq:object}
\Delta_k \log{\frac{\pi_t}{X_t}} = \rho \Delta_k \log{X_t},
\end{equation}
where $\Delta_k$ denotes a $k$-period(year) first-difference operator, $\Delta_k y_t \equiv y_t-y_{t-k}$.  Below we briefly explain how we regress Eq. \eqref{eq:object}.

\noindent {\bf (Step 1)} Generate initial consumption/wealth distributions of $N$ individuals.
\begin{itemize}
	\item We divide the interval $[0,T]$ into $12\times T$ subintervals with end points $t_j=1,2,...,12\times T$. (Here, we assume that $T$ is a positive integer)
	\item  We set equal initial consumption, $c_0=1$ for each individual and generate each individual's initial consumption $x_0$ randomly according to a uniform distribution over $(c\underline{x},c\bar{x})$.
	\item For each $i,\;i=1,2,...N$ and given pairs $(m_{\alpha},v_{\alpha})$, $(m_{\beta},v_{\beta})$, we generate log-normally distributed random variables $\alpha_i$ and $\beta_i$ whose the mean and variance are $(m_{\alpha},v_{\alpha})$ and $(m_{\beta},v_{\beta})$, respectively.\footnote{The log-normal distribution implies that there are households having fairly large values of $\alpha$'s although their density in the population is very small. We drop those household whose $\alpha$ values violates Assumption \ref{assumption-alpha}.}
	\item Generate a $12\times T$ random vector $\omega$ that follows a standard normal distribution. Using this vector $\omega$, we generate the process of the risky asset returns $\Delta S_t/S_t$, and the dual process $y_t^*$ in Proposition \ref{pro:consumption} for all the $N$ individuals. By Proposition \ref{pro:consumption}, we can simulate the optimal wealth and portfolio processes of $N$ individuals.
\end{itemize}
\noindent \textbf{(Step 2)} Compute changes in the ratio of risk asset holdings and changes in wealth.
\begin{itemize}
	\item Let $(X^1,P^1), (X^2,P^2),...,(X^N,P^N)$ be the simulated wealth/portfolio processes for $N$ individuals in our utility cost model obtained in \textbf{(Step 1)}. (Note that $X^i$ and $P^i$ are $(12\times T+1)$ random vectors for $i=1,2,...,N$).
	\item For given $T$ and $k$, there are $(T-k+1)$ numbers of $\Delta_k$.
	
	For $i=1,2,...,N$ and $j=1,2,...,(T-k+1)$
	\begin{footnotesize}
	\begin{eqnarray*}
	\begin{split}
	DR(i,j)&=\Delta_k \log{\frac{P_{j+k}^i}{X_{j+k}^i}}=\log{\frac{P^i(12\times(j+k-1)+1)}{X^i(12\times(j+k-1)+1)}}-\log{\frac{P^i(12\times(j-1)+1)}{X^i(12\times(j-1)+1)}},\\
	X(i,j)&=\Delta_k \log{{X_{j+k}^i}}=\log{{X^i(12\times(j+k-1)+1)}}-\log{{X^i(12\times(j-1)+1)}}.
	\end{split}
	\end{eqnarray*}
    \end{footnotesize}
	\item We regress Eq. \eqref{eq:object} with $OLS$ using the simulate results $DR$ and $X$.
\end{itemize}

\begin{table}[h]
	\centering
	\begin{footnotesize}
	\begin{tabular}{cc|cccc}
		\hline\hline
		$(m_{\alpha},v_{\alpha})$ &$ (m_{\beta},v_{\beta})$  &  bull markets &  intermediate  & bear markets   &  highly fluctuating   \\
		\hline \hline
		(5, $5^2$)  & (50, $20^2$)   &  $0.1369^{***}$ & 0.0208\;(0.22)  & $-0.0757^{***}$ 	  &$-0.0679^{***}$ \\
		\hline
		(10, $5^2$)  & (100, $20^2$) &  $0.0999^{***}$ & $0.0829^{***}$ & $-0.0746^{***}$  & $-0.0699^{***}$\\
		\hline
		(10, $5^2$) & (150, $50^2$) &  $0.1107^{***}$& $0.1261^{***}$ & -0.0478\;(0.01)   & -0.0271\;(0.26) \\\hline
		(15, $10^2$) & (150, $50^2$) &  $0.1118^{***}$& $0.1229^{***}$  &-0.0387\;(0.04)  & -0.0341\;(0.17) \\\hline
		(15, $10^2$) & (200, $50^2$) &  $0.0961^{***}$ & $0.1710^{***}$ & -0.0164\;(0.43)  & -0.0379\;(0.14) \\\hline
		&    & $N=1500$,   &$T=5$, $k=2$. & & \\
		\hline\hline
	\end{tabular}
	\end{footnotesize}	
	\caption{The regression coefficients for markets (a), (b), (c), and (d) in Figure \ref{four_scenario}: We generate the distribution of households with $\alpha$ and $\beta$ using the log-normal distributions with mean $m_{\alpha}$, $m_{\beta}$ and variance $v_{\alpha}$, $v_{\beta}$ respectively. The values in the parentheses are p-values when the p-value is greater than 1\%. $***$ means the p-value smaller than 1\%. The other parameters values are $\delta=0.02,r=0.015,\mu=0.085,\;\gamma=1.5$, and $c=1$. \label{BN_test}}
\end{table}

The  results are summarized in Table \ref{BN_test}. The regression results show the positive impact of the wealth increase on the risky share when the market is generally going up (Panel (a)) or intermediate with moderate up-and-downs. There is no wealth impact on the risky share or slightly negative (if any) when the market is generally going down (Panel (c)) or has a huge volatility (Panel (d)).
\begin{figure}[h]
	\centering
     \subfigure{\includegraphics[scale=0.6]{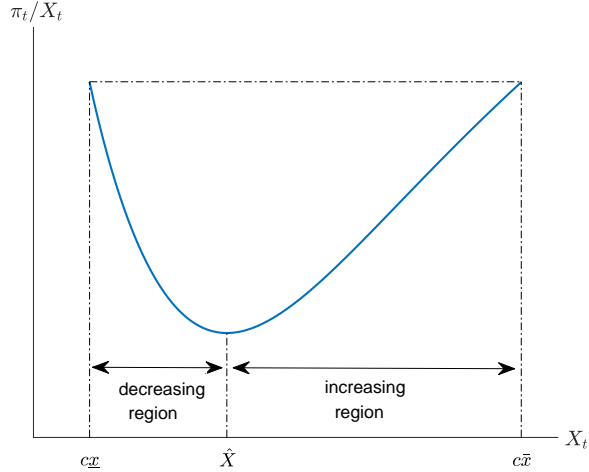}}
     \caption{The decreasing region and increasing region of the risky share in the (s, S) band. \label{RCRRA_region}}
\end{figure}
The intuition behind these results originates from the ratio of risky asset holding dynamics and wealth process.  The risky share is hump-shaped in each (s, S) band, which implies that there are two regions with the band: the one is the decreasing region in which the risky share decreases with wealth  and the other is the increasing region in which the risky share increases with wealth (see Figure \ref{RCRRA_region}). The decreasing region is in the left part of the (s, S) band and the increasing region is in the right part of the (s, S) band.  In general, the increasing region longer than the decreasing region since the risk premium is positive. In addition, recall Figure \ref{wealth-consumption-Ss} and its description below Theorem \ref{thm:wealth}.  In Figure \ref{wealth-consumption-Ss}, if there are consecutive good shocks, the wealth process tends to move in the following way: $X_0 \rightarrow  B \rightarrow B_u \rightarrow B_{uu}$. During the times of this journey, the wealth process will stay longer in the increasing region of each (s, S) band since there are more good shocks than bad shocks in size and amount. On the other hand, if there are consecutive bad shocks, the wealth process tends to move in the other way: $X_0 \rightarrow  A \rightarrow A_d \rightarrow A_{dd}$. During the times of this journey, the wealth process will stay longer in the decreasing region of each (s, S) band since there are more bad shocks than good shocks.

Having the above property in mind, if the market is doing well in the long-run as in Panel (a) of Figure  \ref{four_scenario}), over time the wealth process is more likely to stay in the increasing region of each (s, S) band. Thus, the risky share tends to increase with wealth during these times. On the other hand, if the market is doing poorly for a while as in Panel (c) of Figure \ref{four_scenario}), the wealth process is more likely to stay in the decreasing region of each (s, S) interval. Thus, the risky share decreases with wealth during these times.

The intermediate case (Panel (b) of Figure  \ref{wealth-consumption-Ss}) tends to be closer to the case of the bull market and the highly fluctuating period (Panel (d)) of  \ref{wealth-consumption-Ss}) tends to be closer to the case of the bear market. It is because, as mentioned above,  the right half part of the (s, S) band (increasing region) is generally longer than left half part of the (s, S) band (decreasing region).

\section{Consumption and Asset Pricing Moments \label{sec:imp-consumption}}

Here we provide two implications of optimal consumption. Our model can explain the excess sensitivity and excess smoothness puzzles. In addition, our model can generate the reasonable asset pricing moments such as auto-correlation coefficient of consumption, which is often pointed out as a weakness of habit models.

\subsection{Excess Sensitivity and Excess Smoothness}

The optimal consumption level is infrequently adjusted in our model (see, for example, Figures \ref{fig-sim-path1} and \ref{fig-saving-portfolio-RCRRA}). The excess smoothness appears since consumption in our model does not immediately respond to a permanent shock with a moderate level when the wealth level after the shock does not reach the (s, S) boundary. In other words, the household is less likely to change a consumption level with response to a small unexpected permanent shock (income), while they do increase or decrease the ratio of risky asset holdings for a positive (negative) shock.

The excess sensitivity also arises in our model by a similar reason. After a moderate good shock, the wealth level becomes higher than before. Although consumption does not immediately respond, it is more likely to increase later since the probability of increasing consumption in the next period becomes higher after the good shock. Therefore, our model can explain the excess sensitivity puzzle.

However, all the above arguments do not really work if the size of shock is large enough. For a large good shock, the wealth level immediately reaches the upper threshold, which makes consumption increase immediately. By the same token, the consumption level decreases immediately after a large bad shock. So, the consumption moves in the large shock events, following the permanent income hypothesis.

\subsection{Auto-correlation}

\begin{table}[h]
    \centering
    \begin{tabular}{c|ccccc}
        \hline\hline
        & $\mathbb{E}[\Delta c]$ & $\sigma(\Delta c)$ & EP     & Std of IMRS & AC1$(\Delta c)$ \\
        \hline \hline
        Data                                                                  & 0.0192                 & 0.0212             & -- & --      &  0.4600   \\
        \hline
        Our Model                                                                  & 0.0181                 & 0.0236             & 0.0052 & 0.0775      & 0.4900          \\ \hline
        \begin{tabular}[c]{@{}c@{}}Merton with temporal\\ aggregation\end{tabular} & 0.0200                 & 0.0858             & 0.0526 & 0.2996      & 0.1677          \\ \hline\hline
    \end{tabular}
    \caption{Mean, standard deviation, autocorrelation of the consumption growth rate, theoretical equity premium, and standard deviation of the IMRS. The parameter values as follows: $\delta=0.015,\;r=0.0086,\;\mu=0.0784,\;\sigma=0.2016,\;\gamma=3.5$, and $\alpha=5,\;\beta=10$. {The data in the first row is from \cite{BKY}, sampled on an annual frequency with the period from 1930 to 2008.} \label{table-ac1} }
\end{table}

Here we provide how the model can generate consumption data and try to match several asset pricing moments. By using the time series of aggregate consumption we compute the mean and standard deviation of consumption growth rates, the intertemporal marginal rate of substitution(IMRS) and the theoretical equity premium(EP). Based on our consumption model, we simulate optimal consumption processes for $N=100$ individuals following three steps:\\

\begin{table}[h]
	\centering
	\begin{tabular}{c|c|c|c|ccccc}
		\hline\hline
		$\delta$               & $\gamma$             & $\alpha$             & $\beta$            & $\mathbb{E}[\Delta c]$ & $\sigma(\Delta c)$ & EP     & Std of IMRS & AC1$(\Delta c)$ \\ \hline\hline
		\multirow{3}{*}{0.015} & \multirow{3}{*}{3.5} & \multirow{3}{*}{5} & 10                  & 0.0181                 & 0.0236             & 0.0052 & 0.0775      & 0.4900         \\
		&                      &                      & 15                  & 0.0182                 & 0.0232            & 0.0050 & 0.0758      & 0.4888         \\
		&                      &                      & 20                  & 0.0183                 & 0.0229             & 0.0049 & 0.0747      & 0.4881         \\ \hline\hline
		\multirow{3}{*}{0.015} & \multirow{3}{*}{3.5} & 0                    & \multirow{3}{*}{20} & 0.0182                 & 0.0233             & 0.0050 & 0.0762      & 0.4889         \\
		&                      & 5                  &                    & 0.0183                 & 0.0229             & 0.0049 & 0.0747      & 0.4881         \\
		&                      & 10                    &                    & 0.0184                 & 0.0226             & 0.0048 & 0.0736      & 0.4869         \\ \hline\hline
		\multirow{3}{*}{0.015} & \multirow{3}{*}{3.5} & 30                    & 100& 0.0191                 & 0.0217             & 0.0044 & 0.0699      & 0.4825         \\
		&                      & 50                  & 100                   & 0.0192                 & 0.0215             & 0.0044 & 0.0693      & 0.4808         \\
		&                      & 50                    &  1000                  & 0.0192                 & 0.0215             & 0.0044 & 0.0691      & 0.4802         \\ \hline\hline
		0.010                   & \multirow{3}{*}{3.5} & \multirow{3}{*}{5} & \multirow{3}{*}{10} & 0.0194                 & 0.0242             & 0.0054 & 0.0791      & 0.4888         \\
		0.015                  &                      &                      &                    & 0.0181                 & 0.0236             & 0.0052 & 0.0775      & 0.4900         \\
		0.050                   &                      &                      &                    & 0.0093                 & 0.0204             & 0.0039 & 0.0683      & 0.4953         \\ \hline\hline
		\multirow{4}{*}{0.015} & 0.9                  & \multirow{4}{*}{5} & \multirow{4}{*}{10} & 0.0738                 & 0.0963             & 0.0052 & 0.0775      & 0.4889         \\
		& 1.5                  &                      &                    & 0.0431                 & 0.0563             & 0.0052 & 0.0775      & 0.4898         \\
		&3.5                    &                      &                    & 0.0181                 & 0.0236             & 0.0052 & 0.0775      & 0.4900         \\
		& 10                   &                      &                    & 0.0063                 & 0.0082             & 0.0052 & 0.0775      & 0.4900         \\ \hline\hline
	\end{tabular}
	\caption{Mean, standard deviation, autocorrelation of the consumption growth rate, theoretical equity premium, and standard deviation of the IMRS. The parameter values as follows: $\;r=0.0086,\;\mu=0.0784$ and $\sigma=0.2016$.  \label{table-ac2} }
\end{table}

\noindent \textbf{(Step 1)} Generate initial consumption/wealth distributions of $N$ individuals.
\begin{itemize}
	\item We divide the interval $[0,79]$ into $2\times12\times79$ subintervals with end points $t_j=1,..,2\times12\times79$.
	\item Similar to \citet{MP}, we set equal initial consumption, $c_0=1$ for each individual and generate each individual's initial consumption $x_0$ randomly according to a uniform distribution over $(c\underline{x},c\bar{x})$.
	\item Generate a $2\times12\times79$ random vector $\omega$ that follows a standard normal distribution. Using this vector $\omega$, we generate the process of the risky asset returns $\Delta S_t/S_t$, and the dual process $y_t^*$ in Proposition \ref{pro:consumption} for all the $N$ individuals. By Proposition \ref{pro:consumption}, we can simulate the optimal consumption processes of $N$ individuals.	
\end{itemize}
\noindent \textbf{(Step 2)} Aggregation of Consumption
\begin{itemize}
	\item Let $C^1, C^2,...,C^N$ be the simulated consumption processes for $N$ individuals in our utility cost model obtained in \textbf{(Step 1)}.
	\item (Cross sectional aggregation) The cross sectionally aggregated consumption process $CA$ of $C^1,C^2,...,C^N$ is defined as follows:
	
	\noindent For $j=1,2,...,2\times12\times79$,
	$$
	CA(t_j)=\dfrac{1}{N}\sum_{i=1}^N c^{i}(t_j).
	$$
	\item (Temporal aggregation) We temporally aggregate the cross sectionally aggregated series $CA$ to create monthly $CA^*$ (that is, $\Delta t=1/12$) as follows:
	
	\noindent For $j=1,2,...,12\times79$,
	$$
	CA^*(i)=\sum_{j=1}^{2}CA(t_{2\times(i-1)+j}).
	$$
\end{itemize}
\noindent \textbf{(Step 3)} Compute the consumption growth rate, IMRS, and theoretical EP of aggregated consumption $CA^*$
\begin{itemize}
	\item We use the simulated returns on the risky asset $r(t_j)=1,2,...,12\times79$.
	
	\item We derive the following time-series
	
	\noindent $i=1,2,...,12\times79-1$,
	\begin{eqnarray*}
		\begin{split}
			&\textrm{(Consumption growth rate)}\;\;CG(i)=\dfrac{CA^*(i+1)-CA^*(i)}{CA^*(i)}.\\
			&\textrm{(IMRS)}\;\;I(i)=e^{-\delta \Delta t}\left(\dfrac{CA^*(i+1)}{CA^*(i)}\right)^{-\gamma}.\\
			&\textrm{(EP)}\;\;EP(i)=-\dfrac{cov\left(e^{-\delta \Delta t}(CA^*(i+1)/CA^*(i))^{-\gamma}, (r(i+1)/r(i))\right)}{\mathbb{E}[e^{-\delta \Delta t}(CA^*(i+1)/CA^*(i))^{-\gamma}]}.
		\end{split}
	\end{eqnarray*}
	Using these time-series, we obtain desired statistics(the mean and the standard derivation of consumption growth, the IMRS, the theoretical EP and the auto-correlation of aggregated consumption $CA^*$).
\end{itemize}
Since each time-series depends on the random vector $\omega$, repeat \textbf{(Step 1)--(Step 3)} 1000 times.

The baseline results are summarized in Table \ref{table-ac1}. The model fits the consumption data well with reasonable values for the market and preference parameters. In particular, the auto-correlation simulated by our model is well matched with the data, which is not possible with the standard Merton case. It is surprising that a fairly small values of $(\alpha, \beta)$ can generate the autocorrelation value close to its historical data. Table \ref{table-ac2} confirms this result.

\section{Concluding Remarks \label{sec:conclusion}}

We model the partial irreversibility of consumption decision motivated by \cite{Duesenberry}. In order to do so, we introduce the adjustment cost of consumption. We show that the consumption partial irreversibility model can generate a number of novel implications. Some of our results are similar to those derived from  habit formation or consumption commitment models. However, the mechanism to generate time-varying risk aversion or the excess sensitivity and excess smoothness consumption is very different from that from these literature. In this sense, we view that our model as complement to the existing literature. In addition to these results, we find that the consumption adjustment cost can reconcile the gap between the asset pricing literature and the literature on estimating risk aversion. Also, we shed light on the debate on how the wealth change has impact on the households risky share.

We believe that the (partial) irreversibility is a very realistic and important aspect of  consumption decision. While we model it by introducing the adjustment cost, we admit that there would be other ways to model it. We hope that our model contributes to building other works toward this direction. One of important future works will be building a general equilibrium model with the partial irreversibility of consumption decision that can investigate further asset pricing implications (e.g., \cite{CJK-G}). This type of extension will help to provide the microfoundation for the {\it relative income hypothesis} by \cite{Duesenberry}.

\clearpage
\begin{center}
    {\bf {\Large Appendix}}
\end{center}

\appendix
\linespread{1.1}
\begin{footnotesize}
\section{Derivation of the dual value function $J(y,c)$}\label{sec:Appendix:A}

In this section, we will derive a solution to Problem \ref{pr:dual_problem} by solving the HJB equation \eqref{eq:HJB_dual_value}.

The following theorem guarantees that the solution to the HJB
equation \eqref{eq:HJB_dual_value} is the solution to the dual
problem.
\begin{thm}[Verification Theorem]~\label{thm:verification}\\
    \noindent 1. Suppose that the HJB equation \eqref{eq:HJB_dual_value} has a twice continuously differentiable solution $J(y,c)\;:\:\mathcal{R}\rightarrow\;\mathbb{R}$ satisfying the following conditions:
    \begin{itemize}
        \item[(1)] For any admissible consumption strategy $(c^{+},c^{-})$, the process defined by
        $$
        \int_{0}^{t}e^{-\delta s}(-\theta) y_s J_{y}(y_s,c_s) dB_s,\;\;t\ge 0,
        $$
        is a martingale.
        \item[(2)] For any admissible consumption strategy $(c^{+},c^{-})$,
        $$
        \liminf_{t\to \infty}\mathbb{E}[e^{-\delta t}J(y_t,c_t)]\ge 0.
        $$
    \end{itemize}
    Then, for initial condition $(y_0,c_0)\in\mathcal{R}$ and any admissible consumption strategy $(c^{+},c^{-})$,
    $$
    J(y_0,c_0) \ge \mathbb{E}\left[\int_{0}^{\infty}e^{-\delta t}\left(h(y_t,c_t)dt-\alpha u'(c_t)dc_t^{+}-\beta u'(c_t)dc_t^-\right)\right].
    $$
    \noindent 2. Given any initial condition $(y,c)\in\mathcal{R}$, suppose that there exist an admissible consumption strategy $(c^{*,+},c^{*,-})$ such that, if $c^*$ is the associated consumption process, then
    $$
    (y_t,c^{*}_t)\in\left\{(y,c)\in\mathcal{R}: \mathcal{L}J(y,c)+h(y,c)=0\right\},
    $$
    Lebesgue-a.e., $\mathbb{P}$-a.s.,
    \begin{eqnarray}
    \begin{split}
    &\int_{0}^{t}e^{-\delta s}\left(J_c(y_s,c_s^*)-\alpha u'(c_s^*)\right)dc_{s}^{*,+}=0,\;\;\;\mbox{for all}\;t\ge 0,\;\mathbb{P}-a.s.,\\
    &\int_{0}^{t}e^{-\delta s}\left(-J_c(y_s,c_s^*)-\beta u'(c_s^*)\right)dc_{s}^{*,-}=0,\;\;\;\mbox{for all}\;t\ge 0,\;\mathbb{P}-a.s,
    \end{split}
    \end{eqnarray}
    and
    $$
    \lim_{t\to \infty} \mathbb{E}\left[e^{-\delta t}J(y_t,c^{*}_t)\right]=0\qquad\textrm{({Transversality condition})}.
    $$
    Then, $J(y,c)$ is the dual value function for Problem \ref{pr:dual_problem} and $(c^{*,+},c^{*,-})$ is the optimal consumption strategy.
\end{thm}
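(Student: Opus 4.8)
The plan is to establish both assertions by the standard martingale (verification) argument for singular stochastic control (cf.\ \citet{FS}), the only genuine work being the careful treatment of the singular/jump control term. First I would record the dynamics of the multiplier process: since $y_t = y e^{\delta t}\xi_t$ with $\xi_t = e^{-rt}Z_t$ and $Z_t = e^{-\frac12\theta^2 t - \theta B_t}$, It\^o's lemma gives $dy_t = (\delta - r)y_t\,dt - \theta y_t\,dB_t$, so the generator of $y_t$ is $\mathcal{L}_0 = \frac{\theta^2}{2}y^2\partial_{yy} + (\delta-r)y\partial_y$ and the operator in \eqref{eq:HJB_dual_value} is $\mathcal{L}=\mathcal{L}_0-\delta$. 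Applying the generalized change-of-variables formula to $e^{-\delta t}J(y_t,c_t)$, and using that $c_t = c_0 + c_t^{+} - c_t^{-}$ is of bounded variation with no martingale part (hence no quadratic covariation with $y_t$), yields
\begin{equation*}
e^{-\delta t}J(y_t,c_t) = J(y_0,c_0) + \int_0^t e^{-\delta s}\mathcal{L}J(y_s,c_s)\,ds + \int_0^t e^{-\delta s}(-\theta)y_s J_y(y_s,c_s)\,dB_s + \int_0^t e^{-\delta s}J_c(y_s,c_s)\,dc_s,
\end{equation*}
where the last integral is read in the Lebesgue--Stieltjes sense and, when $c$ jumps, the increment $J_c\,dc$ is replaced by the exact increment $J(y_s,c_s)-J(y_s,c_{s-})$.

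For Part~1 I would take expectations: condition~(1) annihilates the $dB_s$ integral, while the HJB equation \eqref{eq:HJB_dual_value} supplies the three inequalities $\mathcal{L}J + h \le 0$, $J_c \le \alpha u'$, and $J_c \ge -\beta u'$. Splitting $dc_s = dc_s^{+} - dc_s^{-}$ with $dc_s^{\pm}\ge 0$, these give $\mathcal{L}J\,ds \le -h\,ds$ and $J_c\,dc_s \le \alpha u'(c_s)\,dc_s^{+} + \beta u'(c_s)\,dc_s^{-}$. Substituting and rearranging yields
\begin{equation*}
J(y_0,c_0) \ge \mathbb{E}\!\left[\int_0^t e^{-\delta s}\!\left(h(y_s,c_s)\,ds - \alpha u'(c_s)\,dc_s^{+} - \beta u'(c_s)\,dc_s^{-}\right)\right] + \mathbb{E}\!\left[e^{-\delta t}J(y_t,c_t)\right].
\end{equation*}
Letting $t\to\infty$, I would invoke condition~(2) to discard the terminal term from below and use the admissibility bound \eqref{eq:well-posed} to pass the remaining integrals to the limit by dominated convergence, obtaining exactly the claimed inequality.

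For Part~2 I would repeat the computation along the candidate pair $(c^{*,+},c^{*,-})$ and upgrade every inequality above to an equality. Since $(y_t,c_t^{*})$ remains in the set where $\mathcal{L}J + h = 0$ Lebesgue-a.e., the drift term becomes exact; the two complementarity identities turn $\int_0^t e^{-\delta s}J_c\,dc_s^{*,+}$ and $-\int_0^t e^{-\delta s}J_c\,dc_s^{*,-}$ into $\int_0^t e^{-\delta s}\alpha u'\,dc_s^{*,+}$ and $\int_0^t e^{-\delta s}\beta u'\,dc_s^{*,-}$ respectively; and the transversality condition eliminates the terminal term in the limit. Hence $J(y_0,c_0)$ equals the objective evaluated at $(c^{*,+},c^{*,-})$, which together with Part~1 identifies $J$ as the dual value function and $(c^{*,+},c^{*,-})$ as optimal.

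The point where I expect the real care is the singular/jump term. The structure of $J$ from Proposition~\ref{pro:solution_dual}, where $J_c = \alpha u'$ on $\mathbf{IR}$ and $J_c = -\beta u'$ on $\mathbf{DR}$, guarantees that a jump of $c$ across these regions produces an increment $J(y,c')-J(y,c)$ equal to $\pm\alpha\bigl(u(c')-u(c)\bigr)$ or $\pm\beta\bigl(u(c')-u(c)\bigr)$, which matches the corresponding adjustment cost exactly. Verifying that the Part~1 inequalities and the Part~2 complementarity equalities therefore extend verbatim across jumps (in particular the initial jump from $\mathbf{IR}$ or $\mathbf{DR}$ into $\mathbf{NR}$), together with the integrability needed to justify the $t\to\infty$ interchange of limit and expectation, is where the technical effort concentrates; the remaining algebra is routine.
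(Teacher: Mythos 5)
Your proposal follows essentially the same route as the paper: apply the generalized It\^o formula to $e^{-\delta t}J(y_t,c_t)$, split the singular control into continuous and jump parts, sign each term via the three branches of the HJB equation \eqref{eq:HJB_dual_value} (writing the jump increments as integrals of $J_c \mp \alpha u'$ or $\mp\beta u'$ over the jump interval), kill the stochastic integral by condition (1), and pass to the limit using condition (2) for Part 1 and the complementarity identities plus transversality for Part 2. The only cosmetic difference is that the paper packages the computation as a supermartingale property of the process $M_t^c$ and invokes Fatou's lemma where you invoke dominated convergence via \eqref{eq:well-posed}; both are standard and equivalent here.
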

\begin{proof}

\noindent \textbf{(Proof of 1.)}

    For given consumption process $\{c_t\}_{t=0}^{\infty}$, define a process
    \begin{eqnarray}
    M_t^c =\int_{0}^{t}e^{-\delta s} \left((u(c_s)-y_sc_s)ds)-\alpha u'(c_s)dc_{t}^{+}-\beta u'(c_s)dc_t^{-}\right) + e^{-\delta t} J(y_t, c_t).
    \end{eqnarray}
    By the generalized It\'{o}'s lemma (See \citet{Harrison}),
    \begin{eqnarray}
    \begin{split}
    dM_t^c=&e^{-\delta t}(u(c_t)-y_tc_t)dt - \alpha u'(c_t)dc_{t}^{+}-\beta u'(c_t)dc_t^{-}+\left(e^{-\delta t}dJ(y_t,c_t)-e^{-\delta t}\delta J(y_t,c_t)dt\right)\\
    =&e^{-\delta t}\left(\dfrac{\t^2}{2}y_t^2J_{yy}(y_t,c_t)+(\delta-r)y_tJ_y(y_t,c_t)-\delta J(y_t,c_t)+u(c_t)-y_t c_t \right)dt\\
    +&e^{-\delta t}(J_c (y_t,c_t)-\alpha u'(c_t))d(c_{t}^{+})^c +e^{-\delta t}(-J_c (y_t,c_t)-\beta u'(c_t))d(c_{t}^{-})^c\\
    +&e^{-\delta t}(J(y_t,c_t)-J(y_t,c_{t-})-\alpha u'(c_t)\Delta c_t){\bf 1}_{\{\Delta c_t>0\}}\\
    +&e^{-\delta t}(J(y_t,c_t)-J(y_t,c_{t-})+\beta u'(c_t)\Delta c_t){\bf 1}_{\{\Delta c_t<0\}} -\t e^{-\delta t}y_t J_y(y_t ,c_t) dB_t
    \end{split}
    \end{eqnarray}
    where $(c^+)^c$ and $(c^-)^c$ are the continuous parts of $c^+$ and $c^-$, respectively.

    Hence, for any fixed $T>0$,
    \begin{eqnarray}
    \begin{split}\label{eq:generalized_ito}
    M_{T}^{c}-M_t^{c}=&\underbrace{\int_{t}^{T}e^{-\delta s}\left(\dfrac{\t^2}{2}y_s^2J_{yy}(y_s,c_s)+(\delta-r)y_sJ_y(y_s,c_s)-\delta J+u(c_s)-y_sc_s \right)ds}_{(A)}\\
    +&\underbrace{\int_{t}^{T}\left(J_c(y_s,c_s)-\alpha u'(c_s)\right)d(c_{s}^{+})^c +\int_{t}^{T}\left(-J_c(y_s,c_s)-\beta u'(c_s)\right)d(c_{s}^{-})^c}_{(B)}\\
    +&\underbrace{\sum_{t\le s \le T}e^{-\delta s}\left(J(y_s,c_s)-J(y_s,c_{s-})-\alpha u'(c_s)\Delta c_s\right){\bf 1}_{\{\Delta c_s>0\}}}_{(C)}\\
    +&\underbrace{\sum_{t\le s \le T}e^{-\delta s}\left(J(y_s,c_s)-J(y_s,c_{s-})+\beta u'(c_s)\Delta c_s\right){\bf 1}_{\{\Delta c_s < 0\}}}_{(D)}\\
    +&\underbrace{\int_{t}^{T}(-\t)e^{-\delta s}y_s J_y(y_s,c_s) dB_s}_{(E)}.
    \end{split}
    \end{eqnarray}
    Since
    $$
    \max\{\mathcal{L}J+u(c)-yc, J_c(y,c)-\alpha u'(c), -J_c(y,c)-\beta u'(c)\}=0,
    $$
    we deduce that
    $$
    (A)\le 0 \;\;\mbox{and}\;\;(B)\le 0.
    $$
    Moreover,
    \begin{eqnarray}
    \begin{split}
    (C)=&\sum_{t\le s \le T} e^{-\delta s}\int_{c_s-\Delta c_s}^{c_s}\left(J_c(y_s,c_s)-\alpha u'(c)\right)dc \cdot {\bf 1}_{\{\Delta c_s >0\}} \le 0,\\
    (D)=&\sum_{t\le s \le T} e^{-\delta s}\int_{0}^{|\Delta c_s|}\left(-J_c(y_s,c_s-|\Delta c_s|+c)-\beta u'(c)\right)dc \cdot {\bf 1}_{\{\Delta c_s <0\}} \le 0,
    \end{split}
    \end{eqnarray}
    and by assumption, $\mathbb{E}[(E)]=0$.

    Thus, we can conclude that
    $$
    \mathbb{E}_t[M_T^c - M_t^c]\le 0,
    $$
    and $\{M_t^{c}\}_{t\ge 0}$ is a super-martingale.

    This implies that $\mathbb{E}[M_T^c]\le J(y_0,c_0)$ and
    \begin{eqnarray}
    \begin{split}
    J(y_0, c_0) &\ge \mathbb{E}\left[\int_{0}^{T}e^{-\delta s} \left((u(c_s)-y_sc_s)ds)-\alpha u'(c_s)dc_{t}^{+}-\beta u'(c_s)dc_t^{-}\right)\right] + e^{-\delta T} J(y_T, c_T).
    \end{split}
    \end{eqnarray}
    By assumption
    $$
    \liminf_{T \to \infty}\mathbb{E}[e^{-\delta T}J(y_T,c_T)]\ge 0.
    $$
    and Fatou's lemma, we deduce that
    \begin{eqnarray}\label{eq:super_optimal}
    J(y_0,c_0)\ge \mathbb{E}\left[\int_{0}^{\infty}e^{-\delta s} \left((u(c_s)-y_sc_s)ds)-\alpha u'(c_s)dc_{t}^{+}-\beta u'(c_s)dc_t^{-}\right)\right].
    \end{eqnarray}
    The relation \eqref{eq:super_optimal} holds for any admissible consumption strategy $(c^{+},c^{-})$, we obtain
    $$
    J(y_0,c_0) \ge \sup_{(c^+,c^-)\in \Pi(c)}\mathbb{E}\left[\int_{0}^{\infty}e^{-\delta t}\left(h(y_t,c_t)dt-\alpha u'(c_t)dc_t^{+}-\beta u'(c_t)dc_t^-\right)\right].
    $$

\noindent \textbf{(Proof of 2.)}

    By assumption, we can show that in \eqref{eq:generalized_ito}
    $$
    \mathbb{E}[(A)]=\mathbb{E}[(B)]=\mathbb{E}[(C)]=\mathbb{E}[(D)]=\mathbb{E}[(E)]=0 \;\;\;\mbox{for the process}\;M_t^{c^*}.
    $$
    This implies that $\{M_t^{c^*}\}_{t\ge 0}$ is a martingale and
    \begin{eqnarray}
    \begin{split}
    J(y_0, c_0) &= \mathbb{E}\left[\int_{0}^{T}e^{-\delta s} \left((u(c^*)-y_s c^*)ds-\alpha u'(c^*)dc_{t}^{*,+}-\beta u'(c^*)dc_{t}^{*,-}\right)\right] + e^{-\delta T} J(y_T, c_T).
    \end{split}
    \end{eqnarray}
    The transversality condition leads to
    \begin{eqnarray}
    \begin{split}
    J(y_0, c_0) &= \mathbb{E}\left[\int_{0}^{\infty}e^{-\delta s} \left((u(c^*)-y_s c^*)ds-\alpha u'(c^*)dc_{t}^{*,+}-\beta u'(c^*)dc_{t}^{*,-}\right)\right].
    \end{split}
    \end{eqnarray}
    Thus,
    $$
    J(y_0, c_0) = \sup_{(c^+,c^-)\in \Pi(c)}\mathbb{E}\left[\int_{0}^{\infty}e^{-\delta s} \left((u(c)-y_s c)ds-\alpha u'(c)dc_{t}^{+}-\beta u'(c)dc_t^{-}\right)\right]
    $$
    and the consumption strategy $(c^{*,+},c^{*,-})$ attains the maximum. Hence $(c^{*,+},c^{*,-})$ is the optimal.

\end{proof}
Now, we will obtain the analytic characterization of the dual value function by using
the the variational inequality \eqref{eq:HJB_dual_value}.

As \citet{DY},  we consider the double obstacle problem arising from variational inequality \eqref{eq:HJB_dual_value} as follows:
\begin{eqnarray}\label{eq:double_obstacle1}
\begin{split}
\begin{cases}
\mathcal{L}w(y,c) + u'(c)-y \ge 0, \qquad &\mbox{for}\;\;w(y,c)=\alpha u'(c),\\
\mathcal{L}w(y,c) + u'(c)-y \le 0, \qquad &\mbox{for}\;\;w(y,c)=-\beta u'(c),\\
\mathcal{L}w(y,c) + u'(c)-y = 0, \qquad &\mbox{for}\;\;-\beta u'(c)< w(y,c)<\alpha u'(c),\\
\end{cases}
\end{split}
\end{eqnarray}

Consider the following substitution:
$$
w(y,c)=u'(c) H(z)\;\;\;\mbox{and}\;\;z=\dfrac{y}{u'(c)}.
$$
Then, the double obstacle problem \eqref{eq:double_obstacle1} can be changed by
\begin{eqnarray}\label{eq:double_obstacle2}
\begin{split}
\begin{cases}
\mathcal{L}H(z) + 1-z \ge 0, \qquad &\mbox{for}\;\;H(z)=\alpha,\\
\mathcal{L}H(z) + 1-z \le 0, \qquad &\mbox{for}\;\;H(z)=-\beta,\\
\mathcal{L}H(z) + 1-z = 0, \qquad &\mbox{for}\;\;-\beta <H(z)<\alpha,\\
\end{cases}
\end{split}
\end{eqnarray}
The following proposition provides the exact solution of the double obstacle problem \eqref{eq:double_obstacle2}.
\begin{pro}\label{pro:solution_VI_H}
    The variational inequality \eqref{eq:double_obstacle2} has a unique $\mathcal{C}^{1}$-solution, which is
    \begin{eqnarray}
    \begin{split}\label{eq:explicit_H}
    H(z)=
    \begin{cases}
    \;\;\alpha, \qquad &\mbox{for}\;\;z\le b_{\alpha},\\
    \;D_1 \left(\dfrac{z}{b_{\alpha}}\right)^{m_1}  + D_2 \left(\dfrac{z}{b_{\alpha}}\right)^{m_2} + \dfrac{1}{\delta}-\dfrac{z}{r}, \qquad &\mbox{for}\;\;b_{\alpha}< z < b_{\beta},\\
    -\beta,  \qquad &\mbox{for}\;\;z\ge b_{\beta},\\
    \end{cases}
    \end{split}
    \end{eqnarray}
    where
    $$
    D_1 = \dfrac{(\alpha-\frac{1}{\delta})m_2 + (m_2-1)\frac{b_{\alpha}}{r}}{m_2 - m_1},\;\;D_2 = \dfrac{(\alpha-\frac{1}{\delta})m_1 + (m_1-1)\frac{b_{\alpha}}{r}}{m_1 - m_2},
    $$
    and $m_1$,$m_2$ are positive and negative roots of following quadratic equation:
    $$
    \dfrac{\theta^2}{2}m^2 + (\delta-r-\dfrac{\theta^2}{2})m- \delta =0.
    $$
    and $b_{\alpha}$, $b_{\beta}$ are defined as
    $$
    b_{\alpha}=(1-\delta \alpha)\dfrac{m_1-1}{m_1}\dfrac{\frac{1}{\kappa}w^{m_1}-1}{w^{m_1-1}-1},\qquad b_{\beta}=(1+\delta \beta)\dfrac{m_1-1}{m_1}\dfrac{w^{m_1}-\kappa}{w^{m_1}-w}.
    $$
    with $\kappa=\dfrac{1-\delta\alpha}{1+\delta\beta}$.

    Here, $w$ is the unique root to the equation $f(w) =0$ in $(0,1)$, where
    \begin{equation}\label{eq:ff}
    f(w)=(m_1-1)m_2(1-w^{1-m_2})(w^{m_1}-\kappa) - m_1 (m_2-1)(w^{m_1} -w)(1-\kappa w^{-m_2}).
    \end{equation}

    Also,
    $$
    H'(b_\alpha)=H'(b_\beta)=0,\;\;H'(z)<0\;\;\mbox{for}\;\;z\in(b_\alpha,b_\beta)
    $$
    and  $H'(z)$ attains minimum at $b_m\in(b_\alpha,b_\beta)$ defined by
    $$
    b_m = b_{\alpha} \cdot \left(\dfrac{-D_2m_2(m_2-1)}{D_1m_1(m_1-1)}\right)^{\frac{1}{m_1-m_2}}.
    $$
\end{pro}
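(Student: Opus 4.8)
The plan is to read \eqref{eq:double_obstacle2} as a two--sided free--boundary problem: there are unknown thresholds $b_\alpha<b_\beta$ so that $H\equiv\alpha$ on the upper obstacle region $\{z\le b_\alpha\}$, $H$ solves the equation $\mathcal{L}H+1-z=0$ on the continuation region $(b_\alpha,b_\beta)$, and $H\equiv-\beta$ on the lower obstacle region $\{z\ge b_\beta\}$. Since $\mathcal{L}$ is an Euler operator, on the continuation region its homogeneous solutions are $z^{m}$ with $m$ a root of the indicial equation $\frac{\theta^2}{2}m(m-1)+(\delta-r)m-\delta=0$, which is exactly the stated quadratic. Evaluating the quadratic at $m=0$ and $m=1$ gives $-\delta<0$ and $-r<0$; as the leading coefficient is positive, the two roots lie on opposite sides of $[0,1]$, i.e. $m_2<0<1<m_1$. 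An affine ansatz $A+Bz$ produces the particular solution $\frac1\delta-\frac zr$, so the general solution on $(b_\alpha,b_\beta)$ is $C_1z^{m_1}+C_2z^{m_2}+\frac1\delta-\frac zr$, which I reparametrize as $D_1(z/b_\alpha)^{m_1}+D_2(z/b_\alpha)^{m_2}+\frac1\delta-\frac zr$. Because this is a double obstacle problem for an operator obeying the maximum principle, a unique $\mathcal{C}^1$ solution with smooth pasting is expected, and I would invoke the double--obstacle theory of \citet{DY} both for existence of a $\mathcal{C}^1$ solution and for its uniqueness; the remaining task is to exhibit the boundaries and coefficients explicitly.

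Next I would impose the four $\mathcal{C}^1$ conditions $H(b_\alpha)=\alpha$, $H'(b_\alpha)=0$, $H(b_\beta)=-\beta$, $H'(b_\beta)=0$. The pair at $b_\alpha$ is linear in $(D_1,D_2)$ and, solved by Cramer's rule, reproduces the stated formulas for $D_1,D_2$ as affine functions of $b_\alpha$. Substituting these into the pair at $b_\beta$ leaves two equations in the two unknowns $b_\alpha$ and $\eta:=b_\beta/b_\alpha$. The clean simplification is that $\alpha-\frac1\delta=-\frac{1-\delta\alpha}{\delta}$ and $-\beta-\frac1\delta=-\frac{1+\delta\beta}{\delta}$, whose ratio is precisely $\kappa$; this makes the system homogeneous enough that $b_\alpha$ can be eliminated between the two equations, and the resulting scalar equation in $w:=1/\eta=b_\alpha/b_\beta$ is exactly $f(w)=0$ from \eqref{eq:ff}. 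Back--substitution then recovers the closed forms for $b_\alpha$ and $b_\beta$.

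The main obstacle is the existence and uniqueness of the root $w\in(0,1)$, together with the algebra that produces $f$. For existence I would examine $f$ at the endpoints: as $w\to0^+$ all positive powers vanish, leaving $f(0^+)=-\kappa(m_1-1)m_2>0$ (using $m_1>1$, $m_2<0$, $\kappa>0$), whereas at $w=1$ the factors $w^{m_1}-w$ and $1-w^{1-m_2}$ both vanish, so $f(1)=0$; this endpoint root is the degenerate case $b_\alpha=b_\beta$ and must be discarded. I would then determine the sign of $f$ just below $1$ (through $f'(1)$, or by factoring out the vanishing factor) to show $f$ changes sign on $(0,1)$, yielding an interior root by the intermediate value theorem, and establish uniqueness by a monotonicity argument on the remaining factor (alternatively, uniqueness can be inherited from the uniqueness of the $\mathcal{C}^1$ solution supplied by the obstacle theory).

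Finally I would verify the variational inequalities and the qualitative claims. In the obstacle regions $\mathcal{L}\alpha+1-z=1-\delta\alpha-z$ and $\mathcal{L}(-\beta)+1-z=1+\delta\beta-z$, so the required signs hold for all $z\le b_\alpha$ and $z\ge b_\beta$ precisely when $b_\alpha\le1-\delta\alpha$ and $b_\beta\ge1+\delta\beta$, which follow from the location of $w$; positivity of $b_\alpha,b_\beta$ is read off the same formulas. For the interior, $H'(b_\alpha)=H'(b_\beta)=0$ hold by construction. I would check the sign conditions $D_1>0$ and $D_2<0$, so that $H''(z)$, a combination of $(z/b_\alpha)^{m_1-2}$ and $(z/b_\alpha)^{m_2-2}$ with coefficients $D_1m_1(m_1-1)>0$ and $D_2m_2(m_2-1)<0$, changes sign exactly once as $z$ increases (the ratio of the two basis functions is strictly monotone). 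Hence $H'$ is decreasing then increasing, vanishing at both endpoints, so it attains its unique interior minimum at the point $b_m$ obtained by solving $H''=0$, namely the stated $b_m$, and is strictly negative on $(b_\alpha,b_\beta)$; this in turn gives $-\beta<H<\alpha$ there. Assembling these facts shows the displayed $H$ is the unique $\mathcal{C}^1$ solution of \eqref{eq:double_obstacle2} with all the asserted properties.
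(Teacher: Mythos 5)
Your overall route is the same as the paper's: solve the Euler ODE on the continuation region with smooth pasting at two unknown boundaries, eliminate $b_\alpha$ to reduce to the scalar equation $f(w)=0$ in $w=b_\alpha/b_\beta$, then verify the obstacle inequalities and the shape of $H'$ via the signs of $D_1,D_2$. The setup, the derivation of $D_1,D_2$, the reduction to $f$, and the final verification of $\mathcal{L}\alpha+1-z\ge 0$ on $\{z\le b_\alpha\}$ and $\mathcal{L}(-\beta)+1-z\le 0$ on $\{z\ge b_\beta\}$ all match.

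The genuine gap is in the treatment of the root $w$, which you correctly flag as the main obstacle but do not resolve. Your endpoint analysis ($f(0^+)>0$, $f(1)=0$, sign of $f$ just below $1$) gives existence by the intermediate value theorem, but not uniqueness: since $f(0^+)>0$ and $f(1^-)=0^-$, the function $f$ is manifestly \emph{not} monotone on $(0,1)$, so ``a monotonicity argument on the remaining factor'' will not close the argument as stated; and inheriting uniqueness of the root from uniqueness of the $\mathcal{C}^1$ solution of the obstacle problem is circular in the wrong direction --- a second root of $f$ would a priori produce a second candidate $H$, and you would still have to rule it out analytically. The paper's resolution is sharper and is needed downstream: it first shows $f(w)<0$ on the whole interval $(\kappa,1)$ and $f(\kappa)<0$, so any root lies in $(0,\kappa)$, and then proves $f''>0$ on $(0,\kappa)$, so a convex function with $f(0)>0>f(\kappa)$ has exactly one zero there. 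The localization $w\in(0,\kappa)$ is not a technical nicety: it is precisely what delivers $b_\alpha<1-\delta\alpha$ and $b_\beta>1+\delta\beta$ (via the auxiliary monotone functions $f_3,f_4$ evaluated on $(0,\kappa)$), and also $D_1>0$, $D_2<0$ (which reduces to $\kappa w^{m_1-1}>w^{m_1}$, i.e.\ $w<\kappa$). You assert these facts ``follow from the location of $w$,'' but the location your sketch establishes --- merely $w\in(0,1)$ --- is not enough; you need $w<\kappa$, and that requires the extra sign analysis of $f$ on $(\kappa,1)$ and at $\kappa$ that your proposal omits.
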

\begin{proof}

The uniqueness of the solution is guaranteed due to the maximum principle of the partial differential equation theory(see \citet{Lieberman}).

Now, we will prove the remain part of proposition in the following steps.

\noindent {\bf (Step 1)} We first consider the following free boundary problem:
\begin{eqnarray}\label{eq:free_boundary}
\begin{split}
\begin{cases}
\mathcal{L}H + 1 - z =0,\;\;\;&\;\;b_{\alpha}<z<b_{\beta},\\
H(b_{\alpha})=\alpha,\;\;\;&H'(b_{\alpha})=0,\\
H(b_{\beta})=-\beta,\;\;\;&H'(b_{\beta})=0.
\end{cases}
\end{split}
\end{eqnarray}
Then we can extend the solution $H$ onto $\mathbb{R}_{+}$ by
\begin{eqnarray}
H(z)=\alpha\;\;\mbox{if}\;\;z\in(0,b_{\alpha})\;\;\;\mbox{and}\;\;\;H(z)=-\beta\;\;\mbox{if}\;\;z\in(b_{\beta},\infty).
\end{eqnarray}
Next, we show that $H(z)$ is the solution to variational inequality \eqref{eq:double_obstacle2}. We can let the general solution for \eqref{eq:free_boundary} in the form of
$$
H(z)=D_1 \left(\dfrac{z}{b_{\alpha}}\right)^{m_1} + D_2 \left(\dfrac{z}{b_{\beta}}\right)^{m_2}+\dfrac{1}{\delta}-\dfrac{z}{r}.
$$
From the smooth-pasting condition $H(b_{\alpha})=\alpha$ and $H'(b_{\alpha})=0$,
\begin{eqnarray}\begin{split}
&H(b_{\alpha})=D_1 + D_2 +\dfrac{1}{\delta}-\dfrac{b_{\alpha}}{r}=\alpha,\\
&H'(b_{\alpha})=m_1 D_1 + m_2 D_2 -\dfrac{1}{r}=0.
\end{split}
\end{eqnarray}
Therefore, $D_1$ and $D_2$ are given by
\begin{eqnarray}\label{eq:D1D2_1}
D_1=\dfrac{(\alpha-\frac{1}{\delta})m_2 + (m_2-1)\frac{b_{\alpha}}{r}}{m_2-m_1},\;\;D_2=\dfrac{(\alpha-\frac{1}{\delta})m_1 + (m_1-1)\frac{b_{\alpha}}{r}}{m_1-m_2}.
\end{eqnarray}
Similarly,
\begin{eqnarray}\begin{split}
&H(b_{\beta})=D_1\left(\dfrac{b_{\beta}}{b_{\alpha}}\right)^{m_1} + D_2 \left(\dfrac{b_{\beta}}{b_{\alpha}}\right)^{m_2}+\dfrac{1}{\delta}-\dfrac{b_{\beta}}{r}=-\beta,\\
&H'(b_{\beta})=\dfrac{m_1 D_1}{b_{\alpha}}\left(\dfrac{b_{\beta}}{b_{\alpha}}\right)^{m_1-1} + \dfrac{m_2 D_2}{b_{\alpha}}\left(\dfrac{b_{\beta}}{b_{\alpha}}\right)^{m_2-1} -\dfrac{1}{r}=0,
\end{split}
\end{eqnarray}
and
\begin{eqnarray}\label{eq:D1D2_2}
D_1=\dfrac{-(\beta+\frac{1}{\delta})m_2 + (m_2-1)\frac{b_{\beta}}{r}}{m_2 - m_1}\left(\dfrac{b_{\alpha}}{b_{\beta}}\right)^{m_1},\;\;D_2=\dfrac{-(\beta+\frac{1}{\delta})m_1 + (m_1-1)\frac{b_{\beta}}{r}}{m_1 - m_2}\left(\dfrac{b_{\alpha}}{b_{\beta}}\right)^{m_2}.
\end{eqnarray}
From \eqref{eq:D1D2_1} and \eqref{eq:D1D2_2},
\begin{eqnarray}
\begin{split}\label{eq:D1D2_3}
(\alpha-\dfrac{1}{\delta})m_2 + (m_2-1)\dfrac{b_{\alpha}}{r}&=\left(-(\beta+\frac{1}{\delta})m_2+(m_2-1)\dfrac{b_{\beta}}{r}\right)\cdot\left(\dfrac{b_{\alpha}}{b_{\beta}}\right)^{m_1},\\
(\alpha-\dfrac{1}{\delta})m_1 + (m_1-1)\dfrac{b_{\alpha}}{r}&=\left(-(\beta+\frac{1}{\delta})m_1+(m_1-1)\dfrac{b_{\beta}}{r}\right)\cdot\left(\dfrac{b_{\alpha}}{b_{\beta}}\right)^{m_2}.\\
\end{split}
\end{eqnarray}
Let us define
$$
w=\dfrac{b_{\alpha}}{b_{\beta}}.
$$
From \eqref{eq:D1D2_3},
\begin{eqnarray}\label{eq:D1D2_4}
\dfrac{m_2}{m_1}\dfrac{(\alpha-\frac{1}{\delta})+(\beta+\frac{1}{\delta})w^{m_1}}{(\alpha-\frac{1}{\delta})+(\beta+\frac{1}{\delta})w^{m_2}}=\dfrac{m_2-1}{m_1-1}\dfrac{w^{m_1}-w}{w^{m_2}-w}.
\end{eqnarray}
From \eqref{eq:D1D2_4}, we define $f(w)$ as \eqref{eq:f}.\\

\noindent{\bf (Step 2)} $f(w)$ has a unique solution $w\in(0,1)$ and $w\in(0,\kappa)$.\\

For $w\in(\kappa,1)$,
\begin{eqnarray}
\begin{split}
f(w)=&w^{-m_2}\left((m_1-1)m_2(w^{m_2}-w)(w^{m_1}-\kappa)-m_1(m_2-1)(w^{m_1}-w)(w^{m_2}-\kappa)\right)\\
<&w^{-m_2}\left((m_1-1)m_2(w^{m_2}-w)(w^{m_1}-\kappa)-m_1(m_2-1)(w^{m_1}-\kappa)(w^{m_2}-\kappa)\right)\\
=&w^{-m_2}(w^{m_1}-\kappa)(w^{m_2}-\kappa)(m_1-m_2)\\
<&0.
\end{split}
\end{eqnarray}
On the other hand,
\begin{eqnarray}
\begin{split}
f(0)=&(m_1-1)m_2(-\kappa)>0,\\
f(\kappa)=&(m_1-1)m_2(1-\kappa^{(1-m_2)})(\kappa^{m_1}-\kappa)-m_1(m_2-1)(\kappa^{m_1}-\kappa)(1-\kappa^{1-m_2})\\
=&(m_1-m_2)(1-\kappa^{(1-m_2)})(\kappa^{m_1}-\kappa)<0.
\end{split}
\end{eqnarray}
By the Intermediate Value Theorem, there exists $w\in(0,\kappa)$ such that $f(w)=0$. To show ``uniqueness", it is suffice to show that
$$
f''(w) > 0,\;\;\;w\in(0,\kappa).
$$
Then,
\begin{eqnarray}
\begin{split}
f(w)=&w^{-m_2}\left((m_1-1)m_2(w^{m_2}-w)(w^{m_1}-\kappa)-m_1(m_2-1)(w^{m_1}-w)(w^{m_2}-\kappa)\right)\\=&(m_1-m_2)w^{m_1}+(m_1-m_2)\kappa w^{1-m_2}-(m_1-1)m_2 w^{m_1-m_2+1}-\kappa (m_1-1)m_2\\
+&m_1(m_2-1)w+ \kappa m_1(m_2-1)w^{m_1-m_2},
\end{split}
\end{eqnarray}
and
\begin{eqnarray}\label{eq:f''1}
\begin{split}
f''(w)=&(m_1-m_2)m_1(m_1-1)w^{m_1-2}+\kappa(m_1-m_2)(1-m_2)(-m_2)w^{-m_2-1}
\\-&(m_1-1)m_2(m_1-m_2+1)(m_1-m_2)w^{m_1-m_2-1}\\+&\kappa m_1(m_2-1)(m_1-m_2)(m_1-m_2-1)w^{m_1-m_2-2}\\
=&(m_1-m_2)\left(m_1(m_1-1)w^{m_1-2}+\kappa m_2(m_2-1)w^{-m_2-1}-(m_1-1)m_2(m_1-m_2+1)w^{m_1-m_2-1}\right.\\
+&\left. \kappa m_1(m_2-1)(m_1-m_2-1)w^{m_1-m_2-2}\right)\\
=&(m_1-m_2)\left(\underbrace{m_1(m_1-1)w^{m_1-2}-(m_1-1)m_2m_1w^{m_1-m_2-1}+\kappa m_1(m_2-1)(m_1-1)w^{m_1-m_2-2}}_{{\bf (A)}}\right.\\
+&\left. \underbrace{\kappa m_2(m_2-1)w^{-m_2-1}+(m_1-1)m_2(m_2-1)w^{m_1-m_2-1}-\kappa m_1(m_2-1)m_2w^{m_1-m_2-2}}_{{\bf (B)}}\right).
\end{split}
\end{eqnarray}
Let us temporarily denote
$$
f_1(w)=(1-m_2 w^{1-m_2}+(m_2-1)\kappa w^{-m_2}).
$$
Since
$$
f_1'(w)=-m_2(1-m_2)w^{-m_2}-(m_2-1)m_2\kappa w^{-m_2-1}=(-m_2)(1-m_2)w^{-m_2-1}(w-\kappa)<0,
$$
\begin{eqnarray}\label{eq:f''2}
\begin{split}
{\bf (A)}=&m_1(m_1-1)(w^{m_1-2}-m_2w^{m_1-m_2-1}+\kappa(m_2-1)w^{m_1-m_2-2})
\\=&m_1(m_1-1)w^{m_1-2}\underbrace{(1-m_2 w^{1-m_2}+(m_2-1)\kappa w^{-m_2})}_{=f_1(w)}\\
>&m_1(m_1-1)w^{m_2-2}g_1(\kappa)=m_1(m_1-1)w^{m_2-2}f_1(\kappa)=m_1(m_1-1)w^{m_2-2}(1-\kappa^{1-m_2})\\
>&0.
\end{split}
\end{eqnarray}
Let us temporarily denote
$$
f_2(w)=(\kappa + (m_1-1)w^{m_1}-\kappa m_1 w^{m_1-1}).
$$
Since
$$
f_2'(w)=m_1(m_1-1)w^{m_1-1} - \kappa m_1(m_1-1)w^{m_1-2}=m_1(m_1-1)w^{m_1-2}(w-\kappa)<0,
$$
\begin{eqnarray}\label{eq:f''3}
\begin{split}
{\bf (B)}=&m_2(m_2-1)(\kappa w^{-m_2-1}+(m_1-1)w^{m_1-m_2-1}-\kappa m_1 w^{m_1-m_2-2})\\
=&m_2(m_2-1)w^{-m_2-1}\underbrace{(\kappa +(m_1-1)w^{m_1}-\kappa m_1 w^{m_1-1})}_{=f_2(w)}\\
>&m_2(m_2-1)w^{-m_2-1}f_2(\kappa)=m_2(m_2-1)w^{-m_2-1}\kappa(1-\kappa^{m_1-1})>0.
\end{split}
\end{eqnarray}
By \eqref{eq:f''1}, \eqref{eq:f''2} and \eqref{eq:f''3}, we can conclude that
$$
f''(w)>0\;\;\;\mbox{on}\;\;(0,\kappa)
$$
and $f(w)$ has a unique solution $w$ on $(0,1)$ with $w\in(0,\kappa)$.\\

\noindent {\bf (Step 3)} The two free boundaries $b_{\alpha}$ and $b_{\beta}$ are uniquely determined. Moreover, $0<b_{\alpha}<(1-\delta \alpha)$ and $(1+\delta \beta)<b_{\beta}<\infty$.\\

From \eqref{eq:D1D2_3},
$$
b_{\alpha}=(1-\delta \alpha)\dfrac{m_1-1}{m_1}\dfrac{\frac{1}{\kappa}w^{m_1}-1}{w^{m_1-1}-1}\;\;\;\mbox{and}\;\;\;b_{\beta}=(1+\delta \beta)\dfrac{m_1-1}{m_1}\dfrac{w^m_1 -\kappa}{w^{m_1}-w}.
$$
Thus, $b_{\alpha}$ and $b_{\beta}$ are uniquely determined.

Let us temporarily denote
$$
f_3(w)=w^{m_1}-m_1 w +\kappa m_1 - \kappa.
$$
Since
$$
f_3'(w)=m_1(w^{m_1-1}-1)>0\;\;\;\mbox{and}\;\;\;f_3(0)>0,
$$
we deduce that
$$
f_3(w)>0\;\;\;\mbox{on}\;(0,\kappa).
$$
This leads to
$$
\dfrac{m_1-1}{m_1}\dfrac{w^m_1 -\kappa}{w^{m_1}-w}>1
$$
and $b_{\beta}>(1+\delta \beta)$.

Let us also temporarily denote
$$
f_4(w)=(m_1-1)w^{m_1}-\kappa m_1 w^{m_1-1} +\kappa.
$$
Since
$$
f_4'(w)=(m_1-1)m_1 w^{m_1-2}(w-\kappa)<0\;\;\;\mbox{and}\;\;f_4(\kappa)=\kappa-\kappa^{m_1}>0,
$$
we deduce that
$$
f_4(w)>0\;\;\;\mbox{on}\;(0,\kappa).
$$
This leads to
$$
\dfrac{m_1-1}{m_1}\dfrac{\frac{1}{\kappa}w^{m_1}-1}{w^{m_1-1}-1}<1
$$
and $b_{\alpha}<(1-\delta \alpha)$.\\

\noindent {\bf (Step 4)} In $z\in(b_{\alpha}, b_{\beta})$, $H'(z) < 0$  and attains minimum at $b_m$.\\

First, we will show that
$$
D_1 > 0 \;\;\;\mbox{and}\;\;D_2 <0.
$$
Since
$$
D_1 = \dfrac{(\alpha-\frac{1}{\delta})m_2 + (m_2-1)\frac{b_{\alpha}}{r}}{m_2 - m_1}, \;\;D_2 = \dfrac{(\alpha-\frac{1}{\delta})m_1 + (m_1-1)\frac{b_{\alpha}}{r}}{m_1 - m_2},
$$
\begin{eqnarray}
\begin{split}
D_1 > 0 \;\;&\Longleftrightarrow\;\;(\alpha-\dfrac{1}{\delta})m_2 + (m_2-1)\dfrac{b_{\alpha}}{r}<0.\\
&\Longleftrightarrow\;\;b_{\alpha} > \dfrac{-m_2}{m_2-1}\dfrac{r}{\delta}(1-\delta \alpha).\\
&\Longleftrightarrow\;\;(1-\delta \alpha)\dfrac{m_1-1}{m_1}\dfrac{\frac{1}{\kappa}w^{m_1}-1}{w^{m_1-1}-1} > \dfrac{-m_2}{m_2-1}\dfrac{r}{\delta}(1-\delta \alpha).\\
&\Longleftrightarrow\;\;\dfrac{\frac{1}{\kappa}w^{m_1}-1}{w^{m_1-1}-1} > 1.\\
&\Longleftrightarrow\;\;\kappa w^{m_1-1} > w^{m_1}.
\end{split}
\end{eqnarray}
Similarly, we can deduce that $D_2<0$.

We know that $H'(b_{\alpha})=H'(b_{\beta})=0$ and
$$
H''(z)=\dfrac{D_1 m_1(m_1-1)}{b_{\alpha}^2}\left(\dfrac{z}{b_{\alpha}}\right)^{m_1-2}+\dfrac{D_2 m_2}{b_{\alpha}^2}\left(\dfrac{z}{b_{\alpha}}\right)^{m_2-2}.
$$
Since $H''(b_m)=0$, it is enough to show that
$$
b_\alpha < b_m < b_\beta.
$$
By the definition of $b_m$,
$$
b_\alpha < b_m < b_\beta\;\;\;\Longleftrightarrow\;\;\;1<\left(\dfrac{-D_2 m_2(m_2-1)}{D_1m_1(m_1-1)}\right)^\frac{1}{m_1-m_2}<\dfrac{1}{x}.
$$
Since
$$
D_1 = \dfrac{(\alpha-\frac{1}{\delta})m_2 + (m_2-1)\frac{b_{\alpha}}{r}}{m_2 - m_1}, \;\;D_2 = \dfrac{(\alpha-\frac{1}{\delta})m_1 + (m_1-1)\frac{b_{\alpha}}{r}}{m_1 - m_2},
$$
we can easily check that
$$
1<\left(\dfrac{-D_2 m_2(m_2-1)}{D_1m_1(m_1-1)}\right)^\frac{1}{m_1-m_2}\;\;\;\Longleftrightarrow\;\;\;b_\alpha <(1-\delta\alpha).
$$
Also, we know that
\begin{eqnarray*}
    D_1=\dfrac{-(\beta+\frac{1}{\delta})m_2 + (m_2-1)\frac{b_{\beta}}{r}}{m_2 - m_1}\left(\dfrac{b_{\alpha}}{b_{\beta}}\right)^{m_1},\;\;D_2=\dfrac{-(\beta+\frac{1}{\delta})m_1 + (m_1-1)\frac{b_{\beta}}{r}}{m_1 - m_2}\left(\dfrac{b_{\alpha}}{b_{\beta}}\right)^{m_2}.
\end{eqnarray*}
This implies that
$$
\left(\dfrac{-D_2 m_2(m_2-1)}{D_1m_1(m_1-1)}\right)^\frac{1}{m_1-m_2}<\dfrac{1}{x}\;\;\;\Longleftrightarrow\;\;\;b_\beta>(1+\delta\beta).
$$
Thus, we deduce that
$$
b_\alpha < b_m < b_\beta.
$$
and
$$
H''(z)<0,\;\;\mbox{on}\;\;(b_{\alpha},b_m)\;\;\;\mbox{and}\;\;\;H''(z)>0,\;\;\mbox{on}\;\;(b_m,b_{\beta}).
$$
Hence, $H'(z)$ attains minimum at $z=b_m$ and $H'(z)<0$ on $(b_{\alpha},b_{\beta})$.\\

\noindent{\bf (Step 4)} $H(z)$ satisfies the variational inequality \eqref{eq:double_obstacle2}.\\

\begin{itemize}
    \item For $z\in(b_{\alpha},b_{\beta})$.
    it is clear that
    \begin{eqnarray}
    \mathcal{L}H + 1 -z = 0.
    \end{eqnarray}
    Since $H(b_{\alpha})=\alpha,\;H(b_{\beta})=-\beta$ and $H'(z)$ is strictly decreasing function on $(b_{\alpha},b_{\beta})$,
    $$
    -\beta < H(z) < \alpha \;\;\;\mbox{on}\;\;(b_{\alpha},b_{\beta}).
    $$
    \item  For $z\le b_{\alpha}$, $H(z)=\alpha$ and
    \begin{eqnarray*}
        \begin{split}
            \mathcal{L}H+1-z&=1-\delta \alpha -z \ge 0.
        \end{split}
    \end{eqnarray*}
    \item  For $z\ge b_{\beta}$, $H(z)=-\beta$ and
    \begin{eqnarray*}
        \begin{split}
            \mathcal{L}H+1-z&=1+ \delta \beta -z \le 0.
        \end{split}
    \end{eqnarray*}
\end{itemize}

From {\bf (Step 1)} $\sim$ {\bf (Step 4)}, we have proved the desired result.

\end{proof}
By Proposition \ref{pro:solution_VI_H}, $w(y,c)$ given by
\begin{eqnarray}\label{eq:explicit_w}
\begin{split}
w(y,c)=
\begin{cases}
\;\;\alpha u'(c), \; &\mbox{for}\;\;\dfrac{y}{u'(c)}\le b_{\alpha},\\
\;D_1 u'(c)\left(\dfrac{y}{u'(c)b_{\alpha}}\right)^{m_1}  + D_2 u'(c)\left(\dfrac{y}{u'(c)b_{\alpha}}\right)^{m_2} + \dfrac{u'(c)}{\delta}-\dfrac{y}{r}, \; &\mbox{for}\;\;b_{\alpha}< \dfrac{y}{u'(c)} < b_{\beta},\\
-\beta u'(c),  \; &\mbox{for}\;\;\dfrac{y}{u'(c)}\ge b_{\beta},\\
\end{cases}
\end{split}
\end{eqnarray}
is a solution of the double obstacle problem \eqref{eq:double_obstacle1}.

Using the $w(y,c)$ in the equation \eqref{eq:explicit_w}, we construct the dual value function $J(y,c)$ as follows:

\noindent(i) For $b_{\alpha} u'(c) < y < b_{\beta} u'(c)$,
\begin{eqnarray}
\begin{split}\label{eq:J_define_1}
J(y,c)=&\int_{0}^{c}u'(x)D_{1}\left(\dfrac{y}{u'(x)b_{\alpha}}\right)^{m_1}dx-\int_{c}^{\infty}u'(x)D_{2}\left(\dfrac{y}{u'(x)b_{\alpha}}\right)^{m_2}dx +\dfrac{u(c)}{\delta}-\dfrac{yc}{r}
\end{split}
\end{eqnarray}
\noindent(ii) For $b_{\alpha} u'(c) \ge y$,
\begin{eqnarray}\label{eq:J_define_2}
\begin{split}
J(y,c)=J\left(y,I(\dfrac{y}{b_{\alpha}})\right)+\alpha\left(u(c)-u(I(\dfrac{y}{b_{\alpha}}))\right).
\end{split}
\end{eqnarray}
\noindent(iii) For $b_{\beta} u'(c) \le y$,
\begin{eqnarray}\label{eq:J_define_3}
\begin{split}
J(y,c)=J\left(y,I(\dfrac{y}{b_{\beta}})\right)-\beta\left(u(c)-u(I(\dfrac{y}{b_{\beta}}))\right).
\end{split}
\end{eqnarray}
where the function $I(\cdot) : \mathbb{R}_{+} \to \mathbb{R}_{+}$ is defined by
$$
I(y)\equiv (u')^{-1}(y)=y^{-\frac{1}{\gamma}}.
$$
\begin{rem}
    It is easy to check that
    $$
    m_1 >1 \;\;\mbox{and}\;\;m_2<0.
    $$
    and for $u(c)=\frac{c^{1-\gamma}}{1-\gamma}$ with $\gamma(\neq 1)>0$,
    $$
    m_2 < -\dfrac{1-\gamma}{\gamma} < m_1.
    $$
    Thus, the two integrals in \eqref{eq:J_define_1} are well-defined and $J(y,c)$ is
    $$
    J(y,c)=D_1 \dfrac{yc}{(1-\gamma+\gamma m_1)b_{\alpha}}\left(\dfrac{y}{c^{-\gamma}b_{\alpha}}\right)^{m_1-1} +D_2 \dfrac{yc}{(1-\gamma+\gamma m_2)b_{\alpha}}\left(\dfrac{y}{c^{-\gamma}b_{\alpha}}\right)^{m_2-1} +\dfrac{1}{\delta}\dfrac{c^{1-\gamma}}{1-\gamma}-\dfrac{yc}{r}.
    $$
\end{rem}
\begin{rem} For $J(y,c)$ defined in \eqref{eq:J_define_1}, \eqref{eq:J_define_2} and \eqref{eq:J_define_3}, we can easily confirm that
    $$
    J_c(y,c)= w(y,c).
    $$
\end{rem}
\begin{pro}\label{pro:J_HJB}
    For the function $J(y,c)$ defined in \eqref{eq:J_define_1}, \eqref{eq:J_define_2} and \eqref{eq:J_define_3}, the following statements are true:
    \begin{itemize}
        \item[1.] $J(y,c)$ is a twice continuously differentiable and satisfies the HJB equaton \eqref{eq:HJB_dual_value}. Moreover, the regions ${\bf IR}$, ${\bf NR}$ and ${\bf DR}$ are represented by
        \begin{eqnarray*}
            \begin{split}
                {\bf IR}&=\{(y,c)\in\mathcal{R} \mid y \le u'(c)b_{\alpha}\},\\
                {\bf NR}&=\{(y,c)\in\mathcal{R}\mid u'(c)b_{\alpha} < y < u'(c)b_{\beta} \},\\
                {\bf DR}&=\{(y,c)\in\mathcal{R} \mid  u'(c)b_{\beta}\le y\},
            \end{split}
        \end{eqnarray*}
        respectively.
        \item[2.] For any admissible consumption strategy $(c^{+},c^{-})$,
        $$
        \int_{0}^{t}(-\theta)y_s J_y(y_s,c_s)ds,\;\;\;\forall t\ge 0
        $$
        is a martingale.
        \item[3.] For any admissible consumption strategy $(c^{+},c^{-})$,
        $$
        \lim_{t \to \infty} e^{-\delta t} \mathbb{E}\left[J(y_t,c_t)\right]= 0.
        $$

    \end{itemize}
\end{pro}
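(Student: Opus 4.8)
The plan is to build everything on the representation already established: the candidate $J$ was constructed so that $J_c(y,c)=w(y,c)=u'(c)H(y/u'(c))$ with $H$ the $\mathcal C^1$ double–obstacle solution of Proposition \ref{pro:solution_VI_H}, and the operator $\mathcal L$ differentiates only in $y$, so it commutes with $\partial_c$. These two facts drive all three claims, and the three smooth–pasting/sign properties proved for $H$ (namely $H'(b_\alpha)=H'(b_\beta)=0$, $-\beta\le H\le\alpha$, and the double–obstacle inequalities) are exactly what is needed.

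For Part 1, to get $\mathcal C^2$ regularity I would check that the three branches of $J$ paste smoothly across the free boundaries $y=u'(c)b_\alpha$ and $y=u'(c)b_\beta$. Differentiating $J_c=w=u'(c)H(z)$, $z=y/u'(c)$, gives $J_{cy}=H'(z)$ and $J_{cc}=u''(c)\bigl(H(z)-zH'(z)\bigr)$, both continuous since $H\in\mathcal C^1$; the only nontrivial matching is $J_{yy}$, and a short computation shows that the $\mathbf{IR}$ and $\mathbf{NR}$ values of $J_{yy}$ at a boundary point differ by a multiple of $J_{cy}=H'(z)$ there, which vanishes precisely because $H'(b_\alpha)=H'(b_\beta)=0$. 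For the variational inequality I set $G(y,c):=\mathcal LJ+u(c)-yc$; by construction $G\equiv0$ on $\mathbf{NR}$, and commutation gives $G_c=\mathcal Lw+u'(c)-y$. Integrating $G_c$ in $c$ from the $\mathbf{NR}$ boundary outward and using the sign conditions \eqref{eq:double_obstacle1} ($\mathcal Lw+u'(c)-y\ge0$ where $w=\alpha u'(c)$, and $\le0$ where $w=-\beta u'(c)$) yields $G\le0$ on $\mathbf{IR}$ and $\mathbf{DR}$; the remaining obstacle inequalities $J_c-\alpha u'(c)\le0$ and $-J_c-\beta u'(c)\le0$ are just $-\beta u'(c)\le w\le\alpha u'(c)$, and the region descriptions follow by reading off the sign of $z-b_\alpha$ and $z-b_\beta$.

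For Part 2 the crux is the uniform estimate $|y\,J_y(y,c)|\le C\,y^{1-1/\gamma}$, with $C$ depending only on the parameters. On $\mathbf{NR}$, $-J_y=X$ is given by \eqref{optimal-wealth-NR}, whose bracket is a bounded function of $z\in[b_\alpha,b_\beta]$, so $|J_y|\le Cc$, and since $yc=z^{1/\gamma}y^{1-1/\gamma}\le b_\beta^{1/\gamma}y^{1-1/\gamma}$ there, $|yJ_y|\le Cy^{1-1/\gamma}$; on $\mathbf{IR}$ and $\mathbf{DR}$, $J_y$ is independent of $c$ (it equals the $\mathbf{NR}$ value at the corresponding boundary point $c=I(y/b_\alpha)$ or $I(y/b_\beta)$), giving the same bound. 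Consequently
\[
\mathbb E\!\int_0^t e^{-2\delta s}\theta^2 y_s^2 J_y(y_s,c_s)^2\,ds\le C\!\int_0^t e^{-2\delta s}\,\mathbb E\bigl[y_s^{\,2-2/\gamma}\bigr]\,ds<\infty,
\]
because $y_s$ is log-normal, so all its (positive and negative) moments are finite; hence the local martingale $\int_0^\cdot e^{-\delta s}(-\theta)y_s J_y(y_s,c_s)\,dB_s$ of condition (1) in Theorem \ref{thm:verification} is a genuine $L^2$-martingale.

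For Part 3 I would write $J(y,c)=\lambda(y,c)\,u(c)+R(y,c)$, where $\lambda\in\{0,\alpha,-\beta\}$ according to the region and $|R|\le C\,y^{1-1/\gamma}$. The remainder vanishes because a direct computation gives $e^{-\delta t}\mathbb E[y_t^{\,1-1/\gamma}]=y^{1-1/\gamma}e^{-Kt}$, the growth exponent coinciding with $-K$ for the constant $K$ of the standing assumption, so $K>0$ forces decay. The main obstacle is the surviving term $e^{-\delta t}\mathbb E[|u(c_t)|]$, which enters only through $\mathbf{IR}/\mathbf{DR}$: for the optimal policy the state $(y_t,c_t^*)$ never leaves $\overline{\mathbf{NR}}$, whence $|u(c_t^*)|\le C\,y_t^{1-1/\gamma}$ and the $K>0$ estimate again delivers the full limit $0$; for an arbitrary admissible strategy I would instead invoke the admissibility condition \eqref{eq:well-posed}, $\mathbb E\int_0^\infty e^{-\delta s}|u(c_s)|\,ds<\infty$. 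Turning this integral bound into the pointwise (or $\liminf$) statement required by the verification theorem is the delicate step, and it is where the regularity of the controlled process together with the monotone structure of $c^{+},c^{-}$ must be exploited.
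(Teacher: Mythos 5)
Your proposal follows essentially the same route as the paper: smooth pasting of $J_y$ and $J_{yy}$ across the free boundaries using $J_c=\alpha u'$ (resp.\ $-\beta u'$) together with $H'(b_\alpha)=H'(b_\beta)=0$, verification of the obstacle inequalities by integrating in $c$ from the $\mathbf{NR}$ boundary outward with the sign conditions of the double obstacle problem, the uniform bound $|y\,J_y(y,c)|\le C\,y^{1-1/\gamma}$ combined with finiteness of log-normal moments for the martingale property, and the splitting of $|J|$ into a $y^{1-1/\gamma}$ part and a $|u(c)|$ part for the transversality limit. The one step you leave open --- upgrading $\mathbb{E}\int_0^\infty e^{-\delta t}|u(c_t)|\,dt<\infty$ to $\lim_{t\to\infty}\mathbb{E}[e^{-\delta t}|u(c_t)|]=0$ --- is precisely the step the paper itself asserts without further argument (integrability in $t$ of a nonnegative function does not by itself force decay), so you have correctly identified, rather than missed, the only delicate point in this proof.
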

\begin{proof}~

\noindent \textbf{(Proof of 1.)}

    First, with reference to the construction of the dual value function $J(y,c)$, we will show that $J$ is a continuously differentiable if we prove that $J_y,  J_{yy}$, and $J_{cc}$ are continuous along the free boundaries $c=I(\frac{y}{b_{\alpha}})$ and $c=I(\frac{y}{b_{\beta}})$.

    Then, we can compute
    \begin{eqnarray}
    \begin{split}
    J_y(y,c)=&J_y(y,I(\dfrac{y}{b_{\alpha}}))+\left(J_c(y,I(\dfrac{y}{b_{\alpha}}))-\alpha u'(I(\dfrac{y}{b_{\alpha}}))\right)\dfrac{d}{d y}\left(I(\dfrac{y}{b_{\alpha}})\right)\;\;\;\mbox{for}\;\;y\le u'(c)b_{\alpha}\\
    =&J_y(y,I(\dfrac{y}{b_{\alpha}})),
    \end{split}
    \end{eqnarray}
    and
    \begin{eqnarray}
    \begin{split}
    J_y(y,c)=&J_y(y,I(\dfrac{y}{b_{\beta}}))-\left(-J_c(y,I(\dfrac{y}{b_{\beta}})-\beta u'(I(\dfrac{y}{b_{\alpha}}))\right)\dfrac{d}{d y}\left(I(\dfrac{y}{b_{\beta}})\right)\;\;\;\mbox{for}\;\;y\ge u'(c)b_{\beta}\\
    =&J_y(y,I(\dfrac{y}{b_{\beta}})).
    \end{split}
    \end{eqnarray}
    Thus, $J_y(y,c)$ is continuous along the free boundaries.

    Similarly, we can obtain
    \begin{eqnarray}
    \begin{split}
    J_{yy}(y,c)=&J_{yy}(y,I(\dfrac{y}{b_{\alpha}})),\;\;\;\mbox{for}\;\;y\le u'(c)b_{\alpha},\\
    J_{yy}(y,c)=&J_{yy}(y,I(\dfrac{y}{b_{\beta}})),\;\;\;\mbox{for}\;\;y\le u'(c)b_{\beta},\\
    \end{split}
    \end{eqnarray}
    and hence $J_{yy}$ is continuous along the free boundaries.

    We know that $J_c(y,c)=u'(c)H(y/u'(c))$ and $H(z)$ is $\mathcal{C}^{1}$-function. Thus, it is clear that $J_{cc}(y,c)$ is continuous function and we conclude that $J(y,c)$ is $\mathcal{C}^{2}$-function.

    Next, we will show that $J(y,c)$ satisfies the HJB-equation \eqref{eq:HJB_dual_value}.
    \begin{itemize}
        \item The region {\bf NR}:\\
        Since $J_c(y,c)=u'(c)H(y/u'(c))$,
        \begin{eqnarray*}
            \begin{split}
                {\bf NR}=&\{(y,c)\in \mathcal{R}\mid -\beta u'(c)<J_c(y,c)<\alpha u'(c)\}\\
                =&\{(y,c)\in \mathcal{R} \mid -\beta <H(\frac{y}{u'(c)})<\alpha \}\\
                =&\{(y,c)\in \mathcal{R} \mid b_{\alpha} <\frac{y}{u'(c)}<b_{\beta} \}.\\
            \end{split}
        \end{eqnarray*}
        Also, we can easily confirm that
        $$
        \mathcal{L}J+ u(c)-yc=0.
        $$
        \item The region {\bf IR}:\\
        We deduce that
        \begin{eqnarray*}
            \begin{split}
                {\bf IR}&=\{(y,c)\in\mathcal{R} \mid J_c(y,c)=\alpha u'(c)\}\\
                &=\{(y,c)\in\mathcal{R} \mid \frac{y}{u'(c)}\le b_{\alpha} \}.\\
            \end{split}
        \end{eqnarray*}
        Clearly,
        $$
        -J_c(y,c)-\beta u'(c)=-(\alpha+\beta)u'(c)<0.
        $$
        Since $J_y(y,c)=J_y(y,I(\frac{y}{b_{\alpha}}))$ and  $J_{yy}(y,c)=J_{yy}(y,I(\frac{y}{b_{\alpha}}))$ on {\bf IR},
        \begin{eqnarray}
        \begin{split}
        &\mathcal{L}J(y,c)+u(c)-yc\\
        =&\underbrace{\left(\mathcal{L}J(y,I(\frac{y}{b_{\alpha}}))+u(I(\frac{y}{b_{\alpha}}))-yI(\frac{y}{b_{\alpha}})\right)}_{=0}+\delta J(y,I(\frac{y}{b_{\alpha}}))-\delta J(y,c)+u(c)-yc\\ -&(u(I(\frac{y}{b_{\alpha}}))-yI(\frac{y}{b_{\alpha}}))\\
        =&\int_{c}^{I(\frac{y}{b_{\alpha}})}\left(\delta J_c(y,\eta)-(u'(\eta)-y)\right)d\eta\\
        =&\int_{c}^{I(\frac{y}{b_{\alpha}})}u'(\eta)\left(\frac{y}{u'(\eta)}-(1-\delta\alpha)\right)d\eta
        \le 0 \;\;\;\left(\because \dfrac{y}{u'(\eta)}<b_{\alpha} <1-\delta \alpha \;\;\;\mbox{on {\bf IR}}\right).
        \end{split}
        \end{eqnarray}
        \item The region {\bf DR}:\\
        Similarly,
        \begin{eqnarray*}
            \begin{split}
                {\bf DR}&=\{(y,c)\in\mathcal{R} \mid J_c(y,c)=\beta u'(c)\}\\
                &=\{(y,c)\in\mathcal{R} \mid \frac{y}{u'(c)}\ge b_{\beta} \}.\\
            \end{split}
        \end{eqnarray*}
        and
        $$
        J_c(y,c)-\alpha u'(c)=-(\alpha+\beta)u'(c)<0,
        $$
        Since $J_y(y,c)=J_y(y,I(\frac{y}{b_{\beta}}))$ and  $J_{yy}(y,c)=J_{yy}(y,I(\frac{y}{b_{\beta}}))$ on {\bf IR},
        \begin{eqnarray}
        \begin{split}
        &\mathcal{L}J(y,c)+u(c)-yc\\
        =&\underbrace{\left(\mathcal{L}J(y,I(\frac{y}{b_{\beta}}))+u(I(\frac{y}{b_{\beta}}))-yI(\frac{y}{b_{\beta}})\right)}_{=0}+\delta J(y,I(\frac{y}{b_{\alpha}}))\\-&\delta J(y,c)+u(c)-yc -(u(I(\frac{y}{b_{\alpha}}))-yI(\frac{y}{b_{\alpha}}))\\
        =&-\int_{I(\frac{y}{b_{\beta}})}^{c}\left(\delta J_c(y,\eta)-(u'(\eta)-y)\right)d\eta\\
        =&-\int_{I(\frac{y}{b_{\beta}})}^{c}u'(\eta)\left(\frac{y}{u'(\eta)}-(1+\delta\beta)\right)d\eta
        \le 0 \;\;\;\left(\because \dfrac{y}{u'(\eta)}>b_{\beta} >1+\delta \beta \;\;\;\mbox{on {\bf DR}}\right).
        \end{split}
        \end{eqnarray}
        Thus, $J(y,c)$ satisfies the HJB-equation
        $$
        \max\{\mathcal{L}J+u(c)-yc, J_c-\alpha u'(c), -J_c -\beta u'(c)\}=0.
        $$
    \end{itemize}

\noindent \textbf{(Proof of 2.)}

    Let
    $$
    N_t = \int_{0}^{t}e^{-\delta s}(-\t y_s)J_y(y_s,c_s)dB_s.
    $$
    To show the process $N_t$ is a martingale, it is suffice to prove that
    $$
    \mathbb{E}\left[\int_{0}^{t}\left(e^{-\delta s}(-\t y_s)J_y(y_s,c_s)\right)^2dt\right]<\infty,\qquad \mbox{for}\;\forall t\ge 0.
    $$
    (see Chapter 3 in \citet{OS})

    First, we consider the case when $(y_t,c_t)\in{\bf NR}$.
    Then,
    $$
    I(\frac{y_t}{b_{\alpha}})< c_t < I(\frac{y_t}{b_{\beta}})\;\;\mbox{or}\;\;b_{\alpha}<\left(\dfrac{y_t}{u'(c_t)}\right)<b_{\beta}.
    $$

    Since
    $$
    yJ_y(y,c)=\dfrac{D_1m_1 yc}{(1-\gamma+\gamma m_1)b_{\alpha}}\left(\dfrac{y}{u'(c)b_{\alpha}}\right)^{m_1-1}+\dfrac{D_2m_2 yc}{(1-\gamma+\gamma m_2)b_{\alpha}}\left(\dfrac{y}{u'(c)b_{\alpha}}\right)^{m_2-1}-\frac{yc}{r},
    $$
    there exist constants $K_{11}, K_{12}>0$ such that
    \begin{eqnarray}\label{eq:estimate_1}
    \begin{split}
    \left|y_tJ_y(y_t, c_t) \right|&\le K_{11} y_t c_t \le K_{12} (y_t)^{-\frac{1-\gamma}{\gamma}}.
    \end{split}
    \end{eqnarray}
    When $(y_t,c_t)\in{\bf IR}$, we know that
    \begin{eqnarray*}
        \begin{split}
            J_y(y_t, c_t) = &J_y(y_t, I(\frac{y_t}{b_{\alpha}})),\;\;\;\mbox{if}\;\;(y_t,c_t)\in{\bf IR}.
        \end{split}
    \end{eqnarray*}
    In this case,
    $$yJ_y(y,I(\frac{y}{b_{\alpha}}))=\dfrac{D_1m_1 yI(\frac{y}{b_{\alpha}})}{(1-\gamma+\gamma m_1)b_{\alpha}}+\dfrac{D_2m_2 yI(\frac{y}{b_{\alpha}})}{(1-\gamma+\gamma m_2)b_{\alpha}}-\frac{yI(\frac{y}{b_{\alpha}})}{r},
    $$
    Thus, there exist constants $K_{21}, K_{22}>0$ such that
    \begin{eqnarray}\label{eq:estimate_2}
    \begin{split}
    \left|y_tJ_y(y_t, c_t) \right|&\le K_{21} y_t I(\frac{y_t}{b_{\alpha}}) \le K_{22} (y_t)^{-\frac{1-\gamma}{\gamma}}.
    \end{split}
    \end{eqnarray}
    Similarly, when $(y_t,c_t)\in{\bf DR}$, there exist constants $K_{31},K_{32}>0$ such that
    \begin{eqnarray}
    \begin{split}\label{eq:estimate_3}
    \left|y_tJ_y(y_t, c_t) \right|&\le K_{31} y_t I(\frac{y_t}{b_{\beta}}) \le K_{32} (y_t)^{-\frac{1-\gamma}{\gamma}}.
    \end{split}
    \end{eqnarray}
    By \eqref{eq:estimate_1}, \eqref{eq:estimate_2} and \eqref{eq:estimate_3}, for any $(y_t,c_t)\in \mathcal{R}$,
    $$
    \left|y_tJ_y(y_t, c_t) \right| \bar K_4 (y_t)^{-\frac{1-\gamma}{\gamma}}.
    $$
    for some constant $K_4>0$.

    Hence
    \begin{eqnarray}
    \begin{split}
    \mathbb{E}\left[\int_{0}^{t}\left(e^{-\delta s}(-\t y_s)J_y(y_s,c_s)\right)^2dt\right] \le K_4\mathbb{E}\left[e^{-\delta t}(y_t^{-\frac{1-\gamma}{\gamma}})^2\right]
    =  K_4 \int_{0}^{t} e^{-(K-\frac{\t^2}{2}(\frac{\gamma-1}{\gamma})^2)s}ds <\infty.
    \end{split}
    \end{eqnarray}
    This implies that $K_t$ is a martingale for $t\ge 0$. \\

\noindent \textbf{(Proof of 3.)}

    If $(y_t, c_t)\in{\bf NR}$, then
    $$
    b_{\alpha} <\left(\dfrac{y_t}{u'(c_t)}\right)<b_{\beta}.
    $$
    Since,
    \begin{eqnarray}
    \begin{split}
    |J(y,c)|=&\left|D_1 \dfrac{yc}{(1-\gamma+\gamma m_1)b_{\alpha}}\left(\dfrac{y}{c^{-\gamma}b_{\alpha}}\right)^{m_1-1} +D_2 \dfrac{yc}{(1-\gamma+\gamma m_2)b_{\beta}}\left(\dfrac{y}{c^{-\gamma}b_{\beta}}\right)^{m_2-1} +\dfrac{1}{\delta}\dfrac{c^{1-\gamma}}{1-\gamma}-\dfrac{yc}{r}\right|
    \end{split}
    \end{eqnarray}
    there exists a constant $K_{51}>0$ such that
    \begin{equation}\label{eq:estimante_J1}
    |J(y_t,c_t)|\le K_{51}(y_t)^{-\frac{1-\gamma}{\gamma}}.
    \end{equation}
    When $(y_t,c_t)\in{\bf IR}$, we know that
    \begin{equation}
    J(y_t,c_t)=J\left(y_t,I(\dfrac{y_t}{b_{\alpha}})\right)+\alpha\left(u(c_t)-u(I(\dfrac{y_t}{b_{\alpha}}))\right).
    \end{equation}
    Thus, there exits a constant $K_{52}$ such that
    \begin{equation}\label{eq:estimante_J2}
    |J(y_t,c_t)|\le K_{52}(y_t)^{-\frac{1-\gamma}{\gamma}}+\alpha|u(c_t)|.
    \end{equation}
    Similarly, for $(y_t,c_t)\in {\bf DR}$, there exists a constant $K_{53}$ such that
    \begin{equation}\label{eq:estimante_J3}
    |J(y_t,c_t)|\le K_{53}(y_t)^{-\frac{1-\gamma}{\gamma}}+\beta|u(c_t)|.
    \end{equation}
    By \eqref{eq:estimante_J1}, \eqref{eq:estimante_J2} and \eqref{eq:estimante_J3}, for any $(y_t,c_t)\in\mathcal{R}$, there exist a constant $K_{54}>0$ such that
    \begin{eqnarray}\label{eq:estimante_J4}
    |J(y_t,c_t)| \le K_{54}\left((y_t)^{-\frac{1-\gamma}{\gamma}}+ |u(c_t)|\right).
    \end{eqnarray}
    By the admissibility of the consumption strategy,
    $$
    \mathbb{E}\left[\int_{0}^{\infty}e^{-\delta t}|u(c_t)|dt\right]<+\infty.
    $$
    Since $\{c_t\}_{t=0}^{\infty}$ is a finite variation process, its sample paths can have at most countable discontinuities. Hence, applying Fubini's theorem, we deduce that
    $$
    \int_{0}^{\infty}\mathbb{E}\left[e^{-\delta t}|u(c_t)|\right]dt=\mathbb{E}\left[\int_{0}^{\infty}e^{-\delta t}|u(c_t)|dt\right]<+\infty.
    $$
    and
    $$
    \lim_{t\to \infty}\mathbb{E}\left[e^{-\delta t}|u(c_t)|\right]=0.
    $$
    From \eqref{eq:estimante_J4},
    \begin{eqnarray}
    \begin{split}
    \lim_{t \to \infty}\mathbb{E}\left[e^{-\delta t}|J(y_t,c_t)|\right]\le&K_{54}\left(\lim_{t \to \infty}\mathbb{E}\left[e^{-\delta t}(y_t)^{-\frac{1-\gamma}{\gamma}}\right]+\lim_{t \to \infty}\mathbb{E}\left[e^{-\delta t}|u(c_t)|dt\right]\right)=0.
    \end{split}
    \end{eqnarray}
    Thus, we can conclude that for any admissible consumption strategy $(c^+,c^{-})$ and its associated consumption process $c$,
    $$
    \lim_{t \to \infty}\mathbb{E}\left[e^{-\delta t}J(y_t,c_t)\right]=0.
    $$

\end{proof}

\section{Proof of Theorem \ref{thm:duality}}\label{sec:Append:B}

 We will show that the duality relationship in the following steps:\\

\noindent{\bf (Step 1)} First, we will prove that the dual value function $J(y,c)$ is strictly convex in $y$:\\

By direct computation,
\begin{eqnarray}
\begin{split}
y\dfrac{\partial^2 J}{\partial y^2}=&c\left(\dfrac{D_1 m_1(m_1-1)}{(1-\gamma+\gamma m_1)b_{\alpha}}\left(\dfrac{y}{ {c}^{-\gamma} b_{\alpha}}\right)^{m_1-1}+\dfrac{D_2 m_2(m_2-1)}{(1-\gamma+\gamma m_2)b_{\alpha}}\left(\dfrac{y}{ {c}^{-\gamma} b_{\alpha}}\right)^{m_2-1}\right).
\end{split}
\end{eqnarray}
Since
$$
D_1>0,\;D_2<0,\;1-\gamma+\gamma m_1>0,\;\mbox{and}\;1-\gamma +\gamma m_2 <0,
$$
we deuce that
$$
\dfrac{\partial^2 J}{\partial y^2} > 0\;\;\mbox{for}\;\;y>0,
$$
and thus $J(y,c)$ is strictly convex in $y$.

Let us denote the Lagrangian {\bf L} defined in \eqref{eq:Lagrangian} by ${\bf L}(y,c)$ for the Lagrangian multiplier $y$ and consumption profile $c$.
\\

\noindent{\bf (Step 2)} We will show that there exist a unique solution $y^*>0$ such that $y^*$ and the optimal consumption $\{c_t^*\}_{t=0}^\infty$ maximize the Lagrangian ${\bf L}(y,c)$.

From Proposition \ref{pro:solution_dual}, we deduce that
\begin{footnotesize}
	\begin{eqnarray}
	\begin{split}
	\dfrac{\partial J}{\partial y}(y,c)=
	\begin{cases}
	&\dfrac{c}{r}-\left(\dfrac{D_1 m_1c}{(1-\gamma+\gamma m_1)b_{\alpha}}\left(\dfrac{y}{c^{-\gamma}b_{\alpha}}\right)^{m_1-1} + \dfrac{D_2 m_2c}{(1-\gamma+\gamma m_2)b_{\alpha}}\left(\dfrac{y}{c^{-\gamma}b_{\alpha}}\right)^{m_2-1}\right),\\ \\ &\qquad\qquad\qquad\qquad\;\;\;\;\;\mbox{for}\;\;(y,c)\in {\bf NR},\\ \\
	\vspace{2mm}
	&\dfrac{\partial J}{\partial y}\left(y,I(\dfrac{y}{b_{\alpha}})\right),\qquad\;\mbox{for}\;\;(y,c)\in{\bf IR},\\ \\
	\vspace{2mm}
	&\dfrac{\partial J}{\partial y}\left(y,I(\dfrac{y}{b_{\beta}})\right),\qquad\;\mbox{for}\;\;(y,c)\in{\bf DR}.
	\end{cases}
	\end{split}
	\end{eqnarray}
\end{footnotesize}
For a sufficiently small $y>0$,
$$
(y,c)\in {\bf IR},
$$
and for a sufficiently large $y>0$,
$$
(y,c)\in {\bf DR}.
$$
This implies that
\begin{eqnarray}
\begin{split}
&\lim_{y\to 0} \dfrac{\partial J}{\partial y}(y,c)=\lim_{y\to 0}\dfrac{\partial J}{\partial y}\left(y,I(\dfrac{y}{b_{\alpha}})\right) = +\infty, \\
&\lim_{y\to \infty} \dfrac{\partial J}{\partial y}(y,c)=\lim_{y\to \infty}\dfrac{\partial J}{\partial y}\left(y,I(\dfrac{y}{b_{\beta}})\right) = 0.
\end{split}
\end{eqnarray}
Since $J(y,c)$ is strictly convex in $y$, for given $X>0$, there exists a unique $y^*$ such that
\begin{eqnarray}
X  = -\dfrac{\partial J}{\partial y}(y^*,c).
\end{eqnarray}
Thus,  there exist optimal consumption strategy $({c}^{*,+}, {c}^{*,-})$ such that
\begin{eqnarray}
\begin{split}
J(y^*,c)=\mathbb{E}\left[\int_{0}^{\infty}e^{-\delta t}\left(h(y_t^*,{c}_t^*)dt-\alpha u'({c}_t^*)d {c}_t^{*,+}-\beta u'({c}_t^*)d{c}_t^{*,-}\right)\right],
\end{split}
\end{eqnarray}
where $y_t^*=y^* e^{\beta t}H_t$, $({c}^{*,+}, {c}^{*,-})\in \Pi(c)$ and ${c}_t^* = c+{c}_t^{*,+} -{c}_t^{*,-}$.

This means that since the Lagrangian is concave, $y^*$ and $c^*$ are maximizers of the Lagrangian.\\

\noindent{\bf (Step 3)} $c^*$ satisfies the budget constraint with equality. \\

Define $y^{+h}=y^*+h$  and $y^{-h}=y^*-h$ with $y^*\geq h>0$ (For convenience of notation we drop the time subscript $t$). Then,
$$
\mathcal{L}(c^*,y^{\pm h})\geq\mathcal{L}(c^*,y^*).
$$

{Since $c^*,y^*$ maximizers of the Lagrangian ${\bf L}$, we have
	$$
	\limsup_{h \downarrow 0}  \frac{\mathcal{L}(c^*,y^{+h})-\mathcal{L}(c^*,y^*)}{h}\le 0, \,\, \liminf_{h \uparrow 0}  \frac{\mathcal{L}(c^*,y^{-h})-\mathcal{L}(c^*,y^*)}{h}\le 0,
	$$
    and thus we deduce
	\begin{align*}\pm\left(X- \mathbb{E}\left[\int_0^\infty H_t c_t^*dt\right]\right)\leq 0.
	\end{align*}
}

This leads to
\begin{equation*}\label{budget1}
X=\mathbb{E}\left[\int_0^\infty H_t c_t^*dt \right]
\end{equation*}

This implies that  $c^*$ satisfies the budget constraint with equality.\\

\noindent {\bf (Step 4)} {$c^*$ is optimal consumption.}
\bigskip

Let $(c^+,c^-) \in \Pi(c)$ be a feasible consumption strategy, i.e., it is admissible and satisfies the budget constraint.

Since $c$ satisfies the budget constraint,
\begin{eqnarray}
\begin{split}
&\mathbb{E}\left[\int_{0}^{\infty}e^{-\delta t}\left(u(c_t)dt-\alpha u'(c_t)dc_t^{+}-\beta u'(c_t)dc_t^-\right)\right]\\
\leq&\mathbb{E}\left[\int_{0}^{\infty}e^{-\delta t}\left(u(c_t)dt-\alpha u'(c_t)dc_t^{+}-\beta u'(c_t)dc_t^-\right)\right]+y^*\left(X-\mathbb{E}\left[\int_{0}^{\infty}H_t c_t dt \right]\right)\\
\end{split}
\end{eqnarray}
where $y^*$ is defined in {\bf (Step 2)}.

Since  $y^*$ and $c^*$ maximize the Lagrangian ${\bf L}$ and $c^*$ satisfies the budget constraint with equality,
\begin{eqnarray}
\begin{split}
&\mathbb{E}\left[\int_{0}^{\infty}e^{-\delta t}\left(u(c_t)dt-\alpha u'(c_t)dc_t^{+}-\beta u'(c_t)dc_t^-\right)\right]\\
\leq&\mathbb{E}\left[\int_{0}^{\infty}e^{-\delta t}\left(u(c_t)dt-\alpha u'(c_t)dc_t^{+}-\beta u'(c_t)dc_t^-\right)\right]+y^*\left(X-\mathbb{E}\left[\int_{0}^{\infty}H_t c_t dt \right]\right)\\
\leq&\mathbb{E}\left[\int_{0}^{\infty}e^{-\delta t}\left(u(c_t^*)dt-\alpha u'(c_t^*)dc_t^{*,+}-\beta u'(c_t^*)dc_t^{*,-}\right)\right]+y^*\left(X-\mathbb{E}\left[\int_{0}^{\infty}H_t c_t^* dt \right]\right)\\
=&\mathbb{E}\left[\int_{0}^{\infty}e^{-\delta t}\left(u(c_t^*)dt-\alpha u'(c_t^*)dc_t^{*,+}-\beta u'(c_t^*)dc_t^{*,-}\right)\right].
\end{split}
\end{eqnarray}
Therefore,
$(c^*_t)_{t=0}^{\infty}$ is optimal.
\bigskip

\noindent{\bf Step 5)} Proof of duality-relationship in \eqref{eq:dualityrelationship}
\bigskip

Since $c^*$ is optimal consumption, for $y>0$, we deduce
\begin{align*}
V(X,c) &= \mathbb{E}\left[\int_{0}^{\infty}e^{-\delta t}\left(u(c_t^*)dt-\alpha u'(c_t^*)dc_t^{*,+}-\beta u'(c_t^*)dc_t^{*,-}\right)\right] \\
&  = \mathbb{E}\left[\int_{0}^{\infty}e^{-\delta t}\left(u(c_t^*)dt-\alpha u'(c_t^*)dc_t^{*,+}-\beta u'(c_t^*)dc_t^{*,-}\right)\right]+y\left( X-\mathbb{E}\left[\int_{0}^{\infty}H_t c^*_t dt \right]\right)\nonumber\\
&\leq  \sup_{(c^+,c^-)\in \Pi(c)} \mathbb{E} \left[\int_0^\infty e^{-\delta t} u(c_t)dt\right]+y\left( X-\mathbb{E}\left[\int_{0}^{\infty}H_t c_t dt \right]\right)\nonumber\\
&= J(y,c)+yX,
\end{align*}
where $(c^t)_{t=}^{\infty}$ is the optimal consumption process for Problem \ref{pr:dual_problem} for $y>0$. This implies that
$$
V(X,c)\leq \inf_{y>0} \Big(J(y,c)+yX\Big).
$$
However, we know that
\begin{align*}
V(X,c)&=  \mathbb{E}\left[\int_{0}^{\infty}e^{-\delta t}\left(u(c_t^*)dt-\alpha u'(c_t^*)dc_t^{*,+}-\beta u'(c_t^*)dc_t^{*,-}\right)\right] \\
&=J(y^*,c)+y^* X.
\end{align*}

Thus,
$$
V(X,c)= \min_{y>0} \Big(J(y,c)+yX\Big).
$$
This completes the proof.

\section{Proof of Proposition \ref{pro:consumption}}\label{sec:Append:C}

First, we show that the optimal consumption strategy $(c^{*,+}, c^{*,-})$ given in \eqref{eq:optimal_consumption} is admissible.
We can see that for optimal consumption strategy $(c^{*,+},c^{*,-})$ and its associated consumption $c^{*}$,
\begin{eqnarray}\begin{split}\label{eq:estimate_4}
b_{\alpha} \le \dfrac{y_t}{(c_t^{*})^{-\gamma}} \le b_{\beta}\;\;\;\mbox{for}\;\forall\;t\ge 0.
\end{split}\end{eqnarray}
Then, there exist constants $K_{61},K_{62}>0$ such that
\begin{eqnarray}\label{eq:estimate_5}
|u(c_t^*)|\le K_{61}(y_t)^{-\frac{1-\gamma}{\gamma}}\;\;\;\mbox{and}\;\;\;y_t c_t^* \le K_{62} y_t^{-\frac{1-\gamma}{\gamma}},
\end{eqnarray}
for all $t \ge 0$.

Thus,
$$
\mathbb{E}\left[\int_{0}^{\infty}e^{-\delta t}|u(c_t^*)|dt\right] \le K_{61}\mathbb{E}\left[\int_{0}^{\infty}e^{-\delta t}(y_t)^{-\frac{1-\gamma}{\gamma}}dt\right]<+\infty.
$$
Since $(c_t^{*,+},c_t^{*,-})$ is the optimal strategy, it is clear that
$$
\mathbb{E}\left[\int_{0}^\infty e^{-\delta t}\left(\alpha u'(c_t^*)dc_{t}^{*,+}+\beta u'(c_t^*)dc_{t}^{*,-}\right)\right]<+\infty.
$$
Above two inequalities imply $(c_t^{*,+},c_t^{*,-})$ is admissible consumption strategy.

Moreover, by \eqref{eq:estimate_4} and \eqref{eq:estimate_5}, there exists a constant $K_{64}>0$ such that
\begin{eqnarray}
\begin{split}
|J(y_t,c_t^{*})|=&\left|D_1 \dfrac{y_t c_t^{*}}{(1-\gamma+\gamma m_1)b_{\alpha}}\left(\dfrac{y_t}{(c_t^{*})^{-\gamma}b_{\alpha}}\right)^{m_1-1} +D_2 \dfrac{y_tc_t^{*}}{(1-\gamma+\gamma m_2)b_{\beta}}\left(\dfrac{y_t}{(c_t^{*})^{-\gamma}b_{\beta}}\right)^{m_2-1} \right.\\+&\left.\dfrac{1}{\delta}\dfrac{(c_t^{*})^{1-\gamma}}{1-\gamma}-\dfrac{y_tc_t^{*}}{r}\right|\\
\le& K_{64} (y_t)^{-\frac{1-\gamma}{\gamma}}
\end{split}
\end{eqnarray}
This implies
\begin{eqnarray}
\begin{split}
\lim_{t \to \infty}\mathbb{E}\left[e^{-\delta t}|J(y_t,c_t^{*}) |\right]\le& K_{64} \lim_{t \to \infty}\mathbb{E} \mathbb{E}\left[e^{-\delta t}(y_t)^{-\frac{1-\gamma}{\gamma}}\right]\\
=&K_{64} \lim_{t \to \infty} e^{-Kt} =0.
\end{split}
\end{eqnarray}

From the construction of the optimal consumption strategy, it is easy to check that the consumption strategy $(c^{*,+},c^{*,-})$ given in \eqref{eq:optimal_consumption} satisfies the following assumption in Theorem \ref{thm:verification}:
$$
(y_t,c^{*}_t)\in\left\{(y,c)\in\mathcal{R}: \mathcal{L}J(y,c)+h(y,c)=0\right\},
$$
Lebesgue-a.e., $\mathbb{P}$-a.s.,
\begin{eqnarray}
\begin{split}
&\int_{0}^{t}e^{-\delta s}\left(J_c(y_s,c_s^*)-\alpha u'(c_s^*)\right)dc_{s}^{*,+}=0,\;\;\;\mbox{for all}\;t\ge 0,\;\mathbb{P}-a.s.,\\
&\int_{0}^{t}e^{-\delta s}\left(-J_c(y_s,c_s^*)-\beta u'(c_s^*)\right)dc_{s}^{*,-}=0,\;\;\;\mbox{for all}\;t\ge 0,\;\mathbb{P}-a.s.
\end{split}
\end{eqnarray}

\section{Proof of Theorem \ref{thm:wealth}}\label{sec:Append:D}

From Theorem \ref{thm:duality}, we know that there exists a unique solution $y^*$ for the minimization problem \eqref{eq:dualityrelationship}. The first-order condition implies that
\begin{eqnarray}
\begin{split}
X=&-\dfrac{\partial J}{\partial y}(y^*,c)\\
=&\dfrac{c}{r}-\left(\dfrac{D_1 m_1c}{(1-\gamma+\gamma m_1)b_{\alpha}}\left(\dfrac{y}{c^{-\gamma}b_{\alpha}}\right)^{m_1-1} + \dfrac{D_2 m_2c}{(1-\gamma+\gamma m_2)b_{\alpha}}\left(\dfrac{y}{c^{-\gamma}b_{\alpha}}\right)^{m_2-1}\right).
\end{split}
\end{eqnarray}
Since Problem \ref{pr:dual_problem} is time-consistent, $y_s^*=y^*e^{\delta s}H_s$ is the minimizer for the duality relationship starting at $s\ge 0$. Thus, for optimal wealth $X_s^*$ at time $s$, we have
\begin{eqnarray}
\begin{split}
X_s^*= \dfrac{c^*_s}{r}-c_s^*\left(\dfrac{D_1
	m_1}{(1-\gamma+\gamma m_1)b_{\alpha}}\left(\dfrac{y_s^*}{
	{(c_s^{*})}^{-\gamma} b_{\alpha}}\right)^{m_1-1}+\dfrac{D_2
	m_2}{(1-\gamma+\gamma m_2)b_{\alpha}}\left(\dfrac{y_s^*}{
	{(c_s^{*})}^{-\gamma} b_{\alpha}}\right)^{m_2-1}\right).
\end{split}
\end{eqnarray}
During the time in which $y_t^*$ is inside the {\bf NR}, the optimal consumption $c^*$ is constant, i.e., $c_t^*=c_s^*$ and thus the optimal wealth $X_t^*$ for $t \ge s$  is given by
\begin{eqnarray}
\begin{split}
X_t^*= \dfrac{c^*_s}{r}-c_s^*\left(\dfrac{D_1
	m_1}{(1-\gamma+\gamma m_1)b_{\alpha}}\left(\dfrac{y_t^*}{
	{(c_s^{*})}^{-\gamma} b_{\alpha}}\right)^{m_1-1}+\dfrac{D_2
	m_2}{(1-\gamma+\gamma m_2)b_{\alpha}}\left(\dfrac{y_t^*}{
	{(c_s^{*})}^{-\gamma} b_{\alpha}}\right)^{m_2-1}\right).
\end{split}
\end{eqnarray}
By Proposition \ref{pro:consumption}, we know that the agent does not increase or decrease his/her consumption in the region
$$
b_{\alpha} < \dfrac{y_t^*}{(c_t^*)^{-\gamma}} < b_{\beta}.
$$

Let us define $\underline{x}$, $\bar{x}$ as follows:
$$
\underline{x}=\mathcal{X}(b_{\beta}),\;\bar{x}=\mathcal{X}(b_{\alpha}),
$$
where $\mathcal{X}(y)$ is
$$
\mathcal{X}(y)=\dfrac{1}{r}-\left(\dfrac{D_1
	m_1}{(1-\gamma+\gamma m_1)b_{\alpha}}\left(\dfrac{y}{
	b_{\alpha}}\right)^{m_1-1}+\dfrac{D_2
	m_2}{(1-\gamma+\gamma m_2)b_{\alpha}}\left(\dfrac{y}{
	b_{\alpha}}\right)^{m_2-1}\right).
$$

Since
$$
\dfrac{\partial \mathcal{X}}{\partial y}(y)=-\left(\dfrac{D_1
	m_1(m_1-1)}{(1-\gamma+\gamma m_1)b_{\alpha}^2}\left(\dfrac{y}{
	b_{\alpha}}\right)^{m_1-2}+\dfrac{D_2
	m_2(m_2-1)}{(1-\gamma+\gamma m_2)b_{\alpha}^2}\left(\dfrac{y}{
	b_{\alpha}}\right)^{m_2-2}\right)
$$
and $D_1>0,\;D_2<0,\;1-\gamma+\gamma m_1>0,\;\mbox{and}\;1-\gamma +\gamma m_2 <0$,  $\mathcal{X}(y)$ is strictly increasing function of $y$.

This means that the consumption stays for $t\ge s$ constant if and only if
$$
c_s^*\underline{x} < X_t^* < c_s^*\bar{x}.
$$
This completes the proof.

\section{Proof of Proposition \ref{pro:portfolio}}\label{sec:Append:E}

\noindent Proof of (a).

By applying the generalized It\'{o}'s lemma(see \citet{Harrison}) to the optimal wealth process $X_t^{*}$,
\begin{eqnarray}
\begin{split}\label{eq:general_ito}
dX_t^*=&-\dfrac{\partial^2 J}{\partial y^2}(y_t^*,c_t^*)dy_t^* -\dfrac{1}{2}\dfrac{\partial^3 J}{\partial y^3}(y_t^*,c_t^*)(dy_t^*)^2-\dfrac{\partial^2 J}{\partial y \partial c}(y_t^*,c_t^*)dc_t^{*,+}\\
&+\dfrac{\partial^2 J}{\partial y \partial c}(y_t^*,c_t^*)dc_t^{*,-}.
\end{split}
\end{eqnarray}
If $(y_t^*,c_t^*)\in {\bf NR}$, the agent does not adjust his/her consumption. This means that $dc_t^{*,+}=dc_t^{*,-}=0$ and thus
$$
\dfrac{\partial^2 J}{\partial y \partial c}(y_t^*,c_t^*)dc_t^{*,+}=\dfrac{\partial^2 J}{\partial y \partial c}(y_t^*,c_t^*)dc_t^{*,-}=0.
$$
If $(y_t^*, c_t^*) \in {\bf IR}$, the agent should increase his/her consumption. This implies that
$$
\dfrac{\partial J}{\partial c}(y_t^*, c_t^*)=\alpha u'(c_t^*),\;\;dc_t^{*,-}=0,
$$
and hence
$$
\dfrac{\partial^2 J}{\partial y \partial c}(y_t^*,c_t^*)dc_t^{*,+}=\dfrac{\partial^2 J}{\partial y \partial c}(y_t^*,c_t^*)dc_t^{*,-}=0.
$$
Similarly, we also obtain
$$
\dfrac{\partial^2 J}{\partial y \partial c}(y_t^*,c_t^*)dc_t^{*,+}=\dfrac{\partial^2 J}{\partial y \partial c}(y_t^*,c_t^*)dc_t^{*,-}=0.
$$
when $(y_t^*, c_t^*) \in {\bf NR}$.

Therefore, by comparing the equation \eqref{eq:general_ito} with the wealth dynamics \eqref{eq:wealth} and using the fact that $dy_t^*=(\delta-r)y_t^*dt -\theta y_t^* dB_t$, we deduce the optimal portfolio policy $\pi_t^*$ as follows:
\begin{eqnarray}
\pi_t^*=\dfrac{\theta}{\sigma}y_t^* \dfrac{\partial^2 J}{\partial y^2}(y_t^*,c_t^*),
\end{eqnarray}
and
\begin{equation} \label{opt-portfolio2}
\pi_t^*=\dfrac{\t }{\sigma}c_s^*\left(\dfrac{D_1
	m_1(m_1-1)}{(1-\gamma+\gamma m_1)b_{\alpha}}\left(\dfrac{y_t^*}{
	{(c_t^{*})}^{-\gamma} b_{\alpha}}\right)^{m_1-1}+\dfrac{D_2
	m_2(m_2-1)}{(1-\gamma+\gamma m_2)b_{\alpha}}\left(\dfrac{y_t^*}{
	{(c_t^{*})}^{-\gamma} b_{\alpha}}\right)^{m_2-1}\right).
\end{equation}
\noindent Proof of (b).

\section{Proof of Theorem \ref{thm:RCRRA}}\label{sec:Appen_F}

For $c_s^* \underline{x}<X_t^*<c_s^* \bar{x}$, for $t\ge s$, the consumption stays constant. Thus, for simplicity, we can assume $c_t^*=1$.

By Theorem \ref{thm:wealth} and Proposition \ref{pro:portfolio}, we deduce that
\begin{eqnarray}
\begin{split}
X_t^*(y_t^*)-\dfrac{\gamma\sigma}{\theta}\pi_t^*(y_t^*)=-H'(y_t^*),
\end{split}
\end{eqnarray}
where $H(\cdot)$ is defined in Proposition \ref{pro:solution_VI_H}.

Then,
\begin{eqnarray}
G(y_t^*)\triangleq \dfrac{\gamma\sigma}{\sigma}\dfrac{\pi_t^*(y_t^*)}{X_t^*(y_t^*)}=1+\dfrac{H'(y_t^*)}{X_t^*(y_t^*)}.
\end{eqnarray}

Since $H'(b_\alpha)=H'(b_\beta)=0$,
$$
G(b_\alpha)=G(b_\beta)=1.
$$

We will show that there exists a unique $\hat{b}\in(b_\alpha,b_\beta)$ such that
$G(\cdot)$ is a strictly decreasing function on $(b_\alpha, \hat{b})$ and strictly decreasing function on $(\hat{b},b_\beta)$.

\begin{eqnarray}
G'(y)=\dfrac{H''(y)X(y)-H'(y)X^{'}(y)}{(X(y))^2}.
\end{eqnarray}
(For convenience of notation we drop the time subscript $t$ and the optimal subscript $*$.)

Let us define the numerator of $G'(y)$ as $\bar{G}(y)$, i.e.,
\begin{eqnarray}
\begin{split}
\bar{G}(y)=H''(y)X(y)-H'(y)X'(y).
\end{split}
\end{eqnarray}
By the proof in Proposition \ref{pro:solution_VI_H}, we know that
$$
H''(b_\alpha)<0,\;H''(b_\beta)>0.
$$
Thus,
$$
\bar{G}(b_\alpha)<0\;\;\mbox{and}\;\;\bar{G}(b_\beta)>0.
$$
Since
\begin{eqnarray*}
\begin{split}
X(y)=&\dfrac{1}{r}-\dfrac{D_1
	m_1}{(1-\gamma+\gamma m_1)b_{\alpha}}\left(\dfrac{y}{
	 b_{\alpha}}\right)^{m_1-1}+\dfrac{D_2
	m_2}{(1-\gamma+\gamma m_2)b_{\alpha}}\left(\dfrac{y}{
	 b_{\alpha}}\right)^{m_2-1},\\
 H'(y)=&\dfrac{D_1m_1}{b_\alpha} \left(\dfrac{y}{b_{\alpha}}\right)^{m_1-1} + \dfrac{D_2m_2}{b_\alpha} \left(\dfrac{y}{b_{\beta}}\right)^{m_2-1}-\dfrac{1}{r},
\end{split}
\end{eqnarray*}
we have
\begin{eqnarray*}
\begin{split}
	X'(y)=&-\dfrac{D_1
		m_1(m_1-1)}{(1-\gamma+\gamma m_1)b_{\alpha}^2}\left(\dfrac{y}{
		b_{\alpha}}\right)^{m_1-2}+\dfrac{D_2
		m_2(m_2-1)}{(1-\gamma+\gamma m_2)b_{\alpha}^2}\left(\dfrac{y}{
		b_{\alpha}}\right)^{m_2-2},\\
	H'(y)=&\dfrac{D_1m_1(m_1-1)}{b_\alpha^2} \left(\dfrac{y}{b_{\alpha}}\right)^{m_1-2} + \dfrac{D_2m_2(m_2-1)}{b_\alpha^2} \left(\dfrac{y}{b_{\beta}}\right)^{m_2-2}.
\end{split}
\end{eqnarray*}
Hence,
\begin{footnotesize}
\begin{eqnarray*}
\begin{split}
\bar{G}(y)=&\dfrac{\gamma}{r}\left(\dfrac{D_1m_1(m_1-1)^2}{(1-\gamma+\gamma m_1)b_\alpha^2}\left(\dfrac{y}{
	b_{\alpha}}\right)^{m_1-2}+\dfrac{D_2m_2(m_2-1)^2}{(1-\gamma+\gamma m_2)b_\alpha^2}\left(\dfrac{y}{
	b_{\alpha}}\right)^{m_2-2}\right)\\&-\dfrac{\gamma D_1D_2 m_1m_2(m_1-m_2)^2}{(1-\gamma+\gamma m_1)(1-\gamma+\gamma m_2)b_\alpha^3}\left(\dfrac{y}{
	b_{\alpha}}\right)^{m_1+m_2-3}\\
=&\dfrac{\gamma y^{m_2-2}}{b_\alpha^2}\left(\dfrac{D_1m_1(m_1-1)^2}{r(1-\gamma+\gamma m_1)}\left(\dfrac{y}{
	b_{\alpha}}\right)^{m_1-m_2}-\dfrac{D_1D_2m_1m_2(m_1-m_2)^2}{(1-\gamma+\gamma m_1)(1-\gamma+\gamma m_2)b_\alpha}\left(\dfrac{y}{
	b_{\alpha}}\right)^{m_1-1}+\dfrac{D_2m_2(m_2-1)^2}{r(1-\gamma+\gamma m_2)}\right)\\
\triangleq&\dfrac{\gamma y^{m_2-2}}{b_\alpha^2}\mathcal{G}(y).
\end{split}
\end{eqnarray*}	
\end{footnotesize}
We know that
$$
m_1>1,\;m_2<0,\;D_1>0,\;D_2<0,\;1-\gamma+\gamma m_1>0\;\;\mbox{and}\;\;1-\gamma+\gamma m_2<0,
$$
thus, $\mathcal{G}(y)$ is a strictly increasing function of $y$.

Since $\bar{G}(b_\alpha)<0,\;\bar{G}(b_\beta)>0$, we deduce that
$$
\mathcal{G}(b_\alpha)<0,\;\mathcal{G}(b_\beta)>0.
$$
Thus, there exists a unique $\hat{b}\in(b_\alpha,b_\beta)$ such that
$$
\mathcal{G}(\hat{b})=0.
$$
This implies that
$$
G'(y) <0,\;\;\mbox{for}\;\;y\in(b_\alpha,\hat{b})\;\;\mbox{and}\;\;G'(y) >0,\;\;\mbox{for}\;\;y\in(\hat{b},b_\beta).
$$
To sum up, we conclude that RCRRA is a strictly increasing for $X\in(c\underline{x},\widehat{X})$ and a strictly decreasing for $X\in(\widehat{X}, c\bar{x})$. Moreover, RCRRA approaches $\gamma$ when $X$ approaches $c\underline{x}$ or $c\bar{x}$.
(Here, $\hat{X}=c\mathcal{X}(\hat{b})$ and $\hat{b}\in(b_\alpha,b_\beta)$ is a unique solution of the following algebraic equation:
\begin{footnotesize}
	\begin{eqnarray}
	\begin{split}
	\mathcal{G}(y)=&\dfrac{D_1m_1(m_1-1)^2}{r(1-\gamma+\gamma m_1)}\left(\dfrac{y}{
		b_{\alpha}}\right)^{m_1-m_2}-\dfrac{D_1D_2m_1m_2(m_1-m_2)^2}{(1-\gamma+\gamma m_1)(1-\gamma+\gamma m_2)b_\alpha}\left(\dfrac{y}{
		b_{\alpha}}\right)^{m_1-1}\\&+\dfrac{D_2m_2(m_2-1)^2}{r(1-\gamma+\gamma m_2)}.
	\end{split}
	\end{eqnarray}
\end{footnotesize}

\end{footnotesize}

\end{document}